\newcommand{\THESIS}{}
\newcolumntype{?}{!{\vrule width 1pt}}
\newcommand{\FD}{Free Discovery}
\newcommand{\PC}{Popularity Complement}
\newcommand{\FD}{FD}
\newcommand{\PC}{PC}
\newtheorem{theorem}{Theorem}
\title{Network Flow Based Post Processing for Sales Diversity}
\author{
        Arda Antikacioglu \\
        Department of Mathematical Sciences\\
        Carnegie Mellon University\\
        aantikac@andrew.cmu.edu\\
            \and
        R Ravi\\
        Tepper School of Business\\
        Carnegie Mellon University\\
	    ravi@cmu.edu\\
}
\date{\today}
\begin{document}
\maketitle

\begin{abstract}
Collaborative filtering is a broad and powerful framework for building recommendation systems that has seen widespread adoption. Over the past decade, the propensity of
such systems for favoring popular products and thus creating echo chambers have been observed. This has given rise to an active area of research that seeks to diversify recommendations generated by such algorithms.\cite{adomavicius2011maximizing,christoffel2015blockbusters,vargas2011rank}.

We address the problem of increasing diversity in recommendation systems that are based on collaborative filtering that use past ratings to predicting a rating quality for potential recommendations. Following our earlier work,~\cite{antikacioglu2015recommendation}, we formulate recommendation system design as a subgraph selection problem from a candidate super-graph of potential recommendations where both diversity and rating quality are explicitly optimized: (1) On the modeling side, we define a new flexible notion of diversity that allows a system designer to prescribe the number of recommendations each item should receive, and smoothly penalizes deviations from this distribution. (2) On the algorithmic side, we show that minimum-cost network flow methods yield fast algorithms in theory and practice for designing recommendation subgraphs that optimize this notion of diversity. (3) On the empirical side, we show the effectiveness of our new model and method to increase diversity while maintaining high rating quality in standard rating data sets from Netflix and MovieLens.
\end{abstract}

\section{Motivation}
Collaborative filtering has long been a favored approach in recommender systems since its recommendations are derived mainly from the record of interactions between users and items. However, a key concern of CF systems is the filter bubble, the idea that recommendation systems that focus solely on accuracy lead to echo chambers that amplify ``rich-get-richer'' effects among the recommended items \cite{celma2008hits,fleder2009blockbuster,pariser2011filter,zhou2010impact}. This problem stems from the way these systems are designed since they can only make confident recommendations on items that have had a lot of engagement, and hence increase their importance. This is the main motivation to diversify the recommendations of such CF systems.

Recent ethical concerns about algorithms have focused on similar issues about algorithmic results inherently being biased~\cite{WeaponsMathDestructionBook,npr-segment}. One approach to counter the status quo, also advocated by Karger~\cite{npr-segment} is to explicitly design algorithms that do not discriminate by designing an appropriate objective function that will increase diversity in CF recommendations.

The importance of diversifying recommendations for the sake of the user arises from their intrinsic appreciation for novelty and serendipity, a view that is supported by psychological studies \cite{tulving1994novelty}. Conversely, research in recommendation systems~\cite{mcnee2006being} has shown that focusing solely on ratings hurts user satisfaction. This has led to a subfield of recommendation systems that focuses on improving diversity for the benefit of the user \cite{vargas2011rank,zhang2008avoiding}.

A third motivation is the business need for diversifying recommendations: long-tail catalogs that are frequent in the internet \cite{brynjolfsson2011goodbye, brynjolfsson2003consumer} as well as media distributors with all-you-can-play business models \cite{goldstein2006profiting} require that the recommendations influence users to consume diverse content by driving more traffic to different portions of the site.

\noindent {\bf Roadmap:} In this paper, we address the non-diverse nature of CF recommendations, the needs of a long-tail business to shape traffic on its own site, and diversifying recommendations for the benefit of the users. We define a notion of diversity conducive to all these needs based on the degree-properties of the graph defined by the recommendations (Sections ~\ref{section:Chapter2ProblemDefinition}, ~\ref{section:Chapter2Contributions}). After reviewing related work (Section ~\ref{section:Chapter2RelatedWork}), we show that the design problem for this notion can be solved efficiently both in theory and practice using network flow techniques (Section ~\ref{section:Chapter2Algorithms}). We validate our method by showing how to adapt standard collaborative filtering algorithms with an efficient post-processing step to optimize for our measure of diversity by sacrificing very little on the recommendation quality on standard data sets (Section ~\ref{section:Chapter2Experiments}).

\section{A New Graph Optimization Problem}
\label{section:Chapter2ProblemDefinition}

We model all the user-item recommendations provided by a CF system as a bipartite graph, and the choice of recommendations actually given to the users as a subgraph selection problem in this graph. The constraints on the number of items that can be recommended to a user put bounds on the out-degree of the user nodes. Following our earlier work~\cite{antikacioglu2015recommendation}, we model the problem of achieving diversity among the items as specifying target in-degree values for each item and then finding a subgraph that satisfies these constraints as closely as possible. We develop the resulting graph optimization problem next.

We start our discussion by reviewing the well known $b$-matching problem on bipartite graphs \cite{schrijver2002combinatorial}. In the $b$-matching problem, an underlying bipartite graph $G=(L,R)$ with edge set $E$ is given, along with a nonnegative weight $g$ on the edges, and two vectors of non-negative integers $(c_1,\ldots, c_l)$ and $(a_1,\ldots, a_r)$ (degree bounds) such that $\sum_{i=1}^l c_i =\sum_{i=1}^r a_i$. The goal is to find a maximum $g$-weight (or minimum $g$-cost) subgraph $H$ where the degree of vertex $u_i\in L$ in $H$ is $c_i$ for every $1\leq i \leq l$ and the degree of vertex $v_j\in R$ in $H$ is $a_j$ for every $1\leq j\leq r$. This problem generalizes the well-known maximum weight perfect matching problem, which can be obtained as a special case if we set all target degrees to 1. Like the perfect matching problem in bipartite graphs, the $b$-matching problem can be solved by a reduction to a network flow model by adding a source with arcs to $L$, a sink with arcs from $R$ and using the degree constraints as capacities on these respective arcs.

Assume that the degree bounds are given. The vector $(c_1,\ldots, c_l)$ will be taken as a vector of hard constraints that must be met exactly, based on display constraints for the users. In other words, we consider a subgraph $H$ to be a \underline{feasible solution} if and only if  $\deg_H^+(u_i) = c_i$ for all $u_j\in L$. The vector $(a_1,\ldots, a_r)$ will be specified by the recommendation system designer to reflect the motivations described above (increase the coverage of items in CF results, increase the novelty to users on average, or shape traffic to some items). However, this target degree bounds may be unattainable (i.e. there is no feasible $b$-matching for these degree bounds). To handle this potential infeasibility, we incorporate them in the objective.
We call the vector $(a_1,\ldots, a_r)$, the \underline{target degree distribution}. We now define the objective for a given feasible solution $H$,

\[ f(H) = \sum_{v_j\in R} |deg_H^-(v_j) - a_j| \]

which is simply the sum of the violations (in both directions) of the degree constraints for $R$. We call this objective the  \underline{discrepancy} between H and the degree distribution $(a_1,\ldots, a_r)$, and we name the problem of meeting the hard constraints $(c_1,\ldots, c_l)$ while minimizing this objective as   ${\text{MIN-DISCREPANCY}(G,\{c_i\}_{i=1}^l,\{a_j\}_{j=1}^r)}.$

The min-discrepancy problem defined above generalizes the $b$-matching problem, and has objective value zero iff there is a feasible $b$-matching for the given degree bounds.
Like the weighted $b$-matching problem mentioned above, we can adapt our problem to the weighted setting where each edge has a real-valued weight. In this setting, the objective to maximize the total weight of the chosen edges, among graphs that have the minimal discrepancy possible from the given targets. 
\section{Post-processing a CF Recommender}
\label{section:Chapter2Contributions}
We now show how we can apply the graph optimization problem defined above to post-processing the results of a CF recommendation system.
As input to a CF system, we have a set of items $I$, a set of users $U$, and list of known ratings given by each user to different subsets of the items.
The CF system outputs a relevance function $rel:U\times I\to [a,b]$ that takes pairs of users and items to a predicted recommendation quality in some interval on the real line. \ifdefined\THESIS(If $g$ can be thought of as a similarity measure, then $\frac{1}{a}(b-rel(u,i))$ can be thought of as a dissimilarity measure.) \fi Without any extra information on the problem domain, CF systems employ user-based filtering, item-based filtering, matrix factorization, or other methods to arrive at these predicted rating qualities. For the rest of this paper, this rating function will be considered to be given as a black-box since its implementation details have no consequence on our model, even though we will experiment with various options in our empirical tests.

\ifdefined\THESIS
\noindent\textbf{A Surplus of Candidate Recommendations.} To generate the graph $G$ that will serve as input to our optimization problem, we choose only the recommendations for which the CF recommender predicts a rating above a certain threshold -- we enforce this by constraining each user's recommendation list to their top-$k$ candidate recommendations. This is a standard approach used in recommendation diversification \cite{adomavicius2012improving,vargas2011rank} and ensures that none of our candidate recommendations are below a certain quality level, thus establishing a quality baseline for our algorithm.

We apply our max-weight min-discrepancy method on this graph with the given predicted ratings as weights and given degree bounds as a post-processing step for the CF system results to increase their diversity. \\
\fi
\textbf{Summary of Contributions.}

\ifdefined\THESIS
\begin{enumerate}
\else
\begin{enumerate}[leftmargin=1em,labelwidth=*,align=left]
\fi
\item Following our earlier work~\cite{antikacioglu2015recommendation}, we model the problem of post-processing recommendations from a CF system to increase diversity as a maximum-weight degree-constrained subgraph selection problem to minimize the discrepancy from a target distribution.

\item We demonstrate that the problem of finding maximum-weight min-discrepancy subgraph can be reduced to the problem of finding minimum cost flows. In particular, this shows that the discrepancy between a recommendation system and {\em any} desired indegree distribution can be minimized in polynomial time. The abundance of fast solvers \cite{kovacs2015minimum} for this problem makes our model not just theoretically interesting, but also practically feasible. Moreover, we prove that aggregate diversity maximization can be implemented special case of the discrepancy minimization problem. This generalizes the work of Adomavicius and Kwon on maximizing aggregate diversity \cite{adomavicius2011maximizing} while simultaneously maximizing recommendation quality, while matching the same asymptotic runtimes.

\item We conduct experiments on standard datasets such as MovieLens-1m, and Netflix Prize data. By feeding our discrepancy minimizer as a post-processing step on the undiversified recommendation networks created by standard collaborative filtering algorithms, we measure the trade-off our algorithm makes between discrepancy and recommendation quality under a variety of parameter settings. We compare against baselines and other diversification approaches, and find that our diversifier makes more relevant recommendations despite achieving higher diversity gains, as measured not only by our discrepancy measure, but also by standard sales diversity metrics such as the Gini index or aggregate diversity.
\end{enumerate} 

\section{Related Work}
\label{section:Chapter2RelatedWork}

First we review related work on various collaborative filtering approaches, and then discuss various extant notions of diversity already considered in the recommender system literature.
\subsection{Collaborative Filtering}
Collaborative filtering is the most versatile and widely accepted way of building recommender systems. The main idea behind collaborative filtering is to exploit the similarities between different users or between different items using user feedback. While there are many different methods for doing this, we constrain our evaluation to three representative approaches.

\ifdefined\THESIS
\begin{enumerate}
\item Matrix factorization approaches assume the existence of $D$ latent features which describe both users and items, and seeks to find two rank $D$ matrices whose product approximates the matrix of all known rankings. The advantage is that $D$ is typically much smaller than the number of users or items. In our work, we experiment with a version of this approach due to Hu \cite{hu2008collaborative}.

\item Another popular approach is neighborhood based recommenders, which can either be user-based or item-based. These approaches define a distance between pairs of users and pairs of items respectively, using measures like cosine similarity or Pearson correlation \cite{desrosiers2011comprehensive}. The user-based approach then predicts the unknown rating from user $u$ to item $i$ by taking a distance weighted linear combination of the ratings of similar users on item $i$. The item-based approach operates similarly, but instead takes a weighted combination of the ratings of user $u$ on items similar to item $i$. We use the implementations of these methods in RankSys \cite{sandoval2015novelty} in our experiments.

\item Finally, we consider a graph based recommender strategy due to Cooper et. al. \cite{cooper2014random}. This method considers a bipartite graph of known user and item interactions, ignoring all rating information. In this graph, a random walk of length 2 from a user $u$ corresponds to the selection of a user similar to $u$, in the sense that both $u$ and any user reachable from $u$ in 2 steps have at least one item as a common interest. Therefore, a random walk of length 3 corresponds to sampling an item liked by a similar user, and recommendations for a user $u$ are ranked according to how many random walks of length 3 starting at that user terminate at a given item. Since this method is both simple to state and implement on small to medium sized datasets, we use our own implementation of this method in our empirical comparisons. While less commonly used than the first two types of recommenders we discussed, this approach is still representative of a large class of recommendation strategies such as UserRank \cite{gao2011userrank}, ItemRank \cite{gori2007itemrank}, or other other random walk based techniques \cite{liu2012solving}.    
\end{enumerate}
\else
Matrix factorization approaches assume the existence of $D$ latent features which describe both users and items, and seeks to find two rank $D$ matrices whose product approximates the matrix of all known rankings. The advantage is that $D$ is typically much smaller than the number of users or items. In our work, we experiment with a version of this approach due to Hu \cite{hu2008collaborative}.

Another popular approach is neighborhood based recommenders, which can either be user-based or item-based. These approaches define a distance between pairs of users and pairs of items respectively, using measures like cosine similarity or Pearson correlation \cite{desrosiers2011comprehensive}. The user-based approach then predicts the unknown rating from user $u$ to item $i$ by taking a distance weighted linear combination of the ratings of similar users on item $i$. The item-based approach operates similarly, but instead takes a weighted combination of the ratings of user $u$ on items similar to item $i$. We use the implementations of these methods in RankSys \cite{sandoval2015novelty} in our experiments.

Finally, we consider a graph based recommender strategy due to Cooper et. al. \cite{cooper2014random}. This method considers a bipartite graph of known user and item interactions, ignoring all rating information. In this graph, a random walk of length 2 from a user $u$ corresponds to the selection of a user similar to $u$, and a random walk of length 3 corresponds to sampling an item liked by a similar user. Recommendations for a user $u$ are ranked according to how many random walks of length 3 starting at that user terminate at a given item. Since this method is both simple to state and implement on small to medium sized datasets, we use our own implementation of this method in our empirical comparisons. While less commonly used than the first two types of recommenders we discussed, this approach is still representative of a large class of recommendation strategies such as UserRank \cite{gao2011userrank}, ItemRank \cite{gori2007itemrank}.
\fi

\subsection{Sales Diversity}
\ifdefined\THESIS
\textbf{User-Focused Diversity:} User novelty has been called intra-list diversity \cite{castells2011novelty}, with the list referring to the list of recommendations made to a particular user. The need for novelty from the user's point of view is a psychological one \cite{tulving1994novelty}. While lack of intra-list diversity was a particularly bad shortcoming of early recommender systems \cite{ali2004tivo}, these problems have since been addressed in many works \cite{vargas2015novelty}. In this work, we do not consider diversity at a user level, and instead take a system level view, which motivated more by business needs than user needs.
\fi

\textbf{The Need for Sales Diversity:} As mentioned above, the need for system-level diversification in recommenders is a business related one. Since the internet enables businesses with low inventory costs, focusing on making more recommendations in the long tail can be an effective retail strategy. This view is most clearly expressed by Anderson, who advocates selling ``selling less of more'' \cite{Anderson2006}. Interestingly, recommender systems rarely capitalize on this opportunity, and often compound the problems observed with popularity bias. Indeed, Zhou et. al. find that YouTube's recommendation module leads to an increase in popularity for the most popular items \cite{zhou2010impact}. Similarly, Celma et. al. report similar findings for music recommendations on Last.fm~\cite{celma2008hits}. Hosanagar and Fleder show that this popularity bias can lead to subpar pairings between users and items, potentially hurting customer satisfaction \cite{fleder2009blockbuster}, and McNee reports that a focus on accuracy alone has hurt the user experience of recommender systems \cite{mcnee2006being}. Since recommendations have an outsized impact on customer behavior \cite{senecal2004influence, adomavicius2011maximizing}, businesses have a need to control the distribution of recommendations that they surface in their recommenders.

\ifdefined\THESIS
\textbf{Metrics for Sales Diversity:}
There are several well-established metrics for measuring sales diversity, and we focus our attention on three. 

\begin{enumerate}
\item The most popular among these is the aggregate diversity, which is the total number of objects that have been recommended to at least one user. Under this name, this measure has been used notably by Adomavicius and Kwon \cite{adomavicius2012improving, adomavicius2011maximizing} and by Castells and Vargas \cite{vargas2015novelty}. It has also been used as a measure of system-wide diversity under the name of coverage \cite{ge2010beyond, adamopoulos2011unexpectedness}. While easy to understand and measure, the aggregate diversity leaves a lot be desired as a measure of distributional equality. In particular, aggregate diversity treats an item which was recommended once as well-covered as an item which was recommended thousands of times. For example, imagine a system that recommends each item in a set of $n$ items twice. This network will have the same aggregate diversity as a network which recommends one of the items $n$ times, and every other item only once, even though this system is much more biased than the first. Moreover, aggregate diversity can be a misleading measure of diversity when the number of users far outnumbers the size of the catalog. Under these circumstances, even very obscure items may get recommended at least once. Therefore, while aggregate diversity is a good baseline, more refined measures are needed to evaluate the equitability of the distribution of recommendations.

\item An example of a more nuanced metric is provided by the Gini index. This measure is most popularly used in economics, as a quantization of wealth or income inequality. The Gini Index can be adapted for the recommendation setting by considering the number of recommendations an item gets as its ``wealth'' in the system. The Gini index defines the most equitable distribution to be the one where every item is recommended an equal number of times. Given the actually realized distribution of recommendations, it aggregates the difference between the number of recommendations the bottom $\text{n}^{th}$ percentile gets in the system and the number of recommendations they would have obtained under the uniform distribution where $n$ ranges from 0 to 100\%. The measure we propose is a particularly good proxy for the Gini index, since both measure a notion of distance from the uniform distribution. Since recommender systems produce distributions even more unequal than the typical wealth distributions within a country, this metric has found widespread acceptance in the recommendation community \cite{shani2011evaluating, ge2010beyond, herlocker2004evaluating, ren2014avoiding}.

\item Finally, we consider the entropy of the distribution of recommendations. Entropy has its roots in physics and information theory, where it is used to measure the amount of information contained in a stochastic process. For every item, we can define a probability of being surfaced by the recommender by counting what fraction of recommendations (made to any user) point to this item. As with the  Gini index, optimal entropy is achieved if and only if the recommendation distribution is uniform. While less common than either aggregate diversity or the Gini index, the entropy of the recommender system has also been used by many researchers \cite{shani2011evaluating, szlavik2011diversity}.
\end{enumerate}
\else
\textbf{Metrics for Sales Diversity:} There are several well-established metrics for measuring sales diversity, and we focus our attention on three. The most popular among these is the aggregate diversity, which is the total number of objects that have been recommended to at least one user. Under this name, this measure has been used notably by Adomavicius and Kwon \cite{adomavicius2012improving, adomavicius2011maximizing} and by Castells and Vargas \cite{vargas2015novelty}. It has also been used as a measure of system-wide diversity under the name of coverage \cite{ge2010beyond, adamopoulos2011unexpectedness}. While easy to understand and measure, the aggregate diversity leaves a lot be desired as a measure of distributional equality. In particular, aggregate diversity treats an item which was recommended once as well-covered as an item which was recommended thousands of times. For example, imagine a system that recommends each item in a set of $n$ items twice. This network will have the same aggregate diversity as a network which recommends one of the items $n$ times, and every other item only once, even though this system is much more biased than the first.

An example of a more nuanced metric is provided by the Gini index. This measure is most popularly used in economics, as a quantization of wealth or income inequality. The Gini Index can be adapted for the recommendation setting by considering the number of recommendations an item gets as its ``wealth'' in the system. The Gini index defines the most equitable distribution to be the one where every item is recommended an equal number of times. Given the actually realized distribution of recommendations, it aggregates the difference between the number of recommendations the bottom $\text{n}^{th}$ percentile gets in the system and the number of recommendations they would have obtained under the uniform distribution where $n$ ranges from 0 to 100\%. The measure we propose is a particularly good proxy for the Gini index, since both measure a notion of distance from the uniform distribution. Since recommender systems produce distributions even more unequal than the typical wealth distributions within a country, this metric has found widespread acceptance in the recommendation community \cite{shani2011evaluating, ge2010beyond, herlocker2004evaluating, ren2014avoiding}.

Finally, we consider the entropy of the distribution of recommendations. Entropy has its roots in physics and information theory, where it is used to measure the amount of information contained in a stochastic process. For every item, we can define a probability of being surfaced by the recommender by counting what fraction of recommendations (made to any user) point to this item. As with the  Gini index, optimal entropy is achieved if and only if the recommendation distribution is uniform. While less common than either aggregate diversity or the Gini index, the entropy of the recommender system has also been used by many researchers \cite{shani2011evaluating, szlavik2011diversity}.
\fi

We measure the diversification performance of our methods and the baselines we test in our experimental section by all three of these metrics - aggregate diversity, Gini index and entropy.

\ifdefined\THESIS
\textbf{Approaches for Increasing Sales Diversity:} Attempts at increasing sales diversity fall into two approaches: optimization and reranking. 

\begin{enumerate}

\item The optimization approach has been taken up most notably by Adomavicius and Kwon \cite{adomavicius2011maximizing}, who consider heuristic and exact algorithms for improving aggregate diversity. Their flow based solution is approximate, while their exact solution to this problem relies on integer programming and has exponential complexity. Our work in this paper subsumes these approaches by giving an exact polynomial algorithm for aggregate diversity maximization. To the best of our knowledge, neither the Gini index nor the entropy of the degree distribution can be optimized in an exact sense.

\item The reranking based approaches are by far the more popular choice in increasing sales diversity. Here, we consider three different approaches by Castells and Vargas, spread across two different papers. The first two approaches model discovery in a recommender system by associating each user-item pair with a binary random variable called $seen$ which represents the event that a user is familiar with the given item, with the assumption that an unseen item is novel to the user. Given a training dataset $\theta$ which maps pairs of users from $U$ and items from $I$ to known rating values, the authors use a maximum likelihood model to estimate the probability that an item is seen by a user. In particular, they set the probability of an item $i$ already being known to a user as the fraction of users who have rated that item

\[ p(seen | i,\theta) = \frac{|\{u \in U | r(u,i) \ne\emptyset \}|}{|U|}\] 

Given these probabilities, the popularity complement similarly defines the novelty of an item $i$ as $\text{nov}_{PC}(i|\theta) = 1- p(seen | i,\theta)$. The free discovery method of measuring novelty similarly defines the novelty of an item $i$ to a user $u$ as $\text{nov}_{FD}(i|\theta) = -\log_2(p(seen | i,\theta))$. After these probabilities are estimated from the training data, the candidate list of recommendations are reranked according to the a score which is the average of the predicted relevance of the item and the novelty of the item \cite{vargas2011rank}.
 
Instead of combining novelty and relevance components in the same function, the Bayes Rule method by the same authors explicitly adjusts the rating prediction function. In particular, let $rel:U\times I \to [0,1]$ be the function which predicts the strength of a recommendation between user and item pairs. The authors suggest that the probability that an item is relevant to a user is proportional to its predicted rating, and verify this claim experimentally. Combining this assumption with Bayesian inversion, they come up with a revised prediction function $rel_{BR}(u,i) = rel(u,i) \left(\sum_{u'} rel(u',i) \right)^{-\alpha}$, and rerank the recommendations according to this function. Note that predicted quality of a recommendation only differs from the original prediction by a factor of $\left(\sum_{u'} rel(u',i) \right)^{-\alpha}$. This term is a function of the item $i$, and grows larger as the sum of the predicted ratings of $i$ goes lower. Therefore, rescoring in this manner dampens the predicted ratings of popular items, while increasing the predicted ratings of less popular items \cite{vargas2014improving}.    
\end{enumerate}
\else

\textbf{Approaches for Increasing Sales Diversity:} Attempts at increasing sales diversity fall into two approaches: optimization and reranking. The optimization approach has been taken up most notably by Adomavicius and Kwon \cite{adomavicius2011maximizing}, who consider heuristic and exact algorithms for improving aggregate diversity. Their flow based solution is approximate, while their exact solution to this problem relies on integer programming and has exponential complexity. Our work in this paper subsumes these approaches by giving an exact polynomial algorithm for aggregate diversity maximization. To the best of our knowledge, neither the Gini index nor the entropy of the degree distribution can be optimized in an exact sense.

The reranking based approaches are by far the more popular choice in increasing sales diversity. Here, we consider three different approaches by Castells and Vargas, spread across two different papers. The first two approaches model discovery in a recommender system by associating each user-item pair with a binary random variable called $seen$ which represents the event that a user is familiar with the given item, with the assumption that an unseen item is novel to the user. Given a training dataset $\theta$ which maps pairs of users from $U$ and items from $I$ to known rating values, the authors use a maximum likelihood model to estimate the probability that an item is seen by a user. In particular, they set the probability of an item $i$ already being known to a user as the fraction of users who have rated that item. Given these probabilities, the popularity complement similarly defines the novelty of an item $i$ as $\text{nov}_{PC}(i|\theta) = 1- p(seen | i,\theta)$. The free discovery method of measuring novelty similarly defines the novelty of an item $i$ to a user $u$ as $\text{nov}_{FD}(i|\theta) = -\log_2(p(seen | i,\theta))$. After these probabilities are estimated from the training data, the candidate list of recommendations are reranked according to the a score which is the average of the predicted relevance of the item and the novelty of the item \cite{vargas2011rank}.

Instead of combining novelty and relevance components in the same function, the Bayes Rule method by the same authors explicitly adjusts the rating prediction function. In particular, let $rel:U\times I$ be the function which predicts the strength of a recommendation between user and item pairs. The authors suggest that the probability that an item is relevant to a user is proportional to its predicted rating, and verify this claim experimentally. Combining this assumption with Bayesian inversion, they come up with a revised prediction function $rel_{BR}(u,i) = rel(u,i) \left(\sum_{u'} rel(u',i) \right)^\alpha$, and rerank the recommendations according to this function. Note that predicted quality of a recommendation only differs from the original prediction by a factor of $\left(\sum_{u'} rel(u',i) \right)^{-\alpha}$. This term is a function of the item $i$, and grows larger as the sum of the predicted ratings of $i$ go down. \cite{vargas2014improving}.
\fi

\ifdefined\THESIS
\textbf{Constrained Recommendations:} Finally, our problem is one of constrained recommendation, and this problem has been studied in contexts where the data is both large and small. In the small data end of the of the spectrum, recommenders have been developed to solve problems such as matching students to courses based on prerequisites and requirements  \cite{parameswaran2011recommendation} or matching reviewers to paper submissions \cite{mimno2007expertise, karimzadehgan2009constrained}. These types of models perform well because they are built specifically for the task at hand, but they are not suited for general purpose recommendation tasks for increasing diversity.

On the other end of the spectrum, in a context of matching buyers with sellers with the goal of maximizing revenue~\cite{chen2016conflict}, the problem has been modeled as a matching problem to maximize the number of recommended edges (rather than their diversity). Similarly, large-scale matching problems modeling recommendations have been tackled in a distributed setting \cite{makari2013distributed} using degree bounds, but these methods are unable to accurately enforce degree lower bounds on the items begin recommended (so they are simply set to zero). Here again, the objective is to maximize the number of edges chosen rather than any measure of diversity. While our methods do not scale to the same level, we are able to model diversity with degree bounds more accurately.
\fi

\section{Algorithms}
\label{section:Chapter2Algorithms}
In this section, we prove that discrepancy from a target distribution can be minimized efficiently by reducing this problem to one invocation of a minimum cost flow problem. This result holds regardless of the target in-degree distribution and the required out-degree distribution.

\subsection{Construction of the Flow Network}

Let $G=(L,R,E)$ be the input bipartite graph which contains candidate recommendations. We construct a flow network out of $G$ such that the min-cost feasible flow will have cost equal to the min-discrepancy. Our network will have $|V|+2$ nodes: two special sink nodes $t_1$ and $t_2$, as well as a copy of each node in $G$ (See Figure~\ref{fig:flownetwork}). We set the supply of each node $u_i$ to $c_i$ (its specified out-degree), and the demand of the sink $t_2$ to $\sum_{j=1}^r a_j = \sum_{i=1}^l c_i$. Next, for each arc $(u_i, v_j)\in G$, we create an arc $(u_i,v_j)$ in the flow network, with unit capacity and zero cost. For each node $v_j$, we create an arc to each sink. We add the arc $(v_j,t_1)$ of capacity $a_j$ (its target out-degree) and zero cost, and the arc $(v_j,t_2)$ of infinite capacity and cost $2$. We finally add to our network an arc $(t_1,t_2)$ between the two sinks, with infinite capacity and zero cost. Our assumptions ensure that total supply, $\sum_{i=1}^l c_i$  meets total demand $\sum_{j=1}^r a_j$, and that a feasible flow exists since each node $u_i$ in $L$ can send as much as $deg^+_G(u_i) \geq c_i$ flow to the sink $t_2$ via any $c_i$ different neighbors. Note that there are $|E|+2|R|+1$ arcs in total in our flow network.  The complete flow network constructed this way is shown in Figure~\ref{fig:flownetwork}.

\ifdefined\THESIS
\begin{figure}
\includegraphics[width=.99\textwidth]{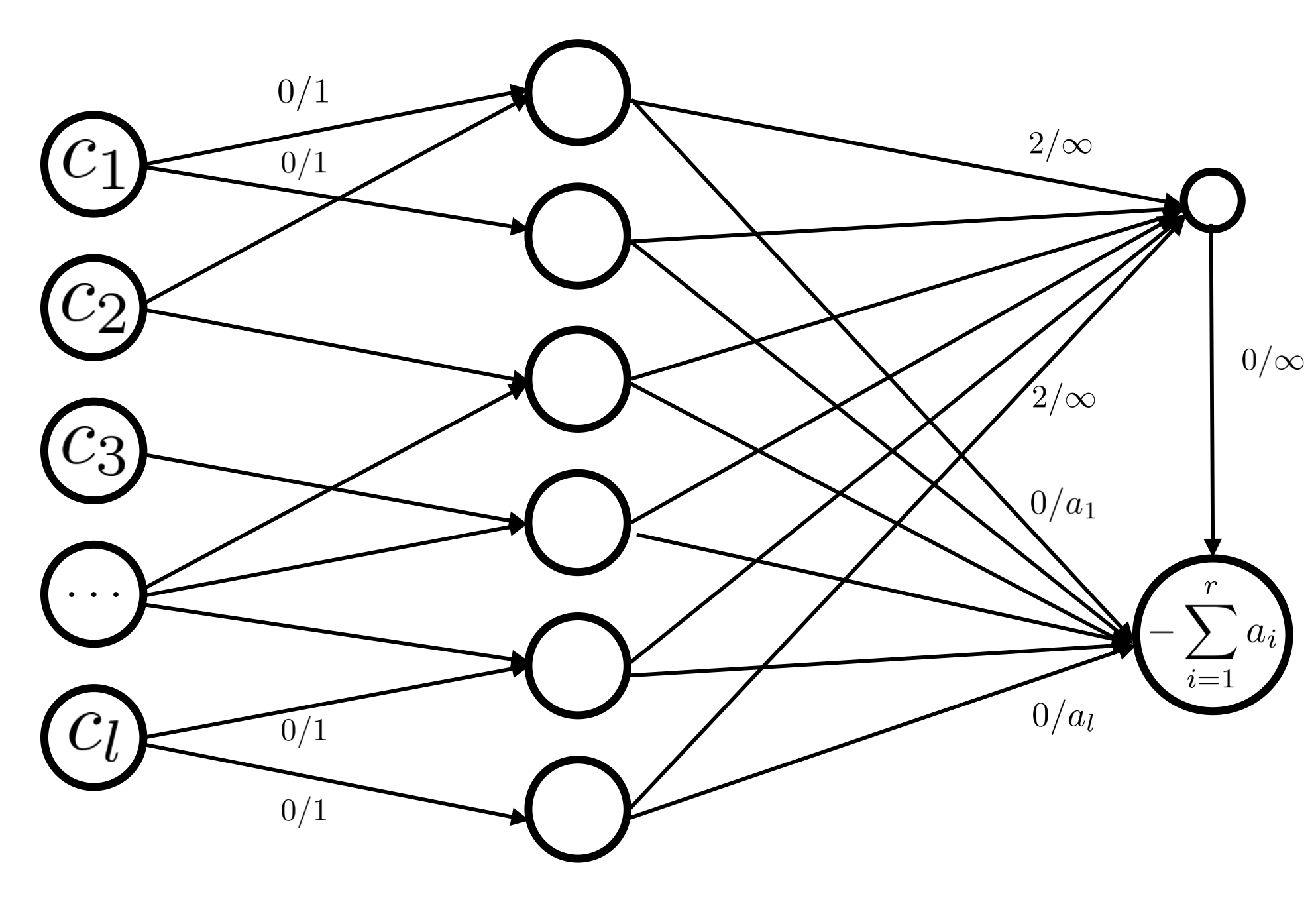}
\caption{The network flow model for min-discrepancy with nodes labelled with their supply and arcs labeled with their cost/capacity. Unlabelled nodes have zero supply.}
\label{fig:flownetwork}
\end{figure}
\else
\begin{figure}
\includegraphics[width=.99\columnwidth]{chapters/chapter2/Images/flownetwork.png}
\caption{The network flow model for min-discrepancy with nodes labelled with their supply and arcs labeled with their cost/capacity. Unlabelled nodes have zero supply.}
\label{fig:flownetwork}
\vspace{-.7cm}
\end{figure}
\fi

Our main theorem shows that the minimum cost of a flow in this network is the same as the minimum discrepancy a subgraph of $G$ has from our target in-degree distribution.

\begin{theorem}
\label{thm:main}
Suppose $G=(V,E)$ satisfies $deg^+_G(u_i) \geq c_i$ for all $u_i \in L$ and the degree distributions satisfy $\sum_{i=1}^l c_i =\sum_{i=1}^r a_i$. Then the minimum cost flow in the network constructed above has value $\text{MIN-DISCREPANCY}(G(L,R),\{c_i\}_{i=1}^l,\{a_j\}_{j=1}^r)$ and can be computed $O(|E||V|^2\log(|V|))$ time.
\end{theorem}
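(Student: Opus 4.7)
The plan is to prove the two inequalities \(\min\text{-cost flow} \le \text{MIN-DISCREPANCY}\) and \(\min\text{-cost flow} \ge \text{MIN-DISCREPANCY}\) by exhibiting a cost-preserving bijection between feasible subgraphs of $G$ (with correct out-degrees) and integer feasible flows in the constructed network, and then invoking integrality of min-cost flow. A key preliminary identity is the following: for any feasible $H$, writing $d_j = \deg_H^-(v_j)$, the hard out-degree constraints force $\sum_j d_j = \sum_i c_i = \sum_j a_j$, so the total positive deviation equals the total negative deviation, giving
\[
f(H) \;=\; \sum_j |d_j - a_j| \;=\; 2\sum_j (d_j - a_j)^+.
\]
This factor of $2$ is exactly why the arcs $(v_j, t_2)$ have cost $2$ and arcs $(v_j, t_1)$ have cost $0$.

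For the first direction I would take a feasible $H$ and construct a flow: route one unit across $(u_i, v_j)$ for each $(u_i,v_j)\in H$, send $\min(d_j, a_j)$ units along $(v_j, t_1)$ and the excess $(d_j - a_j)^+$ along $(v_j, t_2)$, and push $\sum_j \min(d_j,a_j)$ along $(t_1, t_2)$ to absorb the remaining demand at $t_2$. A brief check verifies conservation at every node, respects all capacities (in particular $\min(d_j,a_j)\le a_j$), and yields total cost $2\sum_j (d_j-a_j)^+ = f(H)$.

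For the reverse direction, given any integer feasible flow, the unit capacities on edges of $G$ together with the supply $c_i$ at $u_i$ imply that exactly $c_i$ of the arcs out of $u_i$ carry a unit of flow; taking these edges as $H$ gives a feasible subgraph with $d_j = \deg_H^-(v_j)$ equal to the total flow into $v_j$. If $y_j$ and $z_j$ denote the flow on $(v_j, t_1)$ and $(v_j, t_2)$, then $y_j + z_j = d_j$ and $y_j \le a_j$, so $z_j \ge (d_j - a_j)^+$, and the cost is at least $2\sum_j (d_j-a_j)^+ = f(H) \ge \text{MIN-DISCREPANCY}$. Combining this with the first direction and the integrality of min-cost flow on an integer-capacitated network (whose existence of a feasible flow was already argued in the construction) establishes the equality.

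The runtime bound is then cosmetic: the constructed network has $|V|+2$ nodes and $|E|+2|R|+1 = O(|E|+|V|)$ arcs, and applying a standard strongly polynomial min-cost flow algorithm (e.g.\ Orlin's) gives the stated $O(|E||V|^2\log|V|)$ bound. I expect the main subtlety to be stating the bijection carefully enough to make the integrality step rigorous — in particular, ensuring that integer capacities on $(u_i, v_j)$ together with the supply/demand structure really do force an optimal integer flow to correspond to a valid $0/1$ selection of edges of $G$, which is immediate once one invokes the classical result that min-cost flow in a network with integer data admits an integer optimum.
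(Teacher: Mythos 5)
Your proposal is correct and follows essentially the same route as the paper's proof: the same factor-of-two identity $f(H)=2\sum_j(d_j-a_j)^+$ derived from $\sum_j d_j=\sum_i c_i=\sum_j a_j$, the same cost-preserving correspondence between feasible subgraphs and integral flows, and the same appeal to integrality of min-cost flow. If anything you are slightly more careful than the paper in the reverse direction (bounding $z_j\ge(d_j-a_j)^+$ rather than assuming the flow saturates the free arc first), but the argument is the same.
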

\begin{proof}
Consider a minimum cost flow in this network. Since the network's capacities and supplies are all integral, we may assume that this minimum cost flow is integral as well~\cite{ahuja1993network}. This means each edge crossing from $L$ to $R$ is either fully used or unused because it is unit capacity.

We let $H$ be a subgraph of $G$ defined by taking the edges of the form $(u_i,v_j)$ where $(u_i,v_j)$ is used in the flow. Since each such edge is either used or unused, and the supply of node $u_i$ is $c_i$, we will satisfy the constraints of the form $\deg_H^+(u_i) = c_i$. To see that the cost of this flow is the same as the cost of our objective, note that we can partition the vertices in $R$ into two halves: $P$ for the vertices satisfying their degree requirement $\deg_H^-(v_j) \geq a_j$, and $N$ for the vertices not satisfying their degree requirement. We can now write our objective as follows.

\begin{align*}
\sum_{v_j\in R} |\deg_H^-(v_j) - a_j| = & \sum_{v_j\in P} \left( \deg_H^-(v_j) - a_j\right) + \\
                                                                    & \sum_{v_j\in N} \left( a_j - \deg_H^-(v_j)\right)
\end{align*}

However, note that our flow is feasible. Therefore, the total number of edges recommended is $\sum_{i=1}^l c_i$. It now follows that $\sum_{v_j\in R} (\deg_H^-(v_j) - a_j) = \sum_{v_j\in R} \deg_H^-(v_j) - \sum_{i=1}^l c_i = 0$ from our assumption that $\sum_{i=1}^l c_i =\sum_{j=1}^r a_j$. Adding this to the expression above gives the following.

\[ \sum_{v_j\in R} |\deg_H^-(v_j) - a_j| = 2\sum_{v_j\in P} \left( \deg_H^-(v_j) - a_j \right) \]

In our formulation, we only pay for the flow going through a node $v_j$ if the flow is in excess of $a_j$. Since we pay $2$ units of cost for each unit of this type of flow, and don't pay for anything else, our objective matches that of the flow problem.

By reversing our reduction, we can show that every subgraph $H$ with the desired properties induces a flow with the same cost as well. Therefore, the minimum discrepancy problem can be solved by a single invocation of a minimum cost flow algorithm, on a network with $|L|+|R|+2$ nodes and $2|R|+|E|+1$ edges with capacity bounded by $|V|$. This can be solved in $O(|E||V|^2\log(|V|))$ time, using capacity scaling or other competitive methods~\cite{orlin1997polynomial}.
\end{proof}

\ifdefined\thesis
\subsection{Aggregate Diversity}
\else
\textbf{Aggregate Diversity.}
Recall that aggregate diversity is the total number of items recommended by a recommender system. Aggregate diversity does not correspond to discrepancy from any target distribution, however it can be maximized by our model as well. 

\begin{theorem}
\label{thm:aggdiv}
Suppose $\sum_{i=1}^l c_i \geq r = |R|$. Then the minimum cost flow solution in the network constructed for the $\text{MIN-DISCREPANCY}(G,\{c_i\}_{i=1}^l,\{1\}_{j=1}^r)$ problem  maximizes aggregate diversity.
\end{theorem}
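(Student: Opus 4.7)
The plan is to reduce aggregate diversity maximization to min-discrepancy minimization by picking the target in-degree vector to be the all-ones vector. With $a_j = 1$ for every $j$, a feasible subgraph $H$ ``covers'' item $v_j$ iff $\deg_H^-(v_j) \geq 1$, so the aggregate diversity of $H$ is exactly $|P|$ where $P = \{v_j : \deg_H^-(v_j) \geq 1\}$ and the complement $N = R \setminus P$ consists of items with in-degree zero.

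First I would rewrite the discrepancy $f(H) = \sum_j |\deg_H^-(v_j) - 1|$ using this partition:
\[
f(H) = \sum_{v_j \in P}(\deg_H^-(v_j) - 1) + \sum_{v_j \in N}(1 - \deg_H^-(v_j)) = \sum_{j} \deg_H^-(v_j) \;-\; |P| \;+\; |N|.
\]
The hard out-degree constraints $\deg_H^+(u_i) = c_i$ force $\sum_j \deg_H^-(v_j) = \sum_i c_i$, and $|N| = r - |P|$, so
\[
f(H) \;=\; \sum_i c_i \;+\; r \;-\; 2|P|.
\]
Since $\sum_i c_i$ and $r$ are constants independent of $H$, minimizing discrepancy is equivalent to maximizing $|P|$, which is precisely aggregate diversity. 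Invoking Theorem~\ref{thm:main}, the minimum cost flow in the constructed network attains the minimum discrepancy, and therefore induces a feasible subgraph of maximum aggregate diversity.

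The one technical obstacle is that Theorem~\ref{thm:main} as stated requires $\sum_i c_i = \sum_j a_j$, whereas here we only assume $\sum_i c_i \geq r = \sum_j a_j$. I would close this gap by padding the item side with a single slack item $v^\star$ of target in-degree $a^\star = \sum_i c_i - r$, connected to every user by a zero-weight candidate edge. The padded instance satisfies the hypothesis of Theorem~\ref{thm:main}, and because the added edges incident to $v^\star$ only serve as a ``sink'' for excess supply, they do not change the aggregate diversity measured over the original item set $R$. An alternative, essentially equivalent, route is to work directly with the flow network of Theorem~\ref{thm:main}: one checks that the cost of any integral feasible flow equals $2\bigl(\sum_i c_i - |P|\bigr)$ (each unit of in-flow at $v_j$ in excess of $a_j = 1$ is charged $2$ through the cost-$2$ arc $(v_j,t_2)$), so again minimum cost corresponds to maximum $|P|$. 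I expect this balance/padding step to be the only subtle point; the rest of the argument is a direct algebraic rewrite once $a_j = 1$ is substituted into the definition of discrepancy.
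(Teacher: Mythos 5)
Your proposal is correct, and your ``alternative route'' --- computing the cost of an integral feasible flow in the network directly as $2\bigl(\sum_i c_i - |P|\bigr)$, so that minimizing cost maximizes $|P|$ --- is essentially the paper's own proof (the paper argues that each of the $\sum_i c_i$ units of flow crosses either a free arc $(v_j,t_1)$ of capacity $a_j=1$ or a charged arc, so a flow of cost $2(\sum_i c_i - d)$ exists if and only if a subgraph of aggregate diversity $d$ exists). Your primary route contributes something the paper does not make explicit: the identity $f(H)=\sum_i c_i + r - 2|P|$, which shows that discrepancy to the all-ones target and aggregate diversity are equivalent objectives over feasible subgraphs before any flow machinery enters; combined with your flow-cost computation it also shows the flow cost exceeds $f(H)$ by exactly the constant $\sum_i c_i - r$, which is the one-line extension of Theorem~\ref{thm:main} to the unbalanced case that you actually need --- no padding required. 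The padding step is the one place I would push back. First, a single slack item $v^\star$ joined to each user by one unit-capacity candidate edge can absorb at most $l$ units of flow, while its target is $\sum_i c_i - r$, which can exceed $l$ (take $c_i=10$ for $l=100$ users and $r=500$ items). Second, even when capacity suffices, the padded discrepancy works out to $2\bigl(\sum_i c_i - s - |P|\bigr)$ when the slack in-degree $s$ is below its target, so the optimizer is indifferent between one more unit of slack and one more covered item; without an exchange argument, a minimizer of the padded objective need not maximize $|P|$. Third, the theorem is a claim about the network built for the \emph{unpadded} instance, so the padding route would still owe a translation back. Since your direct computation closes the theorem cleanly, I would drop the slack item and lead with the identity plus the flow-cost calculation.
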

\ifdefined\THESIS
\begin{proof}
The sufficiency of our condition is obvious as it is needed to make sure that the supply of the nodes in $L$ can be absorbed by the sink node. Now suppose that a recommender system achieves aggregate diversity $r$. A total of $\sum_{i=1}^l c_i$ units of flow make it to the sink, and each has to travel through an arc of cost 0 or 1. Since there are $r$ different items in $R$, and each can send 1 unit of flow without cost, this solution has cost $\left( \sum_{i=1}^l c_i \right) - r$. Conversely, suppose some solution obtains cost $\left( \sum_{i=1}^l c_i \right) - r$. The only way the cost can be reduced below $\left( \sum_{i=1}^l c_i \right)$ is achievable is through the use of 0 cost arcs. Since each such arc has capacity 1, at least $r$ such arcs must be used in the solution. This implies that a solution of aggregate diversity $r$ exists.
\end{proof}
\else
We omit the proof due to space constraints, as this is an easy corollary of Theorem \ref{thm:main}.
\fi
\subsection{Incorporating Recommendation Relevance}
\subsubsection{Cumulative Gain}
Note that we have had to assign zero costs to all the edges crossing between the two sides of our bipartition in order for our reduction to work. Recommendation strengths can be taken into account by our flow based methods, and we can find the graph that has the highest total recommendation quality given a discrepancy value using an extra pass with a flow solver. This can be done accomplished as follows: first, we solve the regular discrepancy problem, and finding the lowest discrepancy value $OPT$ attainable by the underlying $G$. Knowing this value, we can now fix the flow between $t_1$ and $t_2$ in the original flow network to $lc-OPT$, where $lc$ is the total out-degree of the subgraph from $L$. This constrains the flow solver to choose subgraphs where exactly $OPT$ of the recommendations go over the charged edges. We then keep all the other capacities the same and add new nonzero weights reflecting recommendation quality while removing all other costs. In a second pass, we find the highest weight flow in this network, which corresponds to the recommender graph with $OPT$ discrepancy with the highest total recommendation quality. We call this approach the \underline{two-pass method}\footnote{This follows the goal-programming methodology for two-objective functions, popular in Operations Research.}. Maximizing average recommendation quality in this fashion corresponds to the finding the recommendation subgraph with the highest cumulative gain. 

\ifdefined\THESIS
In some cases, recommendation algorithms do not give each recommendation a score that falls in a uniform interval, and instead rank the resulting recommendations among themselves. In this case, we cannot use the average recommendation quality as a measure of the quality of our recommendations. An appropriate measure of quality in this case is the precision-in-top-$c$ metric, where the quality of the recommendations made to the user $u$ is measured as $|N(u) \cap T_k(u)| /c$, where $T_k(u)$ is the list of top-$k$ recommendations for the user $u$. When $c$ recommendations are made for each user, the average recommendation quality of the system is

\[ \sum_{u\in L} \frac{|N(u) \cap T_k(u)|}{c} = \frac{1}{c}\sum_{u\in L} |N(u) \cap T_k(u)|\]

Note that the quantity inside the sum is simply a linear function of the recommendations made by the subgraph: recommendation edges which are in the top-$k$ for a user $u$ have weight 1, while every other edge has weight 0. Therefore, we can optimize the average of the accuracy-in-top-$k$ using a flow model as well. The objective we maximize in this setting corresponds to the average cumulative gain in the binary relevance setting.

\subsubsection{Discounted Cumulative Gain}
Not every recommendation in a list is considered equally valuable, and the value of recommendation slot depends on its rank among the presented recommendations. Discounted cumulative gain accomplishes this by weighing the relevance of the $\text{i}^{th}$ slot by $1/\log(i+1)$. That is, if we let $v_1,\ldots, v_c$ be the recommendations made to user $u$ and $rel(u,v_i)$ the relevance of the $\text{i}^{th}$ recommendation

\[ CDG(u) = \sum_{i=1}^c \frac{rel(u,v_i)}{\log(i+1)} \]

We first show how to maximize $CDG$ in the binary relevance case, which has a simpler construction than the general case.

\begin{theorem}
\label{thm:binaryCDG}
The recommendation graph matching the minimum discrepancy given by $\text{MIN-DISCREPANCY}(G(L,R),\{c_i\}_{i=1}^l,\{a_j\}_{j=1}^r)$ while having the highest cumulative discounted gain in the binary relevance setting and can be computed with one extra min-cost flow problem.
\end{theorem}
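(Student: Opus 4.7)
The plan is to extend the two-pass approach from cumulative gain to discounted cumulative gain (DCG) in the binary relevance setting. I would first run the first pass to obtain the optimal discrepancy value $\text{OPT}$ via Theorem~\ref{thm:main}, then construct a modified flow network in which feasible integral flows correspond exactly to subgraphs $H$ with the required out-degrees $c_i$ and discrepancy $\text{OPT}$, and in which the cost of a flow equals the negation of $H$'s DCG.

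The key observation is that with binary relevance, the DCG contribution of a single user $u_i$ depends only on the count $R_i$ of relevant edges selected among the $c_i$ chosen edges: the DCG-maximizing ordering places all relevant edges at the top, yielding contribution $h(R_i) := \sum_{k=1}^{R_i} 1/\log(k+1)$. The discrete marginals $1/\log(R_i+1)$ are positive and decreasing in $R_i$, so $h$ is concave. This concavity is precisely what the parallel-arc trick in min-cost flow is designed to exploit.

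Concretely, I would augment the network of Theorem~\ref{thm:main} as follows. For each user $u_i$, introduce two auxiliary nodes $u_i^+$ and $u_i^-$ that represent routings through relevant and non-relevant candidate edges. Replace each unit-capacity arc $(u_i, v_j)$ with either a unit arc $(u_i^+, v_j)$ of cost $0$ (if $rel(u_i,v_j) = 1$) or a unit arc $(u_i^-, v_j)$ of cost $0$ (if $rel(u_i,v_j) = 0$). From $u_i$ add $c_i$ parallel unit-capacity arcs to $u_i^+$ with costs $-1/\log(2), -1/\log(3), \ldots, -1/\log(c_i+1)$, and a single arc of capacity $c_i$ and cost $0$ from $u_i$ to $u_i^-$. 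Finally, as in the two-pass method, fix the flow value on the $(t_1, t_2)$ arc to the value it attains under the first-pass optimum; this restricts feasible flows to those whose induced subgraphs achieve discrepancy $\text{OPT}$.

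For correctness, integrality of min-cost flow (unit capacities, integer supplies) yields an integral flow, which induces a subgraph $H$ of $G$ with $\deg_H^+(u_i) = c_i$ and discrepancy $\text{OPT}$. Because the parallel arcs into $u_i^+$ have strictly increasing cost, any optimal flow saturates them in cheapest-first order, so if $R_i$ units are routed through $u_i^+$ the incurred cost is exactly $-h(R_i)$, which matches the DCG contribution of placing those $R_i$ relevant edges into the top $R_i$ slots. Summing over users, the cost of the flow equals $-\sum_i h(R_i)$, the negated binary-relevance DCG of $H$, so minimizing cost maximizes DCG.

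The main obstacle is justifying the concave-to-linear encoding via parallel arcs: one must argue that an optimal min-cost flow never routes a unit through an expensive parallel arc while leaving a cheaper one unused, which follows from a one-step exchange argument, and that the discrete marginals of DCG indeed align with the prescribed arc costs. A secondary subtlety is confirming that the induced subgraph still realizes discrepancy exactly $\text{OPT}$; this follows from the two-pass analysis already developed for cumulative gain, since fixing the $(t_1, t_2)$ flow fixes the total flow on the cost-$2$ arcs and hence the discrepancy. Everything else reduces to a single invocation of a min-cost flow solver, giving the claimed one-extra-flow bound.
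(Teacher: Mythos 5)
Your construction is essentially the paper's: both route the relevant candidate edges of each user through an intermediary node reached by $c$ parallel unit-capacity arcs whose costs are the (decreasing, hence concave) discrete marginals of the DCG, fix the $(t_1,t_2)$ flow to $lc-\text{OPT}$ to pin the discrepancy, and solve one additional min-cost flow; your explicit exchange argument for cheapest-first saturation and your second node $u_i^-$ for irrelevant edges are just more careful spellings of what the paper leaves implicit. (Your arc costs $-1/\log(2),\ldots,-1/\log(c+1)$ are in fact the ones consistent with the paper's definition of $CDG$, whereas the proof in the paper writes $-1,\ldots,-1/\log(c)$.)
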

\begin{proof}
We use the construction in Theorem~\ref{thm:main} as our starting point and set the cost of each arc to 0. We fix the flow between $t_1$ and $t_2$ to $lc-OPT$ to constrain the solver to solutions which have the desired discrepancy as discussed above.

Note that when $k$ relevant recommendations are made to a user, then the resulting discounted cumulative gain is $\sum_{i=1}^k \frac{1}{\log(i)} $. In order to be able to charge this quantity in our flow model, we create create an intermediary node $n_{u,c}$. Every recommendation $(u,v)$ which has binary relevance 1, now connects $n_{u,c}$ to $v$ instead of $u$ and $c$. We also connect $n_{u,c}$ to node $u$ by using $c$ parallel arcs with costs $-1,-1/\log(2),\ldots,-1/\log(c)$, each of capacity 1. This modification only adds an extra node and $c$ arcs for each user. Furthermore, the cost of the flow is the negation of the CDG function summed across all users, and this can be minimized with a single invocation of a min-cost flow solver.
\end{proof}

The rationale behind discounting the value of later recommendations is based on a a model of the user's consumption of the recommendations: later recommendations are less likely to receive attention from the user, which diminishes their usefulness. However, the discounting serves another beneficial purpose in our model. In particular, the value of making $k$ relevant recommendations in the binary relevance model is

\[ \sum_{i=1}^k \frac{1}{\log(k)} = \Theta\left(\frac{k}{\log(k)}\right)\]

This function grows slower than linearly, which means that there are diminishing returns as more and more relevant recommendations are made to the same user. It is therefore, more advantageous to make a second relevant recommendation for a user $u$ than to make a tenth relevant recommendation to a user $u'$. This can be used to ensure that no user gets a disproportionate share of irrelevant recommendations.

\begin{theorem}
\label{thm:fullCDG}
The recommendation graph matching the minimum discrepancy given by $\text{MIN-DISCREPANCY}(G(L,R),\{c_i\}_{i=1}^l,\{a_j\}_{j=1}^r)$ while having the highest cumulative discounted gain and can be computed with one extra min-cost flow problem.
\end{theorem}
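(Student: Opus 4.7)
My plan is to extend the slot-gadget idea from Theorem~\ref{thm:binaryCDG} from binary to arbitrary relevance values. Following the two-pass approach, I first solve $\text{MIN-DISCREPANCY}$ to obtain the optimal discrepancy $OPT$, then build a second min-cost flow network that is identical to the one in Theorem~\ref{thm:main} except (i) the arc $(t_1,t_2)$ is fixed to carry exactly $lc-OPT$ units so that every feasible flow realizes the optimum discrepancy, (ii) the crossing arcs $(u_i,v_j)$ are replaced by a user-side gadget that accumulates CDG, and (iii) every other cost is set to zero so that the only cost contributions come from the gadget.

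For each user $u_i \in L$ with display quota $c_i$, I introduce $c_i$ fresh \emph{slot nodes} $s_{i,1},\dots,s_{i,c_i}$ and, for each candidate $v_j \in N_G^+(u_i)$, an \emph{item-copy} node $y_{i,j}$. The gadget arcs are: for each slot $k$, a unit-capacity zero-cost arc $(u_i,s_{i,k})$; for each candidate $v_j$ and each slot $k$, a unit-capacity arc $(s_{i,k}, y_{i,j})$ of cost $-rel(u_i,v_j)/\log(k+1)$; and a unit-capacity zero-cost arc $(y_{i,j},v_j)$. The capacity-$1$ caps on arcs entering slots and on the arcs $(y_{i,j},v_j)$ force each slot to carry exactly one unit of flow and each item to be recommended to $u_i$ at most once, so integral feasible flows correspond bijectively to min-discrepancy subgraphs $H$ with $\deg_H^+(u_i)=c_i$ equipped with an ordering of each user's recommendation list into the $c_i$ slots.

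The correctness argument has two parts. First, for a \emph{fixed} selection of $c_i$ items assigned to user $u_i$, the min-cost flow routes them through the slots so as to minimize $-\sum_k rel(u_i,v_{\pi(k)})/\log(k+1)$; by the rearrangement inequality the optimum pairs items with slots in decreasing order of relevance, which is exactly the CDG-maximizing arrangement. Second, the cost of any integral flow equals $-\sum_{u_i} CDG(u_i)$ of the induced ordered recommendation lists, and because discrepancy is enforced solely through the fixed $(t_1,t_2)$ flow---independently of how flow is routed through the slot gadgets---the two objectives are decoupled. Hence minimizing the gadget cost over feasible flows is equivalent to maximizing total CDG over all min-discrepancy recommendation subgraphs.

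The main obstacle I anticipate is controlling the size of the expanded network and confirming the runtime: the gadget adds $O(c_i)$ slot nodes, $O(|N_G^+(u_i)|)$ item-copy nodes, and $O(c_i|N_G^+(u_i)|)$ arcs per user, for a total of $O(c_{\max}|E|)$ extra arcs and $O(c_{\max}|L|+|E|)$ extra nodes, where $c_{\max}=\max_i c_i$. Since $c_{\max}$ is a small constant in practice (and at worst $O(|R|)$), the expanded network remains polynomially small, all capacities stay integral so integrality of the optimum is preserved, and one call to the same min-cost flow solver used in Theorem~\ref{thm:main} suffices to produce the CDG-optimal min-discrepancy subgraph.
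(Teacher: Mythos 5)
Your construction is essentially the paper's: both fix the $(t_1,t_2)$ flow to $lc-OPT$ after a first discrepancy-only pass, introduce $c$ unit-capacity slot nodes per user, and charge $-rel(u,v)/\log(k+1)$ on the arc leaving slot $k$, so that the min-cost flow sorts each user's chosen items into slots optimally (your explicit appeal to the rearrangement inequality) and the total cost is the negated cumulative discounted gain. The one substantive difference is your item-copy node $y_{i,j}$ with its unit-capacity arc into $v_j$: the paper instead connects each slot node $n_{u,k}$ directly to $v$ by its own capacity-$1$ arc, which gives a total of $c$ units of capacity from user $u$'s gadget to a single item $v$ and therefore lets the optimizer place the same (most relevant) item in several of $u$'s slots. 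Your extra layer restores the per-(user,item) unit capacity of the original crossing arcs in Theorem~\ref{thm:main} and rules out such duplicate recommendations, so your version is the more careful of the two; it costs only $O(|E|)$ additional nodes, does not affect integrality, and still requires just one extra min-cost-flow invocation of the same asymptotic size.
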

\begin{proof}
We use the construction in Theorem~\ref{thm:main} as our starting point and set the cost of each arc to 0. We fix the flow between $t_1$ and $t_2$ to $lc-OPT$ to constrain the solver to solutions which have the desired discrepancy as discussed above.

Since cumulative discounted gain depends on the ranking of the recommendations made, we create $c$ nodes for each user $n_{u,1},\ldots, n_{u,c}$. We connect each of these to the user node $u$ by an arc of cost 0 and capacity 1. For each candidate recommendation $(u,v)\in G$ we create $c$ arcs in total. For each $1\leq i \leq c$, the node $n_{u,i}$ is connected to node $v$ by a cost $-rel(u,v)/\log(i+1)$ capacity 1 arc. The rest of the construction remains the same, and these modifications add $|V|$ vertices to the construction, as well as $(c-1)|E|$ additional arcs since every recommendation edge now has $c$ copies instead of just 1.

In a feasible solution, no more than $c$ recommendations can be made to a user because the total capacity of the arcs coming out of node $u$ is $c$. The single unit of flow that is allowed to leave $n_{u,i}$ discounts the value of this recommendation by $\log(i+1)$ due to the way we set the costs. The sum of these contributions gives us the negation of the cumulative discounted gain for the system, which can be minimized with one extra min-cost flow invocation.
\end{proof}

Table ~\ref{table:algorithmSummary} summarizes the number of arcs and nodes in each of our constructions.

\begin{table}[h]
\resizebox{\textwidth}{!}{
\begin{tabular}{l|l|l|}
\cline{2-3}
                                                          & Arcs                  & Nodes        \\ \hline
\multicolumn{1}{|l|}{Cumulative Gain (Binary)}            & $2|R|+|E|+1$          & $|L|+|R|+2$  \\ \hline
\multicolumn{1}{|l|}{Cumulative Gain}                     & $2|R|+|E|+1$          & $|L|+|R|+2$  \\ \hline
\multicolumn{1}{|l|}{Cumulative Discounted Gain (Binary)} & $(c+1)|L|+2|R|+|E|+1$ & $2|L|+|R|+2$  \\ \hline
\multicolumn{1}{|l|}{Cumulative Discounted Gain}          & $|L|+2|R|+c|E|+1$     & $c|L|+|R|+2$ \\ \hline
\end{tabular}
}
\caption{The number of nodes and arcs in each of our difference relevance models. The non-discounted models are the most efficient, followed by the binary cumulative discounted model. The full discounted gain model is likely to be prohibitively expensive for most settings of $c$.}
\label{table:algorithmSummary}
\end{table}

\fi

\subsubsection{Bicriteria Optimization}
In each of the constructions above, we needed to make an extra pass with a flow solver in order to find a solution with a high level of relevance. If we used these cost settings along with the cost settings we used in \ref{thm:main}, we would no longer be optimizing only for ratings, or only for discrepancy. Instead, this results in a bicriteria objective of the form $\text{discrepancy}(H) -\mu\cdot \text{rel}(H)$, where $\mu$ can be any real number, and where relevance of a solution graph denotes the average relevance of the recommendations in $H$ as measured by any of the metrics discussed above. We call this approach \underline{the weighted method}, and demonstrate that while it is strictly worse than the two-pass method in theory, it yields acceptable results in practice while saving an extra pass of flow minimization. We discuss the performance differences in our experimental section. 
\ifdefined\THESIS

\subsection{Category Level Constraints}
It is sometimes desirable to set multiple goals in an optimization problem. For example, a news website might have a target distribution for the articles in mind, but might also want to ensure that none of the different categories such as current events, politics, sports, entertainment, etc. are neglected. Alternatively, due to a payment from a sponsor, a vendor may wish to boost the profile of a specific subset of movies in the catalog, and might need to balance its own needs about the distribution of recommendations with their commitment to their sponsor.

This type of problem can be accommodated by our model in the following way. Let there be $k$ categories $C_1,\ldots, C_k$ that partition the items in $R$ with minimum targets $A_1, \ldots, A_k$ respectively. We will require have $A_t \leq \sum_{v_j \in A_t} a_j$, i.e., the category requirements are less stringent than the aggregate of the individual target requirements. For ease of notation, let $D_1,\ldots, D_k$ be the number of times an item from category $C_1,\ldots, C_k$ are recommended. In this setting, we can optimize the objective  $\sum_{v_i\in R} |\deg_H^-(v_j) - a_j| + \sum_{i=1}^k \min(A_i-D_i,0)$. Note that this objective is simply the discrepancy objective, plus another term which looks like discrepancy objective for categories. However, we must make an important distinction. The discrepancy objective penalizes low degree nodes both directly, and indirectly via the targets degrees for other nodes, since an extra recommendation for one node is one ``stolen'' from another node. The second term in our new objective penalizes low-degree categories, but does not necessarily penalize oversaturated categories.

\begin{theorem}
Assume the conditions of theorem 1, and let $C_1,\ldots, C_k$ be a partition of $L$, with $A_1,\ldots, A_k$ an arbitrary sequence of non-negative integers. Then a flow network exists, whose flow cost equals the objective $\sum_{v_i\in R} |\deg_H^-(v_j) - a_j| + \sum_{i=1}^k \min(A_i-D_i,0)$.
\end{theorem}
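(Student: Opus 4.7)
The plan is to extend the flow network of Theorem~\ref{thm:main} by adding, for each category $C_i$, a node $N_i$ and a small auxiliary structure of arcs that contributes exactly $\min(A_i-D_i,0)$ to the cost of any integer flow corresponding to a feasible subgraph $H$. Combined with the per-item discrepancy already captured by Theorem~\ref{thm:main}, the total flow cost then matches the stated objective $\sum_{v_j\in R}|\deg_H^-(v_j)-a_j|+\sum_{i=1}^k\min(A_i-D_i,0)$ on the nose---no additive constant, and the same sign as written.

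Concretely, I would retain every arc of the Theorem~\ref{thm:main} construction (so that the $\sum_j|\deg_H^-(v_j)-a_j|$ term is already accounted for via the cost-$2$ arcs into $t_2$) and, for each $C_i$, add a node $N_i$, arcs $(v_j,N_i)$ for each $v_j\in C_i$, and arcs from $N_i$ into the sinks. The costs and capacities on the added arcs are chosen so that, under the integer-flow-to-subgraph correspondence, the flow through the category subnetwork contributes cost exactly $-\max(D_i-A_i,0)=\min(A_i-D_i,0)$. The natural choice is a ``bonus'' arc whose per-unit cost of $-1$ activates on precisely $\max(D_i-A_i,0)$ units---the category-wide oversaturation beyond the target $A_i$---and zero units when $D_i\leq A_i$, so that undersaturated categories contribute $0$ to the objective as required.

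The verification follows the pattern of Theorem~\ref{thm:main}'s proof. In one direction, every feasible subgraph $H$ lifts to an integer flow: select the edges of $H$ at unit capacity on the $L$-to-$R$ arcs, route $\min(\deg_H^-(v_j),a_j)$ out of each $v_j$ along the cost-$0$ arc to $t_1$ and the remainder either directly to $t_2$ (cost $2$ per unit) or through the category subnetwork, and push $\max(D_i-A_i,0)$ units along the bonus arc of $N_i$. Applying the identity $\sum_{v_j\in R}|\deg_H^-(v_j)-a_j|=2\sum_{v_j\in P}(\deg_H^-(v_j)-a_j)$ from Theorem~\ref{thm:main} to the item arcs and adding the $-\max(D_i-A_i,0)$ contribution from each $N_i$ produces the stated objective exactly. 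In the reverse direction, every integer min-cost flow yields a feasible $H$ by reading off the $L$-to-$R$ edges carrying flow.

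The hard part is wiring the category arcs so that the bonus is realized on \emph{exactly} $\max(D_i-A_i,0)$ units: never more, never less. A naive negative-cost path $v_j\to N_i\to t_2$ would be preferred by the min-cost solver over the direct cost-$2$ arc for \emph{every} excess unit, collapsing the per-item discrepancy entirely and giving a cost strictly below the claimed objective; the bonus must therefore be coupled to the category-aggregate inflow at $N_i$ (which equals $D_i$ by flow conservation) and gated so that it can fire only on units in excess of $A_i$. Establishing that the resulting extended network admits no integral augmenting circulation that drops cost below $\sum_j|\deg_H^-(v_j)-a_j|+\sum_i\min(A_i-D_i,0)$---while still attaining this value for the flow lifted from the discrepancy-optimal $H$---is the main technical step.
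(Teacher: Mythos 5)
There is a genuine gap, and it sits exactly where you flagged it: the ``bonus'' gadget that fires on precisely $\max(D_i-A_i,0)$ units is never constructed, and in the form you want it cannot be. Taking the category term literally as $\min(A_i-D_i,0)=-\max(D_i-A_i,0)$ makes the per-category cost, as a function of the aggregate inflow $D_i$, piecewise linear with slopes $0$ then $-1$: a concave function. A min-cost flow gadget (parallel arcs with capacities and costs carrying the $D_i$ units) can only realize the convex lower envelope of such a cost, because the solver always saturates the cheapest arcs first; any cost-$(-1)$ arc reachable from the category will be used for the \emph{first} units of flow, not only for the units beyond $A_i$. This is precisely the ``collapse'' you describe for the naive wiring, and no gating by capacities fixes it --- you would need the solver to prefer a cost-$0$ arc over an available cost-$(-1)$ arc, which contradicts cost minimization. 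So the ``main technical step'' you defer is not merely hard; it is unachievable within a plain min-cost flow reduction for the objective as you read it.

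The paper reads the category term the other way: its prose and proof charge $1$ per unit of category \emph{shortfall} and nothing for oversaturation, i.e., the term actually implemented is $\max(A_i-D_i,0)$ (the $\min$ in the statement is a sign slip --- note the surrounding text says the term ``penalizes low-degree categories, but does not necessarily penalize oversaturated categories''). That function is convex in $D_i$, which is what makes a flow gadget possible. Concretely, the paper replicates the two-sink structure per category as $t_1^i,t_2^i$ (so all of category $i$'s flow is funneled through its own gadget rather than a shared global $t_2$ --- this also removes the routing ambiguity your construction introduces by keeping the global cost-$2$ arcs alongside new category arcs), adds a node $t_3^i$ with demand $A_i$ fed by a free capacity-$A_i$ arc from $t_2^i$, and routes all remaining flow through a shared distributor that charges $1$ per unit leaving a category and forwards it to demand nodes for free; the unmet demand at $t_3^i$ is exactly the shortfall, and that is what gets charged. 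If you rework your argument around the convex shortfall penalty rather than the concave oversaturation bonus, your lifting and reading-off verification does go through along the lines of Theorem~\ref{thm:main}.
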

\begin{proof}
The proof is similar to that of Theorem 1. For each category $C_i$, we create three nodes: $t_1^i,t_2^i$ and $t_3^i$. The first two of these nodes are connected to the items in their category with arcs having the same costs/capacities as they did in the original network. Unlike the original network, where $t_2^i$ would have non-zero demand, both nodes in this construction have zero demand for flow. We set the demand for $t^3_i$ to be the category constraint $A_i$, and connect it to $t_2^i$ with an arc of capacity $A_i$ and cost 0.

In addition to these nodes, we create two more nodes, common to all categories: a distributor node $s_1$ and a supersink $s_2$ that are common to all the categories. The distributor $s_1$ can accept any amount of flow from a node of type $t_2^i$ at cost 1. It can also move any amount of flow to a node of type $t_3^i$ at 0 cost. Finally, the supersink $s_2$ has demand $lc-\sum_{i=1}^k A_i$, and accepts unbounded flow from nodes of type $t_3^i$ at no cost. Since we require that the category requirements sum to less than the total number of requirements, this node always has non-negative demand, and feasibility is ensured. Figure~\ref{fig:catflownetwork} illustrates the construction:

\begin{figure}
\centering
\includegraphics[width=.99\textwidth]{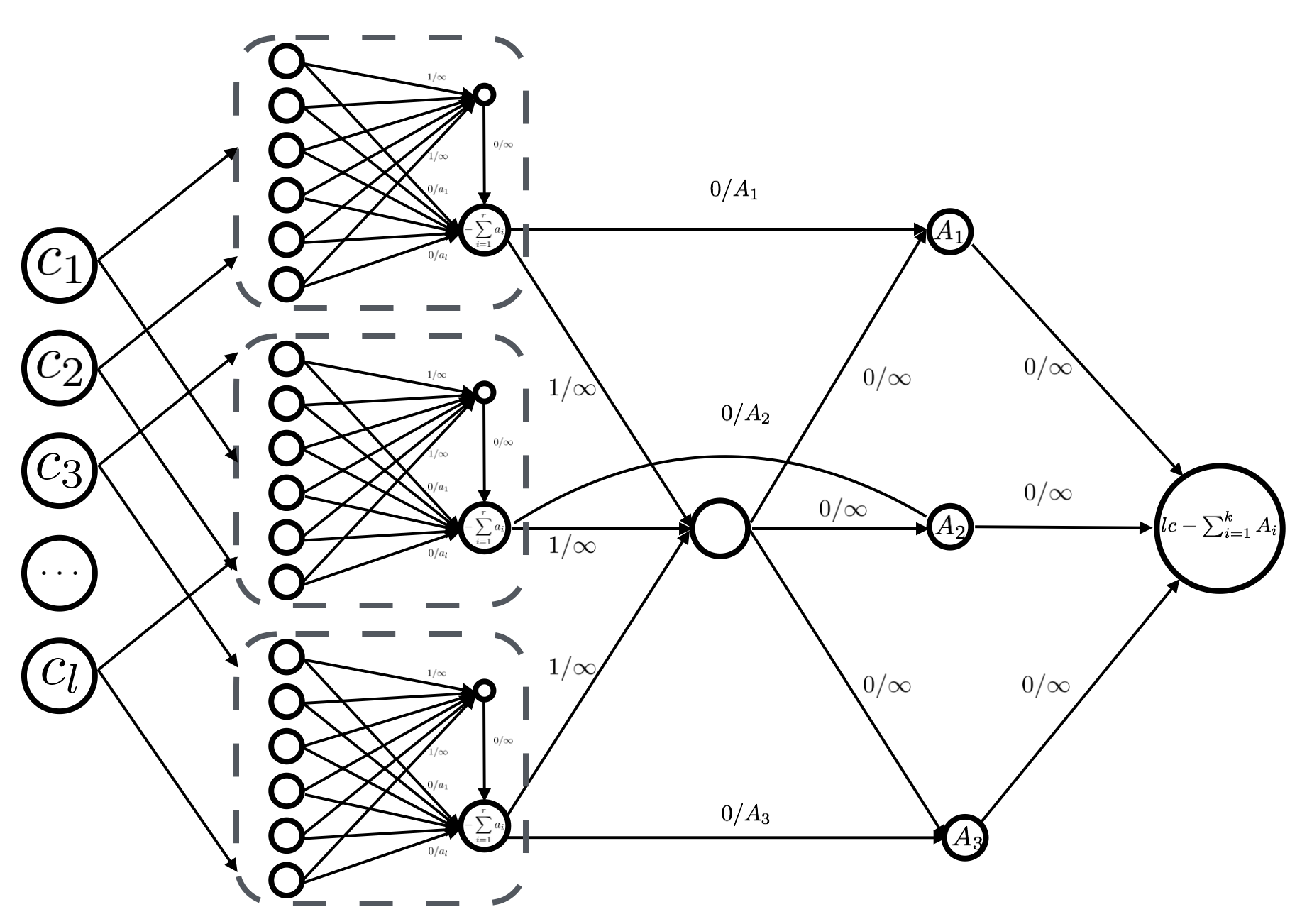}
\caption{The network flow model for category targets with nodes labelled with their supply and arcs labeled with their cost/capacity. The central node with no supply or demand is the distributor $s_1$. The rightmost node is the supersink $s_2$.}
\label{fig:catflownetwork}
\end{figure}

From the proof of Theorem 1, it follows that ignoring the cost of moving flow between nodes of type $t_2$ and $t_3$, the cost of the flow in this network is the discrepancy of the network from the target distribution. Therefore, we only need to account for the second term in our objective. The arc connecting $t_2^i$ and $t_3^i$ ensures that only $A_i$ units of flow can go uncharged towards the sink $t_3^i$. Every other unit of flow must first go to the distributor, then to $t_3^i$ to satisfy the demand of this node. Since the demand of this node is only $A_i$, we charge 1 unit of cost for every unit of flow we can't send to this category, but no cost for units of flow above the demand of this node. This gives us the term $\min(A_i-D_i,0)$ for the contribution of this category to the objective, and completes the proof.
\end{proof}
\fi

\subsection{Greedy Algorithm}
\label{subsection:Greedy2}
In this section we describe an alternative approach to solving the discrepancy minimization problem that does not require the use minimum cost flow solvers, which can be efficient in practice, but do not guarantee linear runtimes. Our greedy algorithm constructs the solution subgraph iteratively, making a discrepancy reducing recommendation whenever possible. If such an edge is not available, then we choose from all available recommendations. Our choice of recommendation is conditioned on the quality of this recommendation, as measured by our black-box relevance function $rel(u,v)$.

Since the greedy algorithm considers all discrepancy reducing recommendations for a user at the same time, a large number of candidate recommendations may lead to the greedy algorithm making subpar selections, since almost every recommendation we consider early on will likely be a discrepancy reducing edge. In order to moderate this effect, we include a parameter $q>1$ which we use to reweigh the relevance scores. The larger the $q$ is, the more our greedy algorithm prefers making relevant recommendations. On the other hand, if we pick a $q$ which is too large, then we overprioritize high relevance values, and the greedy algorithm effectively turns into the standard recommendation approach. To balance these concerns, we run the greedy algorithm with different settings of $q$, and select the solution with the highest predicted rating quality which has discrepancy at most 10\% higher than the best solution we generate.

\begin{algorithm}[h]
 \KwData{$G=(L,R,E)$, the graph of candidate recommendations, a target indegree distribution $\{a_v\}_{v=1}^r$, tuning parameter $q$, and a relevance function $rel:L\times R\to[0,1]$}
 \KwResult{Solution subgraph $H$}
 $H \longleftarrow \emptyset$\;
 \For{$j=1$ \KwTo $c$}{
  \ForEach{$u\in L$}{
   $D\longleftarrow \{v\in N_G(u): \deg_H(v) < a_v\}$\;
  \eIf{$D \not= \emptyset$}{
    Sample $e=(u,v)$ from $D$ with $p(e) \propto rel(u,v)^q$\;
   }{
    Sample $e=(u,v)$ from $N_G(u)$ with $p(e) \propto rel(u,v)^q$\;
  }
  $H \longleftarrow H \bigcup \{(u,v)\}$\;
 }
}
 \caption{The Greedy Algorithm for Discrepancy Minimization for a fixed value of $q$}
\end{algorithm} 

\section{Experiments}
\label{section:Chapter2Experiments}
In this section, we put our model to the test. Our findings are summarized below, and we discuss each point further in the following subsections.

\ifdefined\THESIS
\begin{enumerate}
\else
\begin{enumerate}[leftmargin=1em,labelwidth=*,align=left]
\fi
\item Our fast models perform well at optimizing for pre-existing notions of diversity such as aggregate diversity and the Gini index despite these measures not being explicitly referenced in our model. Conversely, we show that optimizing directly only for aggregate diversity (either by using heuristics or solving to optimality) does not yield results that are diverse by the other measures \ifdefined\THESIS(See Tables \ref{table:MLmetrics}, \ref{table:NFmetrics}, \ref{table:MLBigmetrics}).\else (See Table \ref{table:MLmetrics})\fi
\item Normalized discrepancy can often be reduced by more than 50\%, at the cost of only a 15-30\% change in average recommendation quality. Both the two-pass method, and the weighted method performed well in producing a smooth trade-off between recommendation quality and discrepancy reduction, with large gains in discrepancy being made for minimal recommendation quality loss (See Figure~\ref{fig:comp}). The two-pass method is optimal, but the weighted-model provides a good approximation of two-pass method's output with less computational overhead.
\item Sales diversity maximization problems become easier as the display constraints are relaxed since there are more opportunities for the system to make unconventional recommendations. We show that the advantage our optimization based approach has over competing approaches gets bigger as display constraints are tightened, which is desirable for applications on mobile platforms where screen real estate is scarce (see Figure ~\ref{fig:listlength}).
\item Using the uniform target distribution can lead the optimizer to pick subgraphs where degree constraints are violated by large margins at certain nodes. To remedy this, we advocate the use of target distributions that move towards resembling the underlying degree distribution rather than the uniform distribution (See Figure~\ref{fig:ConvexDegreeCDF}).
\end{enumerate}

\textbf{Experimental Setup and Datasets.} All of our experiments were conducted on a desktop computer with an Intel i5 processor clocked at 2.7GHz, and with 16GB of memory. We used three rating datasets to generate the graphs we fed to our flow solvers: MovieLens-1m, MovieLens-10m \cite{MLData} and the Netflix Prize dataset. 

We pre-process the datasets to ensure that every user and every item has an adequate amount of data on which to base predictions. This post processing leaves the MovieLens-1m data with 5800 users and 3600 items, the MovieLens-10m dataset with 67000 users and 9000 items, and the Netflix dataset with 8000 users and 5000 items. The use of these datasets is standard in the recommender systems literature. In this paper, we consider the rating data to be triples of the form $(user,item,rating)$, and discard any extra information.

\ifdefined\THESIS
\else
We will henceforth report results only using the MovieLens data due to space constraints since the results from the Netflix data are similar.
\fi
We used version 0.4.4 of the RankSys project to generate recommendations using standard collaborative filtering approaches~\cite{sandoval2015novelty} .The resulting network flow problems were optimized using a modified version of the MCFSimplex solver due to Bertolini and Frangioni~\cite{Frangioni2010MCFSimplex}. Our choice of MCFSimplex was motivated by its open-source status and efficiency, but any other minimum cost flow solver such as CPLEX or Gurobi which accepts flow problems in the standard DIMACS format can also be used by our algorithms. Our discrepancy minimization code is available at \url{}.


\textbf{Quality Evaluation.} To evaluate the quality of our method, we employ a modified version of $k$-fold cross-validation. In particular, for each user in our datasets who has an high enough number of observed ratings, we divide the rating set into 10 equal sized subsets, and place each subset in one of 10 test sets. When creating the test sets, we filter out the items which received a rating of 1 or 2 and keep the items which received a rating of 3 or higher in order to ensure the relevance of our selections. We then define the precision of our recommendation list to be the number of items we recommend among all of our top-$c$ recommendations which are also included in the test set. This provides an underestimate of the relevance of our recommendations, as there might be items which are relevant, but for which we have no record of the user liking. A simpler version of our hold-out method is utilized in other works  \cite{cooper2014random,vargas2014improving} where only a single random split is made. Using a $10$-fold split of the test data enables us to run a signed rank test, and test whether the improvements made by our algorithms are statistically significant.

Our methodology stands in contrast with the methodology used by Adomavicius and Kwon to evaluate the effectiveness of their aggregate diversity maximization framework \cite{adomavicius2011maximizing}. They use a metric called prediction-in-top-$c$, which measures the average predicted relevance of the $c$ recommended items for each user. We believe that using predicted ratings for relevance evaluation purposes is flawed since these predictions are approximate in the first place. Furthermore, using the relevance values used by the recommender make comparisons across different recommenders difficult, as each recommender has its own scale.

\textbf{Supergraph Generation.} All of our optimization problems require that a supergraph of candidate recommendations be given. For each dataset we used, we generated 240 supergraphs in total. This is the result of using 10 training sets, 4 different recommender approaches, and 6 different quality thresholds enforced by picking the top 50, 100, 200, 300, 400 and 500 recommendations for each user. We use $k$ to denote the number of candidate recommendations in the supergraph. The matrix factorization model we utilize~\cite{hu2008collaborative} comes with three parameter settings: a regularization parameter $\lambda$, a confidence parameter $\alpha$ and the number of latent factors \cite{hu2008collaborative}. 

While the authors report an $\alpha$ value of 40 is suitable for most applications, we set a lower value of $\alpha=30$ in order to obtain more diverse candidate recommendation lists. The regularization parameter $\lambda$ was tuned with cross-validation as recommended by the authors, and the model was trained with 50 latent factors.  For the neighborhood based methods, we consider neighborhoods of size 100 in both the item-based and user-based cases. For these recommenders we opted use the inverted neighborhood policy approach described in \cite{vargas2014improving} in order to obtain more diverse candidate recommendation lists. Instead of using the top 100 most similar items to an item $i$ as the neighborhood, this approach uses the items which have item $i$ in their top 100 neighborhoods. We also used Jaccard similarity in order to measure similarity between pairs of users and items in the neighborhood based methods.

The authors of the random walk recommender we implemented consider a parameter setting $\alpha$ which raises every element of the transition matrix to the power $\alpha$ and find that predictive accuracy is maximized for $\alpha=1.5$ \cite{cooper2014random}. Since they conduct their experimental validation on the same datasets as ours, we also use this parameter in our tests. In our tables, we shorten the names of these recommenders as MF for the matrix factorization model, IB and UB for the (item- and user-) neighborhood based approaches, and RW for the random walk approach.

\ifdefined\THESIS
\noindent\textbf{Trading Off Discrepancy and Rating Quality.} The higher we set the number of candidate recommendations per user, the larger the input graph $G$ will be, giving our algorithm more freedom to minimize discrepancy. On the other hand, in order to include more edges in $G$, we will have to resort to using more lower quality recommendations which will be reflected in our post-processed solution. Therefore, there is a trade-off between minimizing the discrepancy from the target distribution, and maintaining a high average recommendation quality. We will quantify this trade-off in our experiments and show that large discrepancy reductions can be made even with small compromises in recommendation quality.
\fi

\ifdefined\THESIS
\noindent\textbf{Post-Processing CF for a Diversified Recommender.} To summarize the discussion so far, we start with an arbitrary rating function which can predict the relevance of any item to a given user. We use this rating function to generate a large number of candidate recommendations which we encode in a weighted bipartite graph. We use this graph along with the designer-specified degree constraints on users and items to create a discrepancy minimization problem. Finally, we measure the quality of the resulting solution by comparing the predictive accuracy on the held-out test data.
\fi
\subsection{Comparison To Other Methods and Metrics}
\label{subsection:comparison}
In this section we compare our discrepancy minimization framework to other similar approaches. In particular, we test 6 different approaches to diversifying the recommendation lists.

\ifdefined\THESIS
\begin{itemize}
\else
\begin{itemize}[leftmargin=1em,labelwidth=*,align=left]
\fi
\item \textbf{Top(TOP)}: The standard method considers the unmodified output of the underlying recommender, and makes the top-k recommendations for each user. This is the undiversified solution but provides the highest rating quality.
\item \textbf{Two pass (GOL)}: The two-pass method first finds the lowest discrepancy value achievable with the given graph for the current target degrees, and then in a second pass, finds the highest rating solution which achieves this minimum.
\item \textbf{Aggregate Diversity (AGG)}: The aggregate diversity maximizing method is also optimized using our own flow-based framework, by running our min-cost flow algorithms with the setting of $a_i=1$ as described in Theorem~\ref{thm:aggdiv}.
\item \textbf{PC Reranking (PC), FD Reranking (FD) and Bayes Rule Reranking (AB)}: These diversifiers are due Vargas and Castells \cite{vargas2011rank,vargas2014improving}, and were discussed in detail in our related work section.
\item \textbf{Greedy (GRD)}: This is an implementation of the greedy heuristic described in Section~\ref{subsection:Greedy2}.
\end{itemize}

We evaluate these different approaches on the following metrics, all measured for the top-$n$ recommendation task on both the MovieLens and Netflix data.

\ifdefined\THESIS
\begin{itemize}
\else
\begin{itemize}[leftmargin=1em,labelwidth=*,align=left]
\fi
\item \textbf{D@n:} Discrepancy from the uniform distribution, normalized to fit in the $[0,1]$ range by dividing by the maximum discrepancy achievable, i.e. $2\sum_{i=1}^l c_i$. 
\item \textbf{A@n:} The fraction of items which received a recommendation.
\item \textbf{G@n:} The Gini index of the degree distribution of items. \ifdefined\THESIS If the degree distribution of the items is given as a sorted list $\{d_i\}_{i=1}^r$, then the Gini index is defined as follows:
\[ G = \frac{1}{r}\left( r+1-2\frac{\sum_{i=1}^r (r+1-i)d_i}{\sum_{i=1}^r d_i}\right) \]\fi
\item \textbf{E@n:} The entropy of the probability distribution formed by normalizing this degree distribution. \ifdefined\THESIS Given the same degree distribution as above, entropy is defined as follows:
\[ E = \sum_{i=1}^r -\frac{d_i}{\sum_{i=1}^r d_i} \log\left(\frac{d_i}{\sum_{i=1}^r d_i}\right)\]\fi
\item \textbf{P@n:} Precision, measured as the fraction of items in the recommendation list which are part of the test set.
\end{itemize}

\ifdefined\THESIS
Table~\ref{table:NFmetrics} summarizes our results for the Netflix dataset and Table~\ref{table:MLmetrics} does the same for the MovieLens-1m dataset.
\else
Table~\ref{table:MLmetrics} summarizes our results for the MovieLens-1m dataset. The results from the other datasets can be found in the full version of this paper.
\fi

\ifdefined\THESIS
\begin{sidewaystable}
\resizebox{\linewidth}{!}{%
\begin{tabular}{|c|r|ccccc?ccccc?ccccc|}
\cline{3-17}
     \multicolumn{2}{c|}{} & \multicolumn{5}{c?}{k=50} & \multicolumn{5}{c?}{k=250} & \multicolumn{5}{c|}{k=500}\\
\cline{3-17}
     \multicolumn{2}{c|}{} & P@10  & A@10  & G@10  & E@10  & D@10  & P@10  & A@10  & G@10  & E@10  & D@10 & P@10  & A@10  & G@10  & E@10  & D@10 \\
\hline
\multirow{7}{*}{MF}
& TOP & \textbf{0.433} & 0.263 & 0.924 & 5.974 & 0.823 & \textbf{0.433} & 0.263 & 0.924 & 5.974 & 0.823 & \textbf{0.433} & 0.263 & 0.924 & 5.974 & 0.823\\
& AGG & \emph{\textbf{0.433}} & \textbf{0.429} & 0.919 & 6.024 & 0.816 & \emph{0.432} & \textbf{0.651} & 0.906 & 6.094 & 0.804 & \emph{0.432} & \textbf{0.758} & 0.896 & 6.146 & 0.796\\
& FD & 0.340 & 0.385 & 0.826 & 6.819 & 0.712 & 0.316 & 0.478 & 0.792 & 7.000 & 0.658 & 0.330 & 0.528 & 0.802 & 6.952 & 0.659\\
& PC & 0.362 & 0.331 & 0.858 & 6.619 & 0.754 & 0.335 & 0.347 & 0.843 & 6.715 & 0.736 & 0.321 & 0.356 & 0.836 & 6.756 & 0.728\\
& AB & 0.257 & \textbf{0.429} & \textbf{0.767} & \textbf{7.081} & 0.661 & 0.167 & 0.649 & 0.648 & \textbf{7.504} & 0.503 & 0.204 & 0.735 & 0.660 & 7.472 & 0.507\\
& GOL & 0.414 & 0.423 & 0.855 & 6.451 & \textbf{0.658} & 0.358 & 0.639 & \textbf{0.646} & 7.189 & \textbf{0.414} & 0.331 & 0.754 & \textbf{0.483} & \textbf{7.581} & \textbf{0.282}\\
& GRD & 0.269 & 0.418 & 0.815 & 6.863 & 0.670 & 0.125 & 0.620 & 0.603 & 7.583 & 0.449 & 0.083 & 0.734 & 0.455 & 7.852 & 0.324\\
\hline

\multirow{7}{*}{IB}
& TOP & \textbf{0.410} & 0.257 & 0.953 & 5.450 & 0.861 & \textbf{0.410} & 0.257 & 0.953 & 5.450 & 0.861 & \textbf{0.410} & 0.257 & 0.953 & 5.450 & 0.861\\
& AGG & \emph{0.409} & \textbf{0.431} & 0.947 & 5.496 & 0.855 & \emph{0.409} & \textbf{0.653} & 0.933 & 5.573 & 0.843 & \emph{0.409} & \textbf{0.813} & 0.918 & 5.640 & 0.833\\
& FD & 0.366 & 0.352 & 0.906 & 6.202 & 0.795 & 0.321 & 0.465 & 0.842 & 6.730 & 0.708 & 0.296 & 0.633 & 0.781 & 7.035 & 0.622\\
& PC & 0.368 & 0.317 & 0.921 & 6.028 & 0.819 & 0.338 & 0.332 & 0.902 & 6.261 & 0.796 & 0.323 & 0.348 & 0.889 & 6.383 & 0.779\\
& AB & 0.282 & 0.429 & \textbf{0.861} & \textbf{6.610} & 0.742 & 0.196 & 0.641 & \textbf{0.735} & \textbf{7.231} & 0.591 & 0.214 & 0.782 & 0.695 & \textbf{7.299} & 0.528\\
& GOL & \emph{0.404} & 0.380 & 0.911 & 5.834 & \textbf{0.735} & 0.380 & 0.573 & 0.763 & 6.611 & \textbf{0.524} & 0.358 & 0.733 & \textbf{0.605} & 7.160 & \textbf{0.375}\\
& GRD & 0.258 & 0.383 & 0.898 & 6.215 & 0.753 & 0.135 & 0.556 & 0.747 & 7.163 & 0.568 & 0.090 & 0.695 & 0.610 & 7.577 & 0.437\\
\hline

\multirow{7}{*}{UB}
& TOP & \textbf{0.412} & 0.146 & 0.971 & 5.002 & 0.903 & \textbf{0.412} & 0.146 & 0.971 & 5.002 & 0.903 & \textbf{0.412} & 0.146 & 0.971 & 5.002 & 0.903\\
& AGG & \emph{\textbf{0.412}} & \textbf{0.308} & 0.967 & 5.062 & 0.895 & \emph{\textbf{0.412}} & \textbf{0.585} & 0.952 & 5.176 & 0.878 & \emph{\textbf{0.412}} & \textbf{0.777} & 0.936 & 5.262 & 0.866\\
& FD & 0.380 & 0.282 & 0.902 & 6.242 & 0.797 & 0.335 & 0.498 & 0.840 & 6.674 & 0.686 & 0.333 & 0.619 & 0.833 & 6.630 & 0.665\\
& PC & 0.391 & 0.240 & 0.922 & 6.029 & 0.833 & 0.360 & 0.275 & 0.903 & 6.252 & 0.809 & 0.350 & 0.279 & 0.898 & 6.300 & 0.802\\
& AB & 0.291 & 0.307 & \textbf{0.861} & \textbf{6.592} & 0.761 & 0.195 & 0.566 & 0.747 & \textbf{7.158} & 0.593 & 0.210 & 0.696 & 0.703 & 7.240 & 0.528\\
& GOL & \textbf{0.412} & 0.306 & 0.928 & 5.530 & \textbf{0.760} & 0.382 & 0.582 & \textbf{0.732} & 6.618 & \textbf{0.486} & 0.342 & 0.772 & \textbf{0.507} & \textbf{7.358} & \textbf{0.297}\\
& GRD & 0.265 & 0.303 & 0.900 & 6.210 & 0.766 & 0.136 & 0.566 & 0.703 & 7.302 & 0.519 & 0.091 & 0.746 & 0.519 & 7.757 & 0.353\\
\hline

\multirow{7}{*}{RW}
& TOP & 0.303 & 0.072 & 0.992 & 3.710 & 0.970 & 0.303 & 0.072 & 0.992 & 3.710 & 0.970 & 0.303 & 0.072 & 0.992 & 3.710 & 0.970\\
& AGG & \emph{0.304} & \textbf{0.262} & 0.989 & 3.786 & 0.960 & \emph{0.302} & \textbf{0.515} & 0.976 & 3.909 & 0.943 & 0.297 & \textbf{0.670} & 0.967 & 3.965 & 0.936\\
& FD & 0.343 & 0.232 & 0.964 & 5.203 & 0.898 & 0.289 & 0.466 & 0.911 & 5.968 & 0.780 & 0.284 & 0.523 & 0.908 & 5.792 & 0.761\\
& PC & \textbf{0.351} & 0.205 & 0.967 & 5.147 & 0.905 & \textbf{0.339} & 0.364 & 0.933 & 5.819 & 0.829 & \textbf{0.348} & 0.404 & 0.929 & 5.842 & 0.819\\
& AB & 0.287 & 0.261 & \textbf{0.957} & \textbf{5.409} & 0.889 & 0.209 & 0.488 & \textbf{0.881} & \textbf{6.429} & 0.766 & 0.183 & 0.550 & 0.846 & \textbf{6.625} & 0.705\\
& GOL & \emph{0.319} & 0.228 & 0.980 & 4.122 & \textbf{0.887} & \emph{0.302} & 0.489 & 0.896 & 5.177 & \textbf{0.689} & 0.248 & 0.633 & \textbf{0.780} & 5.980 & \textbf{0.536}\\
& GRD & 0.213 & 0.238 & 0.966 & 5.129 & 0.889 & 0.130 & 0.478 & 0.856 & 6.627 & 0.708 & 0.094 & 0.613 & 0.743 & 7.206 & 0.574\\
\hline
\end{tabular}
}
\caption{Comparison of different diversifiers systems on various diversification metrics for the 10 item recommendation task. The underlying dataset is the MovieLens-1m dataset and the candidate recommendations were generated using Matrix Factorization (MF), Item-Based (IB), User-Based (UB), and Random Walk Recommenders (RW). The bolded entries show the best values in each metric (ignoring the greedy algorithm), and italicized values show a statistically insignificant change from the baseline with $p<0.01$.}
\label{table:MLmetrics}
\end{sidewaystable}

\begin{sidewaystable}
\centering
\resizebox{\linewidth}{!}{%
\begin{tabular}{|c|r|ccccc?ccccc?ccccc|}
\cline{3-17}
     \multicolumn{2}{c|}{} & \multicolumn{5}{c?}{k=50} & \multicolumn{5}{c?}{k=250} & \multicolumn{5}{c|}{k=500}\\
\cline{3-17}
     \multicolumn{2}{c|}{} & P@10  & A@10  & G@10  & E@10  & D@10  & P@10  & A@10  & G@10  & E@10  & D@10 & P@10  & A@10  & G@10  & E@10  & D@10 \\
\hline
\multirow{7}{*}{MF}
& TOP & \textbf{0.734} & 0.364 & 0.893 & 6.311 & 0.771 & \textbf{0.734} & 0.364 & 0.893 & 6.311 & 0.771 & \textbf{0.734} & 0.364 & 0.893 & 6.311 & 0.771\\
& AGG & 0.733 & \textbf{0.515} & 0.888 & 6.351 & 0.765 & 0.730 & \textbf{0.743} & 0.873 & 6.426 & 0.754 & 0.726 & \textbf{0.870} & 0.863 & 6.459 & 0.748\\
& FD & 0.568 & 0.476 & 0.803 & 6.952 & 0.665 & 0.448 & 0.571 & 0.757 & 7.162 & 0.607 & 0.442 & 0.634 & 0.736 & 7.246 & 0.583\\
& PC & 0.589 & 0.432 & 0.826 & 6.829 & 0.695 & 0.510 & 0.442 & 0.818 & 6.874 & 0.685 & 0.475 & 0.453 & 0.815 & 6.889 & 0.681\\
& AB & 0.472 & 0.494 & \textbf{0.765} & \textbf{7.122} & 0.649 & 0.215 & 0.701 & 0.665 & \textbf{7.480} & 0.524 & 0.200 & 0.836 & 0.625 & 7.594 & 0.475\\
& GOL & 0.694 & 0.503 & 0.812 & 6.759 & \textbf{0.611} & 0.554 & 0.722 & \textbf{0.578} & 7.452 & \textbf{0.360} & 0.440 & 0.860 & \textbf{0.359} & \textbf{7.870} & \textbf{0.199}\\
& GRD & 0.525 & 0.495 & 0.785 & 7.031 & 0.626 & 0.243 & 0.697 & 0.566 & 7.688 & 0.411 & 0.150 & 0.826 & 0.388 & 7.975 & 0.267\\
\hline

\multirow{7}{*}{IB}
& TOP & 0.681 & 0.079 & 0.984 & 4.436 & 0.954 & 0.681 & 0.079 & 0.984 & 4.436 & 0.954 & \textbf{0.681} & 0.079 & 0.984 & 4.436 & 0.954\\
& AGG & \emph{0.682} & \textbf{0.133} & 0.984 & 4.455 & 0.952 & \textbf{0.682} & 0.216 & 0.982 & 4.485 & 0.948 & \textbf{0.681} & 0.219 & 0.982 & 4.486 & 0.948\\
& FD & 0.650 & 0.127 & \textbf{0.948} & \textbf{5.584} & 0.912 & 0.454 & 0.265 & \textbf{0.881} & \textbf{6.437} & 0.799 & 0.281 & 0.476 & 0.842 & \textbf{6.712} & 0.697\\
& PC & 0.650 & 0.124 & 0.950 & 5.547 & 0.913 & 0.532 & 0.212 & 0.912 & 6.143 & 0.843 & 0.474 & 0.303 & 0.897 & 6.321 & 0.808\\
& AB & 0.532 & 0.125 & 0.967 & 5.177 & 0.924 & 0.449 & 0.196 & 0.950 & 5.586 & 0.885 & 0.484 & 0.329 & 0.927 & 5.751 & 0.811\\
& GOL & \textbf{0.685} & 0.130 & 0.979 & 4.635 & \textbf{0.908} & 0.670 & \textbf{0.276} & 0.943 & 5.207 & \textbf{0.782} & 0.620 & \textbf{0.488} & \textbf{0.834} & 6.041 & \textbf{0.600}\\
& GRD & 0.476 & 0.130 & 0.955 & 5.460 & 0.909 & 0.263 & 0.272 & 0.865 & 6.530 & 0.786 & 0.201 & 0.480 & 0.750 & 7.167 & 0.621\\
\hline

\multirow{7}{*}{UB}
& TOP & 0.701 & 0.101 & 0.976 & 4.863 & 0.931 & 0.701 & 0.101 & 0.976 & 4.863 & 0.931 & 0.701 & 0.101 & 0.976 & 4.863 & 0.931\\
& AGG & \emph{0.702} & \textbf{0.192} & 0.975 & 4.896 & 0.926 & \emph{\textbf{0.703}} & \textbf{0.433} & 0.974 & 4.915 & 0.924 & \emph{\textbf{0.702}} & \textbf{0.652} & 0.974 & 4.915 & 0.924\\
& FD & 0.631 & 0.190 & 0.936 & 5.830 & 0.860 & 0.456 & 0.385 & 0.868 & 6.543 & 0.740 & 0.380 & 0.552 & 0.835 & 6.721 & 0.677\\
& PC & 0.632 & 0.186 & 0.938 & 5.805 & 0.864 & 0.538 & 0.282 & 0.902 & 6.262 & 0.807 & 0.506 & 0.302 & 0.894 & 6.339 & 0.792\\
& AB & 0.561 & 0.184 & \textbf{0.933} & \textbf{5.888} & 0.867 & 0.380 & 0.367 & \textbf{0.866} & \textbf{6.565} & 0.752 & 0.385 & 0.505 & 0.841 & 6.620 & 0.685\\
& GOL & \emph{\textbf{0.705}} & \textbf{0.192} & 0.961 & 5.183 & \textbf{0.851} & 0.658 & 0.419 & 0.873 & 5.963 & \textbf{0.659} & 0.563 & 0.630 & \textbf{0.710} & \textbf{6.728} & \textbf{0.466}\\
& GRD & 0.522 & 0.192 & 0.933 & 5.897 & 0.851 & 0.284 & 0.409 & 0.804 & 6.944 & 0.671 & 0.193 & 0.602 & 0.656 & 7.477 & 0.500\\
\hline

\multirow{7}{*}{RW}
& TOP & 0.615 & 0.027 & 0.990 & 3.901 & 0.979 & \textbf{0.615} & 0.027 & 0.990 & 3.901 & 0.979 & \textbf{0.615} & 0.027 & 0.990 & 3.901 & 0.979\\
& AGG & 0.615 & \textbf{0.069} & 0.990 & 3.914 & 0.977 & \textbf{0.615} & \textbf{0.194} & 0.988 & 3.960 & 0.971 & 0.612 & \textbf{0.329} & 0.985 & 4.008 & 0.963\\
& FD & \emph{0.659} & 0.068 & \textbf{0.966} & 5.156 & \textbf{0.941} & 0.496 & 0.154 & 0.954 & 5.486 & 0.889 & 0.489 & 0.187 & 0.958 & 5.327 & 0.877\\
& PC & \textbf{0.661} & 0.067 & \textbf{0.966} & \textbf{5.157} & 0.942 & 0.544 & 0.179 & \textbf{0.930} & \textbf{5.928} & 0.857 & 0.530 & 0.267 & 0.909 & \textbf{6.190} & 0.809\\
& AB & 0.525 & 0.055 & 0.980 & 4.652 & 0.957 & 0.462 & 0.094 & 0.972 & 4.955 & 0.928 & 0.491 & 0.137 & 0.970 & 4.898 & 0.901\\
& GOL & \emph{0.622} & \textbf{0.069} & 0.988 & 4.042 & \textbf{0.941} & 0.611 & 0.193 & 0.963 & 4.529 & \textbf{0.823} & 0.568 & 0.327 & \textbf{0.908} & 5.125 & \textbf{0.703}\\
& GRD & 0.439 & 0.068 & 0.965 & 5.158 & 0.941 & 0.249 & 0.191 & 0.880 & 6.366 & 0.825 & 0.182 & 0.319 & 0.797 & 6.907 & 0.713\\
\hline
\end{tabular}
}
\caption{Comparison of different diversifiers systems on various diversification metrics for the 10 item recommendation task. The underlying dataset is the Netflix dataset and the candidate recommendations were generated using Matrix Factorization (MF), Item-Based (IB), User-Based (UB), and Random Walk Recommenders (RW). The bolded entries show the best values in each metric (ignoring the greedy algorithm), and italicized values show a statistically insignificant change from the baseline with $p<0.01$.}
\label{table:NFmetrics}
\end{sidewaystable}

\begin{sidewaystable}
\centering
\resizebox{\linewidth}{!}{%
\begin{tabular}{|c|r|ccccc?ccccc?ccccc|}
\cline{3-17}
     \multicolumn{2}{c|}{} & \multicolumn{5}{c?}{k=50} & \multicolumn{5}{c?}{k=250} & \multicolumn{5}{c|}{k=500}\\
\cline{3-17}
     \multicolumn{2}{c|}{} & P@10  & A@10  & G@10  & E@10  & D@10  & P@10  & A@10  & G@10  & E@10  & D@10 & P@10  & A@10  & G@10  & E@10  & D@10 \\
\hline
\multirow{7}{*}{MF}
& TOP & \textbf{0.451} & 0.817 & 0.959 & 6.063 & 0.901 & \textbf{0.451} & 0.817 & 0.959 & 6.063 & 0.901 & \textbf{0.451} & 0.817 & 0.959 & 6.063 & 0.901\\
& AGG & \emph{\textbf{0.451}} & \textbf{0.849} & 0.955 & 6.094 & 0.881 & 0.447 & \textbf{0.890} & 0.947 & 6.124 & 0.873 & 0.446 & \textbf{0.913} & 0.939 & 6.147 & 0.868\\
& FD & 0.361 & 0.831 & 0.925 & 6.881 & 0.852 & 0.355 & 0.840 & 0.911 & 7.069 & 0.820 & 0.374 & 0.849 & 0.921 & 6.912 & 0.828\\
& PC & 0.382 & 0.823 & 0.939 & 6.639 & 0.876 & 0.353 & 0.823 & 0.933 & 6.747 & 0.868 & 0.341 & 0.819 & 0.930 & 6.798 & 0.864\\
& AB & 0.267 & 0.837 & \textbf{0.899} & \textbf{7.213} & 0.829 & 0.188 & 0.867 & \textbf{0.837} & \textbf{7.681} & 0.710 & 0.205 & 0.887 & 0.834 & \textbf{7.655} & 0.682\\
& GOL & \emph{0.439} & 0.835 & 0.919 & 6.486 & \textbf{0.822} & 0.417 & 0.851 & 0.853 & 7.224 & \textbf{0.645} & 0.382 & 0.860 & \textbf{0.792} & 7.541 & \textbf{0.556}\\
& GRD & 0.275 & 0.834 & 0.926 & 6.837 & 0.832 & 0.128 & 0.859 & 0.847 & 7.653 & 0.707 & 0.087 & 0.880 & 0.775 & 8.041 & 0.617\\
\hline

\multirow{7}{*}{IB}
& TOP & \textbf{0.388} & 0.828 & 0.968 & 5.543 & 0.910 & \textbf{0.388} & 0.828 & 0.968 & 5.543 & 0.910 & \textbf{0.388} & 0.828 & 0.968 & 5.543 & 0.910\\
& AGG & \emph{0.387} & \textbf{0.876} & 0.966 & 5.615 & 0.905 & \emph{0.386} & \textbf{0.932} & 0.959 & 5.654 & 0.903 & \emph{0.386} & \textbf{0.970} & 0.951 & 5.684 & 0.899\\
& FD & 0.359 & 0.844 & 0.953 & 6.160 & 0.879 & 0.338 & 0.873 & 0.920 & 6.842 & 0.812 & 0.306 & 0.911 & 0.898 & 6.938 & 0.759\\
& PC & 0.358 & 0.838 & 0.959 & 5.972 & 0.891 & 0.331 & 0.842 & 0.953 & 6.198 & 0.882 & 0.321 & 0.847 & 0.949 & 6.306 & 0.876\\
& AB & 0.275 & 0.859 & \textbf{0.939} & \textbf{6.549} & 0.856 & 0.232 & 0.915 & \textbf{0.866} & \textbf{7.412} & 0.745 & 0.231 & 0.960 & 0.823 & \textbf{7.557} & 0.661\\
& GOL & 0.362 & 0.844 & 0.941 & 5.934 & \textbf{0.853} & 0.354 & 0.901 & 0.897 & 6.873 & \textbf{0.659} & 0.335 & 0.949 & \textbf{0.820} & 7.301 & \textbf{0.580}\\
& GRD & 0.239 & 0.844 & 0.950 & 6.257 & 0.863 & 0.131 & 0.880 & 0.892 & 7.206 & 0.749 & 0.095 & 0.915 & 0.819 & 7.744 & 0.640\\
\hline

\multirow{7}{*}{UB}
& TOP & \textbf{0.440} & 0.802 & 0.970 & 5.483 & 0.919 & \textbf{0.440} & 0.802 & 0.970 & 5.483 & 0.919 & \textbf{0.440} & 0.802 & 0.970 & 5.483 & 0.919\\
& AGG & \emph{\textbf{0.440}} & \textbf{0.853} & 0.968 & 5.619 & 0.919 & \emph{0.439} & \textbf{0.927} & 0.954 & 5.681 & 0.909 & 0.437 & \textbf{0.978} & 0.941 & 5.732 & 0.893\\
& FD & 0.387 & 0.828 & 0.938 & 6.596 & 0.851 & 0.344 & 0.904 & 0.886 & 7.166 & 0.749 & 0.337 & 0.934 & 0.867 & 7.201 & 0.718\\
& PC & 0.399 & 0.817 & 0.951 & 6.308 & 0.879 & 0.373 & 0.838 & 0.943 & 6.494 & 0.867 & 0.365 & 0.832 & 0.942 & 6.539 & 0.864\\
& AB & 0.317 & 0.832 & \textbf{0.922} & \textbf{6.918} & 0.834 & 0.251 & 0.912 & \textbf{0.827} & \textbf{7.594} & 0.666 & 0.237 & 0.951 & 0.737 & \textbf{7.899} & 0.548\\
& GOL & 0.412 & 0.842 & 0.930 & 6.145 & \textbf{0.833} & 0.398 & 0.916 & 0.845 & 7.185 & \textbf{0.592} & 0.372 & 0.939 & \textbf{0.723} & 7.621 & \textbf{0.467}\\
& GRD & 0.279 & 0.831 & 0.940 & 6.472 & 0.835 & 0.154 & 0.907 & 0.836 & 7.553 & 0.652 & 0.115 & 0.942 & 0.716 & 8.087 & 0.509\\
\hline
\end{tabular}
}
\caption{Comparison of different diversifiers systems on various diversification metrics for the 10 item recommendation task. The underlying dataset is the MovieLens-10m dataset and the candidate recommendations were generated using Matrix Factorization (MF), Item-Based (IB), User-Based (UB) recommenders. The bolded entries show the best values in each metric (ignoring the greedy algorithm), and italicized values show a statistically insignificant change from the baseline with $p<0.01$.}
\label{table:MLBigmetrics}
\end{sidewaystable}

\else

\begin{table*}
\resizebox{\textwidth}{!}{%
\begin{tabular}{|c|r|ccccc?ccccc?ccccc|}
\cline{3-17}
     \multicolumn{2}{c|}{} & \multicolumn{5}{c?}{k=50} & \multicolumn{5}{c?}{k=250} & \multicolumn{5}{c|}{k=500}\\
\cline{3-17}
     \multicolumn{2}{c|}{} & P@10  & A@10  & G@10  & E@10  & D@10  & P@10  & A@10  & G@10  & E@10  & D@10 & P@10  & A@10  & G@10  & E@10  & D@10 \\
\hline
\multirow{7}{*}{MF}
& TOP & \textbf{0.433} & 0.263 & 0.924 & 5.974 & 0.823 & \textbf{0.433} & 0.263 & 0.924 & 5.974 & 0.823 & \textbf{0.433} & 0.263 & 0.924 & 5.974 & 0.823\\
& AGG & \emph{\textbf{0.433}} & \textbf{0.429} & 0.919 & 6.024 & 0.816 & \emph{0.432} & \textbf{0.651} & 0.906 & 6.094 & 0.804 & \emph{0.432} & \textbf{0.758} & 0.896 & 6.146 & 0.796\\
& FD & 0.340 & 0.385 & 0.826 & 6.819 & 0.712 & 0.316 & 0.478 & 0.792 & 7.000 & 0.658 & 0.330 & 0.528 & 0.802 & 6.952 & 0.659\\
& PC & 0.362 & 0.331 & 0.858 & 6.619 & 0.754 & 0.335 & 0.347 & 0.843 & 6.715 & 0.736 & 0.321 & 0.356 & 0.836 & 6.756 & 0.728\\
& AB & 0.257 & \textbf{0.429} & \textbf{0.767} & \textbf{7.081} & 0.661 & 0.167 & 0.649 & 0.648 & \textbf{7.504} & 0.503 & 0.204 & 0.735 & 0.660 & 7.472 & 0.507\\
& GOL & 0.414 & 0.423 & 0.855 & 6.451 & \textbf{0.658} & 0.358 & 0.639 & \textbf{0.646} & 7.189 & \textbf{0.414} & 0.331 & 0.754 & \textbf{0.483} & \textbf{7.581} & \textbf{0.282}\\
& GRD & 0.269 & 0.418 & 0.815 & 6.863 & 0.670 & 0.125 & 0.620 & 0.603 & 7.583 & 0.449 & 0.083 & 0.734 & 0.455 & 7.852 & 0.324\\
\hline

\multirow{7}{*}{IB}
& TOP & \textbf{0.410} & 0.257 & 0.953 & 5.450 & 0.861 & \textbf{0.410} & 0.257 & 0.953 & 5.450 & 0.861 & \textbf{0.410} & 0.257 & 0.953 & 5.450 & 0.861\\
& AGG & \emph{0.409} & \textbf{0.431} & 0.947 & 5.496 & 0.855 & \emph{0.409} & \textbf{0.653} & 0.933 & 5.573 & 0.843 & \emph{0.409} & \textbf{0.813} & 0.918 & 5.640 & 0.833\\
& FD & 0.366 & 0.352 & 0.906 & 6.202 & 0.795 & 0.321 & 0.465 & 0.842 & 6.730 & 0.708 & 0.296 & 0.633 & 0.781 & 7.035 & 0.622\\
& PC & 0.368 & 0.317 & 0.921 & 6.028 & 0.819 & 0.338 & 0.332 & 0.902 & 6.261 & 0.796 & 0.323 & 0.348 & 0.889 & 6.383 & 0.779\\
& AB & 0.282 & 0.429 & \textbf{0.861} & \textbf{6.610} & 0.742 & 0.196 & 0.641 & \textbf{0.735} & \textbf{7.231} & 0.591 & 0.214 & 0.782 & 0.695 & \textbf{7.299} & 0.528\\
& GOL & \emph{0.404} & 0.380 & 0.911 & 5.834 & \textbf{0.735} & 0.380 & 0.573 & 0.763 & 6.611 & \textbf{0.524} & 0.358 & 0.733 & \textbf{0.605} & 7.160 & \textbf{0.375}\\
& GRD & 0.258 & 0.383 & 0.898 & 6.215 & 0.753 & 0.135 & 0.556 & 0.747 & 7.163 & 0.568 & 0.090 & 0.695 & 0.610 & 7.577 & 0.437\\
\hline

\multirow{7}{*}{UB}
& TOP & \textbf{0.412} & 0.146 & 0.971 & 5.002 & 0.903 & \textbf{0.412} & 0.146 & 0.971 & 5.002 & 0.903 & \textbf{0.412} & 0.146 & 0.971 & 5.002 & 0.903\\
& AGG & \emph{\textbf{0.412}} & \textbf{0.308} & 0.967 & 5.062 & 0.895 & \emph{\textbf{0.412}} & \textbf{0.585} & 0.952 & 5.176 & 0.878 & \emph{\textbf{0.412}} & \textbf{0.777} & 0.936 & 5.262 & 0.866\\
& FD & 0.380 & 0.282 & 0.902 & 6.242 & 0.797 & 0.335 & 0.498 & 0.840 & 6.674 & 0.686 & 0.333 & 0.619 & 0.833 & 6.630 & 0.665\\
& PC & 0.391 & 0.240 & 0.922 & 6.029 & 0.833 & 0.360 & 0.275 & 0.903 & 6.252 & 0.809 & 0.350 & 0.279 & 0.898 & 6.300 & 0.802\\
& AB & 0.291 & 0.307 & \textbf{0.861} & \textbf{6.592} & 0.761 & 0.195 & 0.566 & 0.747 & \textbf{7.158} & 0.593 & 0.210 & 0.696 & 0.703 & 7.240 & 0.528\\
& GOL & \textbf{0.412} & 0.306 & 0.928 & 5.530 & \textbf{0.760} & 0.382 & 0.582 & \textbf{0.732} & 6.618 & \textbf{0.486} & 0.342 & 0.772 & \textbf{0.507} & \textbf{7.358} & \textbf{0.297}\\
& GRD & 0.265 & 0.303 & 0.900 & 6.210 & 0.766 & 0.136 & 0.566 & 0.703 & 7.302 & 0.519 & 0.091 & 0.746 & 0.519 & 7.757 & 0.353\\
\hline

\multirow{7}{*}{RW}
& TOP & 0.303 & 0.072 & 0.992 & 3.710 & 0.970 & 0.303 & 0.072 & 0.992 & 3.710 & 0.970 & 0.303 & 0.072 & 0.992 & 3.710 & 0.970\\
& AGG & \emph{0.304} & \textbf{0.262} & 0.989 & 3.786 & 0.960 & \emph{0.302} & \textbf{0.515} & 0.976 & 3.909 & 0.943 & 0.297 & \textbf{0.670} & 0.967 & 3.965 & 0.936\\
& FD & 0.343 & 0.232 & 0.964 & 5.203 & 0.898 & 0.289 & 0.466 & 0.911 & 5.968 & 0.780 & 0.284 & 0.523 & 0.908 & 5.792 & 0.761\\
& PC & \textbf{0.351} & 0.205 & 0.967 & 5.147 & 0.905 & \textbf{0.339} & 0.364 & 0.933 & 5.819 & 0.829 & \textbf{0.348} & 0.404 & 0.929 & 5.842 & 0.819\\
& AB & 0.287 & 0.261 & \textbf{0.957} & \textbf{5.409} & 0.889 & 0.209 & 0.488 & \textbf{0.881} & \textbf{6.429} & 0.766 & 0.183 & 0.550 & 0.846 & \textbf{6.625} & 0.705\\
& GOL & \emph{0.319} & 0.228 & 0.980 & 4.122 & \textbf{0.887} & \emph{0.302} & 0.489 & 0.896 & 5.177 & \textbf{0.689} & 0.248 & 0.633 & \textbf{0.780} & 5.980 & \textbf{0.536}\\
& GRD & 0.213 & 0.238 & 0.966 & 5.129 & 0.889 & 0.130 & 0.478 & 0.856 & 6.627 & 0.708 & 0.094 & 0.613 & 0.743 & 7.206 & 0.574\\
\hline
\end{tabular}
}
\caption{Comparison of different diversifiers systems on various diversification metrics for the 10 item recommendation task. The underlying dataset is the MovieLens-1m dataset and the candidate recommendations were generated using Matrix Factorization (MF), Item-Based (IB), User-Based (UB), and Random Walk Recommenders (RW). The bolded entries show the best values in each metric (ignoring the greedy algorithm), and italicized values show a statistically insignificant change from the baseline with $p<0.01$.}
\label{table:MLmetrics}
\vspace{-.8cm}
\end{table*}
\fi

\ifdefined\THESIS We start our discussion with the results on the two medium sized datasets. \fi The first thing to notice in \ifdefined\THESIS these tables \else this table \fi is that the undiversified recommendation lists perform very poorly with respect to all distributional measures. This is true with respect to even the simplest measure, aggregate diversity. The Random Walk Recommender in particular surfaces only \ifdefined\THESIS 3\% of the items in the Netflix catalog and \fi 7\% of the items in the MovieLens catalog. Other recommenders do not do particularly better, and only surface 15-20\% of items to the users via top-10 recommendation lists.

\ifdefined\THESIS
Interestingly, the standard recommendation approach is not always the most precise among the methods we tested. Both our discrepancy minimization framework, and the PC and FD reranking methods beat the prevision of the baseline recommendation approach in multiple instances, most notably for the Random Walk Recommender. This observation may seem paradoxical at first, but was in fact noted by other authors in multiple different settings \cite{vargas2014improving}. The Random Walk Recommender is notable among the methods we tested for being the recommendation approach with the highest amount of popularity bias. Therefore, our results lend support to the idea that recommender systems benefit from being optimized for metrics other than accuracy and popularity.
\fi

We also note that an optimization based approach using aggregate diversity as the objective function is unlikely to make a large difference in the degree distribution of the underlying recommender system. Indeed, for all the recommenders and for both of the datasets, the aggregate diversity maximization approach only makes significant improvements with respect to its own metric. This is to be expected from a crude measure of system-level diversity such as aggregate diversity, as it is possible to have a very lopsided recommendation distribution even while achieving full coverage of the item catalog. \ifdefined\THESIS As an extreme example, consider a recommender which seeks to make 2 recommendations per user, and suppose there are as many users as there are items. If the recommender recommends one unique item to each users, and one of the items to every user, aggregate diversity will be maximized. However, the Gini index will be high and the entropy of the degree distribution will be low. \fi Aggregate diversity is still a valuable measure of the coverage properties of the recommender, as our baseline approaches do not achieve good values for even this simple metric. However, all of the other approaches we tested also make significant gains in aggregate diversity while also significantly impacting the other metrics at the same time. In particular, the Bayes Rule reranking approach and our own discrepancy minimization almost maximize aggregate diversity on the MovieLens-1m and Netflix datasets respectively. Therefore, we believe that the best approach to aggregate diversity maximization is to optimize it by proxy, by optimizing for a more refined measure of recommendation diversity.

As expected, our method performs the best with respect to minimizing discrepancy from the uniform distribution. However, the other methods we tested also make gains towards minimizing discrepancy, which lends credibility to the use of discrepancy from the uniform distribution as a metric for evaluating the system-wide diversity of recommendations.

Among the methods we tested, the most aggressive diversifier is the Bayes Rule reranker, which obtains the best scores with respect to every metric, in almost every instance. Among the metrics we tested, our method is best suited for optimizing for aggregate diversity, the Gini index, and entropy of the degree distribution, in that order. Even though the two-pass method does not always obtain the best diversity improvements, it does so at a very low predictive cost. Furthermore, as more edges are included in the list of permissible recommendations (the second column of \ifdefined\THESIS Tables \ref{table:MLmetrics},\ref{table:NFmetrics},\ref{table:MLBigmetrics}\else Table \ref{table:MLmetrics}\fi), the advantage of the Bayes Rule reranker almost vanishes with respect to the Gini and aggregate diversity metrics, despite our recommender surfacing between 30\% to 120\% more relevant recommendations. The predictive accuracy of the \PC and \FD reranking methods fall off at a slower rate than the Bayes Rule reranking method, but our method dominates those two methods in every metric except for entropy. When 500 candidate recommendations are included for each user, our recommender creates solutions with better diversity values and better precision values than all the other recommenders we tested, even when they were supplied with more accurate and shorter candidate recommendation lists.

Finally, we comment on the performance of the greedy discrepancy reducing technique. With a relatively small number of candidate recommendations per user like $k=50$, the greedy algorithm can strike a good balance between discrepancy reduction and maintaining the relevance of recommendations. However, in this regime, the greedy diversifier is matched on all diversity metrics by the corresponding reranking based techniques, which also beat it in predictive accuracy. The greedy algorithm is much quicker to improve diversity as more candidate recommendations are added; However, in this regime the Bayes Rule Reranker matches it well in diversity measures while beating it in predictive accuracy. Since the predictive losses are too great, the greedy algorithm is an inadequate replacement for the full flow based methods.

\ifdefined\THESIS
\subsubsection{Large Dataset}
\label{subsection:largedataset}
The large dataset we tested deserves a special discussion for two reasons. First, the MovieLens-10m dataset has 7 times as many users as it has items in its catalog, compared to only a 1.5 times ratio in the MovieLens-1m dataset. With 70 times as many recommendations as items, even the standard recommendation approach achieved 80\% coverage of the item catalog. This renders aggregate diversity maximization ineffective as a method for increasing the equitability of the recommendation distribution. Second, the large dataset is too large to solve in one batch by our flow methods. Therefore for these experiments, we split the user base into 4 batches, find the minimum discrepancy graph in each of these problems and combine them. Naturally, this leads to a higher discrepancy and lower precision than solving the entire problem in one go. Despite these factors, our results for the precision, Gini index and entropy measurements follow the trends found in the medium sized datasets. 
\else
\textbf{Large Dataset.}
Although we omit it here due to space constraints, the large dataset we tested deserves a special discussion. In particular, the MovieLens-10m dataset has 7 times as many users as it has items in its catalog, compared to only a 1.5 times ratio in the MovieLens-1m dataset. With 70 times as many recommendations as items, even the standard recommendation approach achieved 80\% coverage of the item catalog. This renders aggregate diversity maximization ineffective as a method for increasing the equitability of the recommendation distribution. However, this fact does not effect our discrepancy method or the rerankers and the results we found are in line with the results in Table \ref{table:MLMetrics}.
\fi
\subsection{Effect of Recommendation List Length}

With more computer usage shifting from devices with larger displays like desktops and laptops to mobile devices like phones and tablets, display constraints that govern the number of recommendations we can make to user have gotten tighter. Therefore, it has become increasingly important for diversification approaches to be effective even when there is space for only a few recommendations to be made. In the set of graphs below, we fix the underlying subgraph to be the graph generated by our \ifdefined\THESIS Matrix Factorization \else MF \fi recommender with 200 candidate recommendations for each user, and measure the performance of the different diversifiers with display constraints set to $c=5, 10$ and $20$.

\begin{figure}
\centering
\begin{subfigure}[b]{.3\linewidth}
\includegraphics[width=\linewidth]{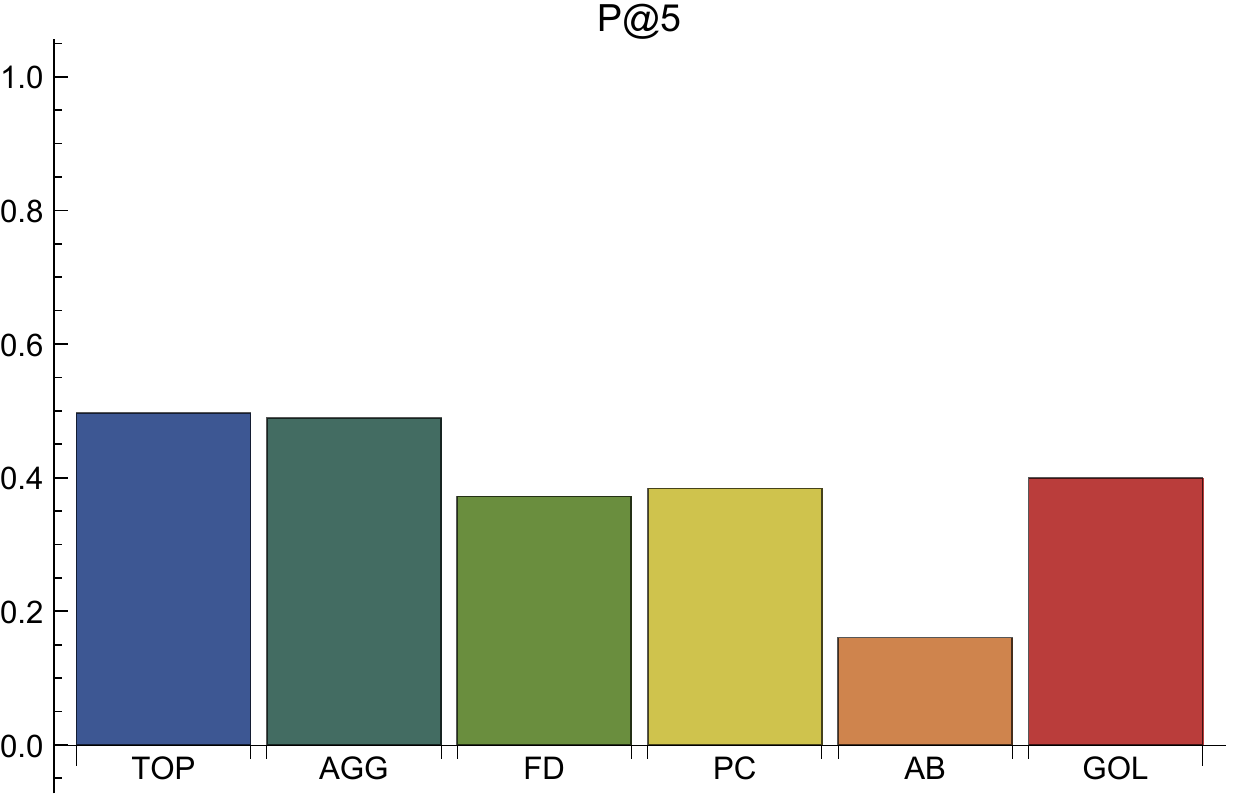}
\end{subfigure}
\begin{subfigure}[b]{.3\linewidth}
\includegraphics[width=\linewidth]{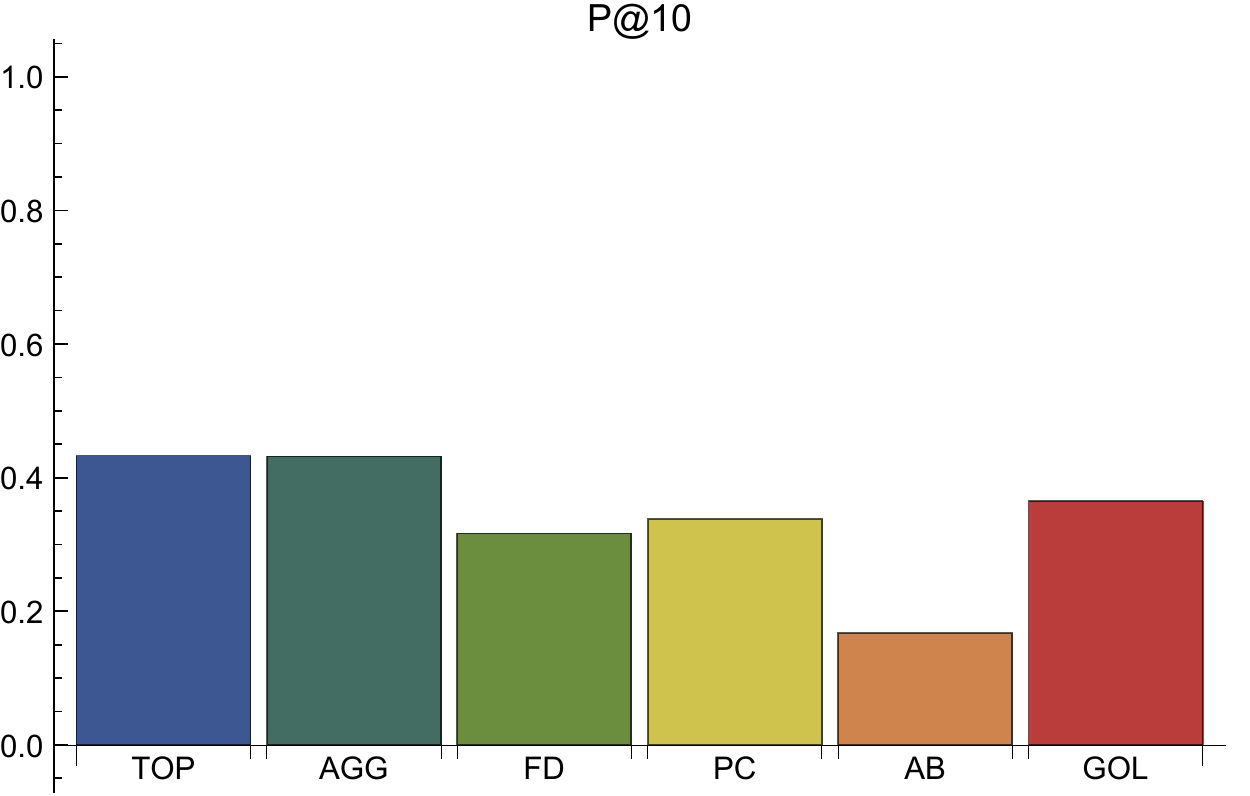}
\end{subfigure}
\begin{subfigure}[b]{.3\linewidth}
\includegraphics[width=\linewidth]{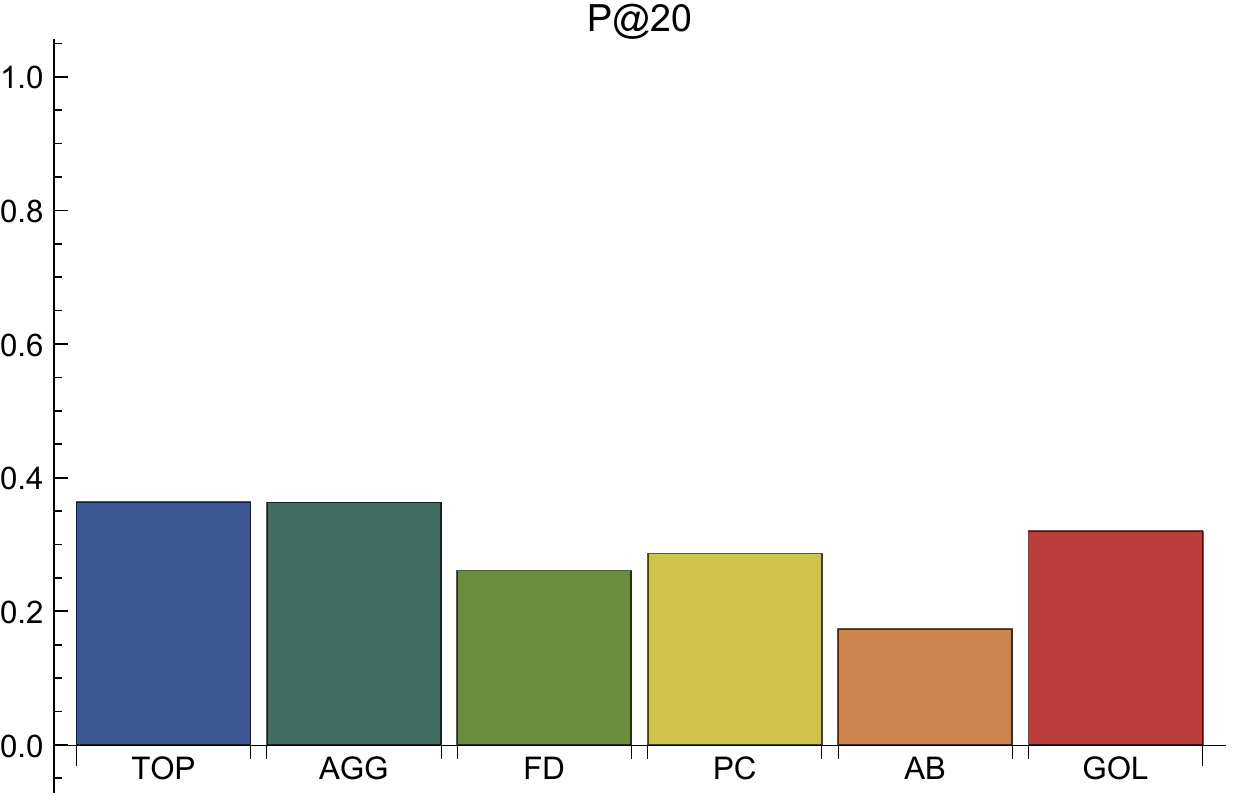}
\end{subfigure}
\vspace{.5cm}

\begin{subfigure}[b]{.3\linewidth}
\includegraphics[width=\linewidth]{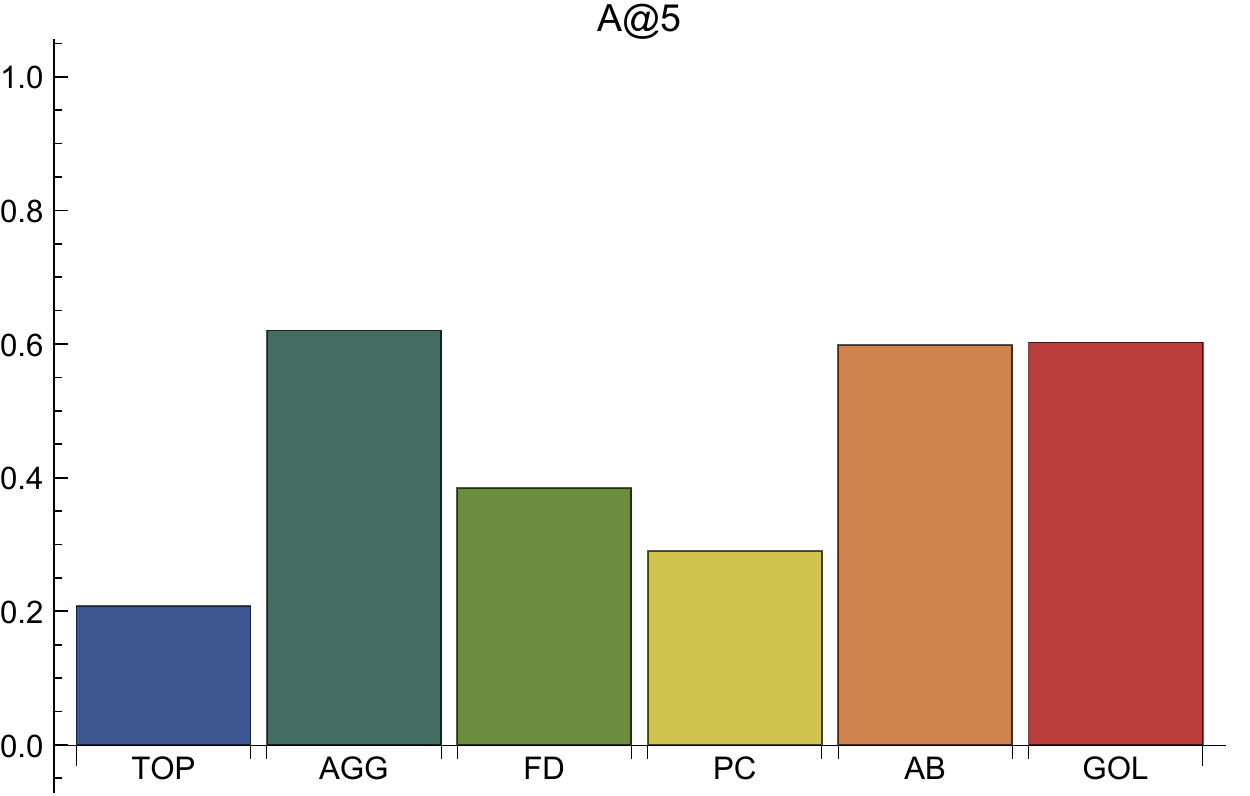}
\end{subfigure}
\begin{subfigure}[b]{.3\linewidth}
\includegraphics[width=\linewidth]{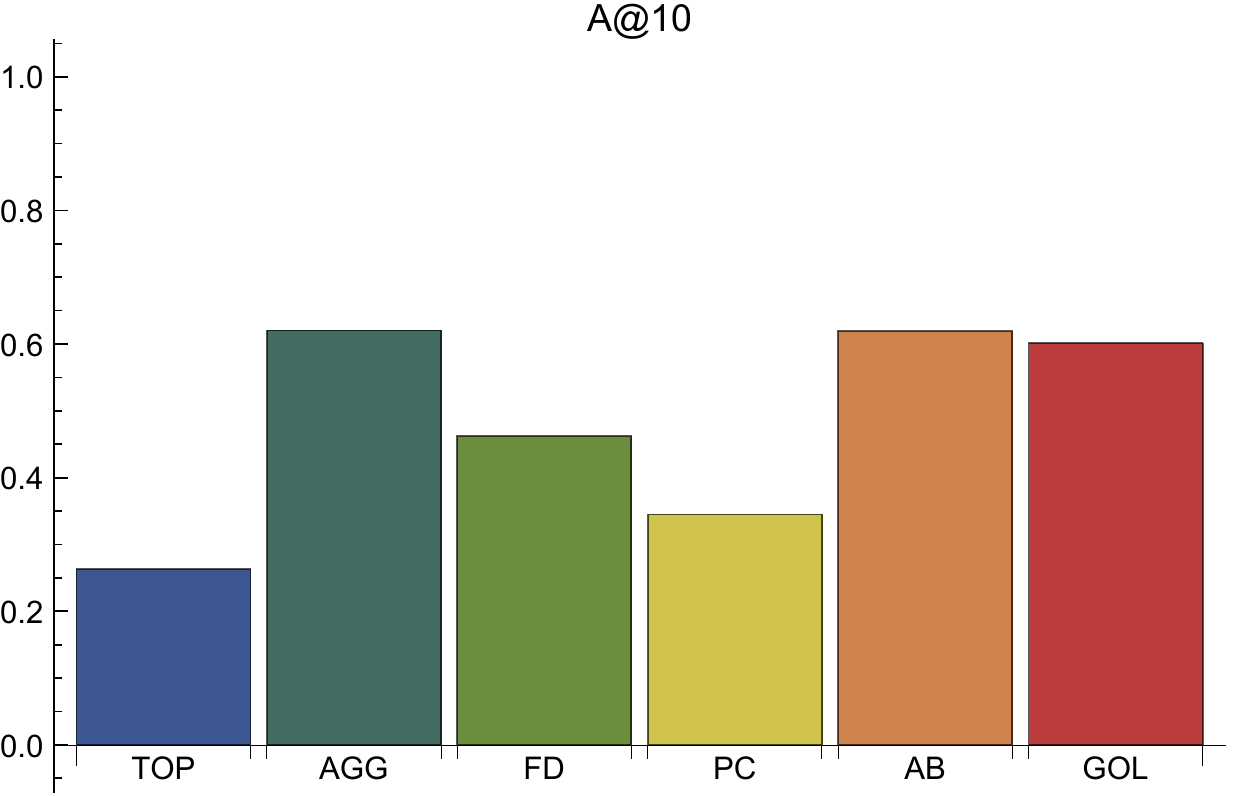}
\end{subfigure}
\begin{subfigure}[b]{.3\linewidth}
\includegraphics[width=\linewidth]{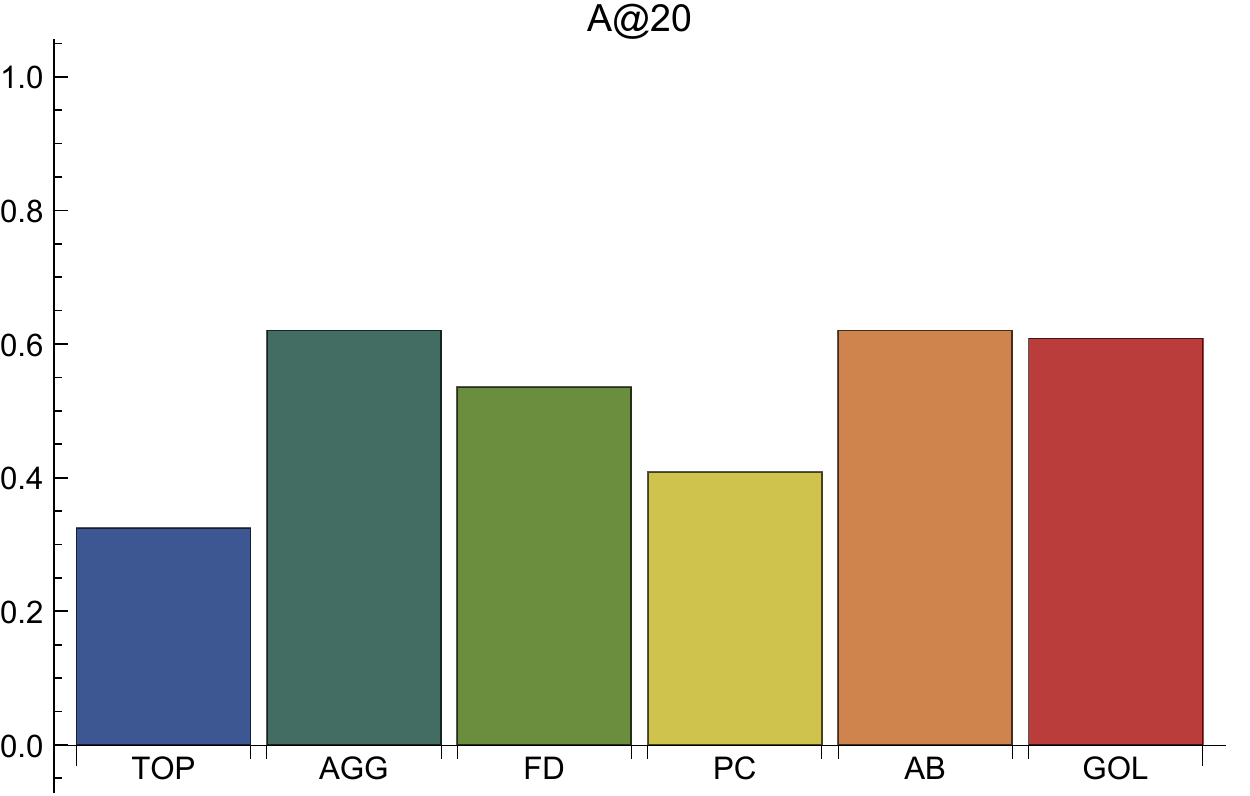}
\end{subfigure}
\vspace{.5cm}

\begin{subfigure}[b]{.3\linewidth}
\includegraphics[width=\linewidth]{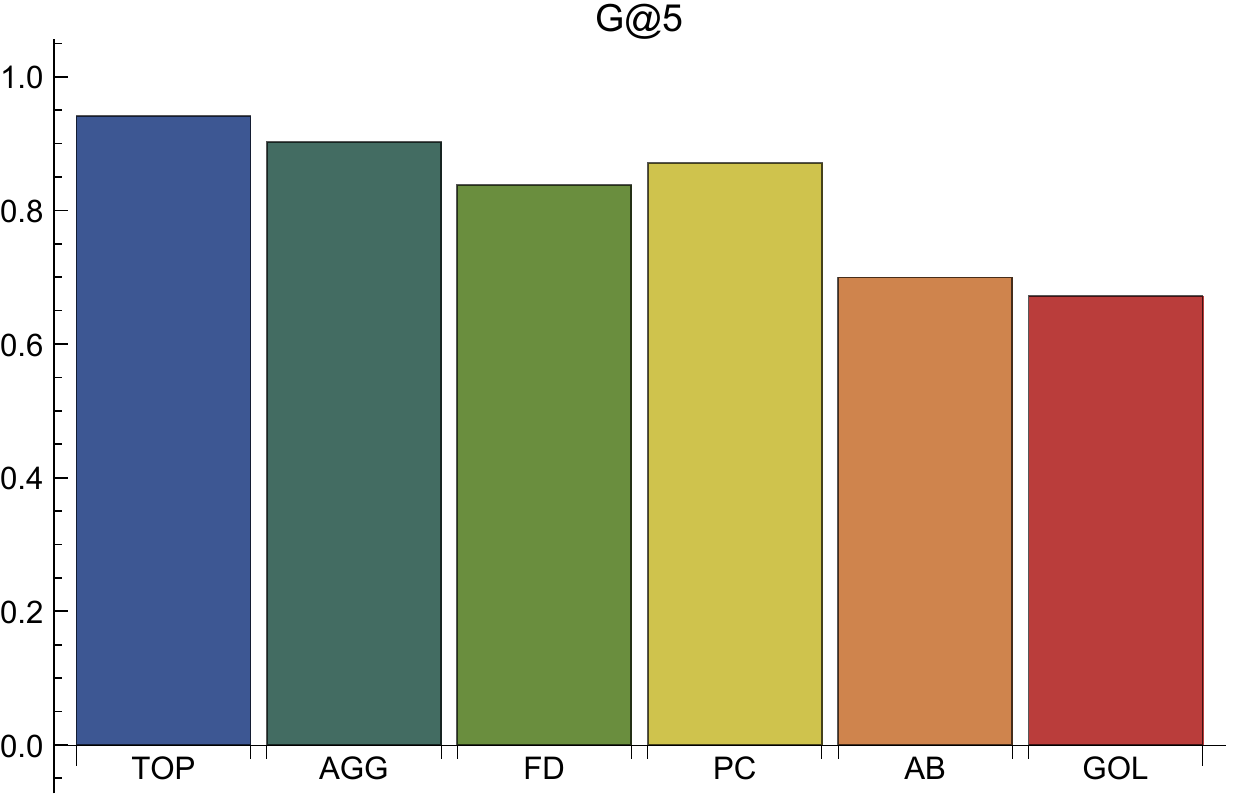}
\end{subfigure}
\begin{subfigure}[b]{.3\linewidth}
\includegraphics[width=\linewidth]{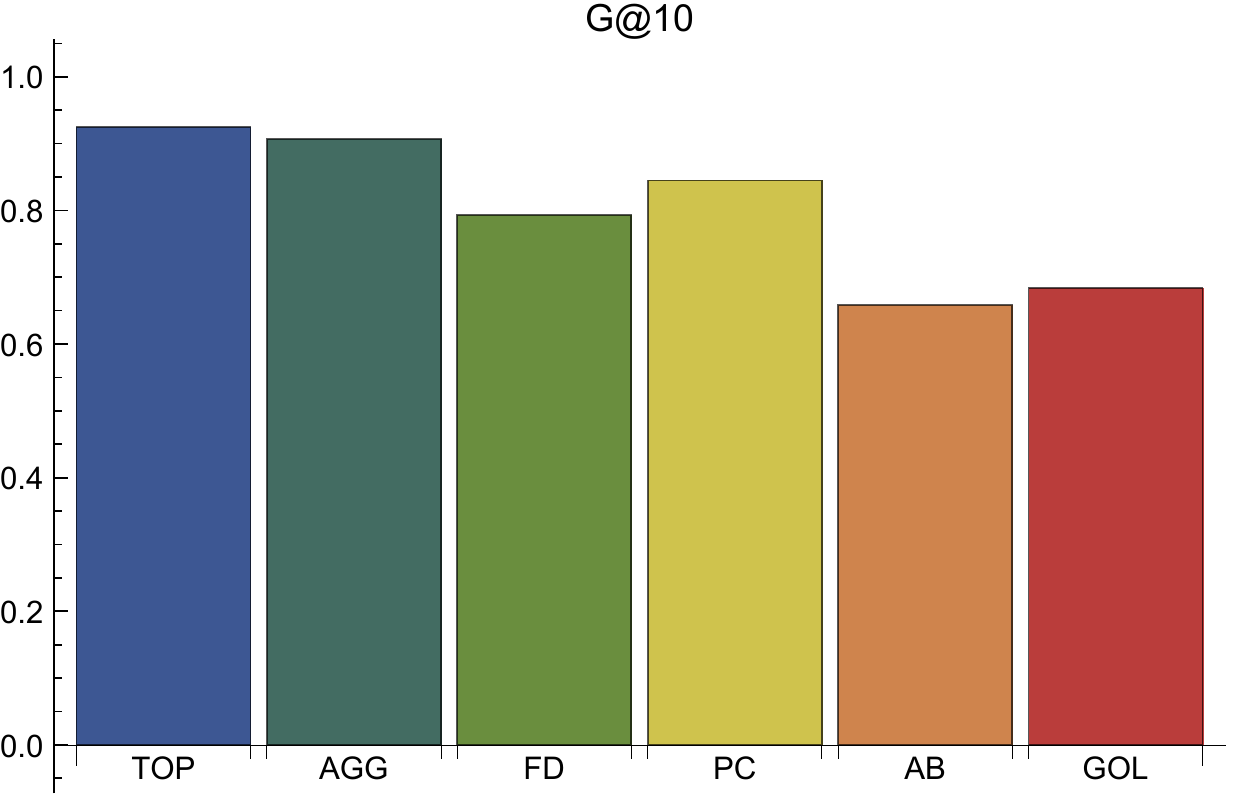}
\end{subfigure}
\begin{subfigure}[b]{.3\linewidth}
\includegraphics[width=\linewidth]{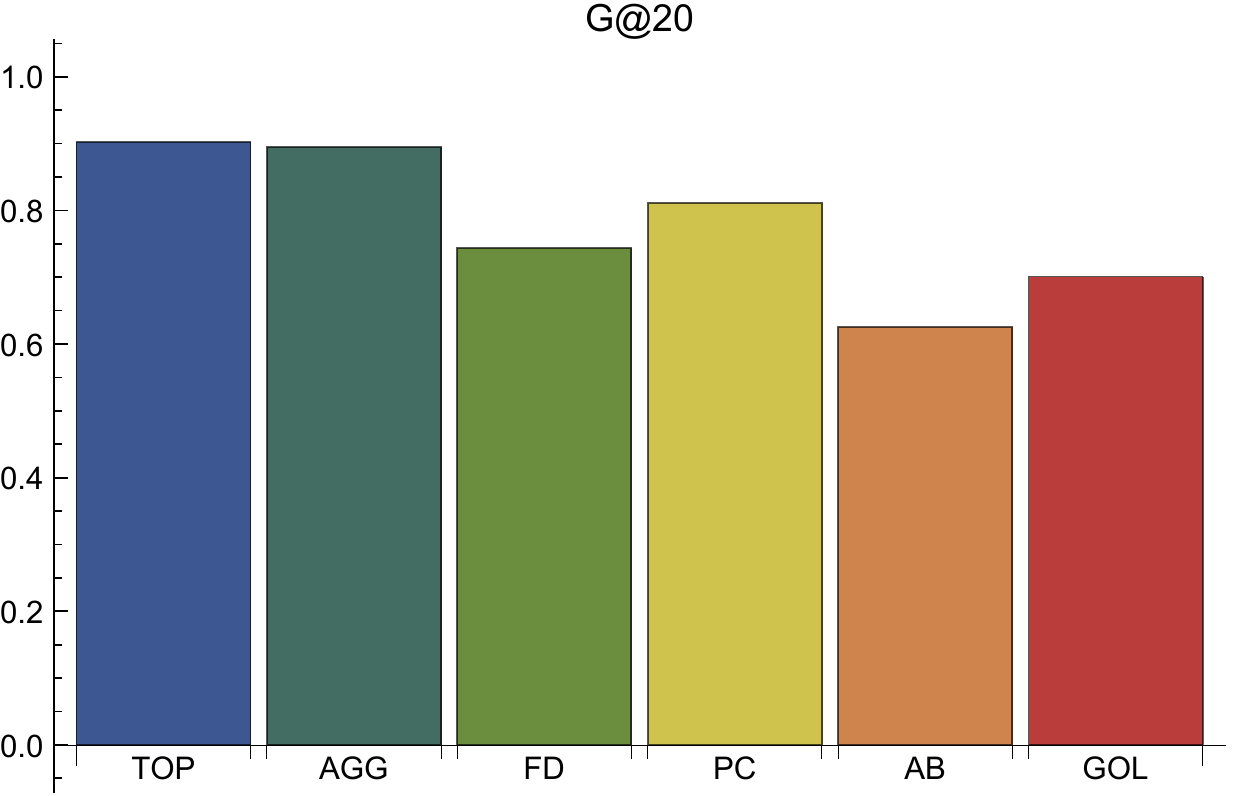}
\end{subfigure}
\vspace{.5cm}

\begin{subfigure}[b]{.3\linewidth}
\includegraphics[width=\linewidth]{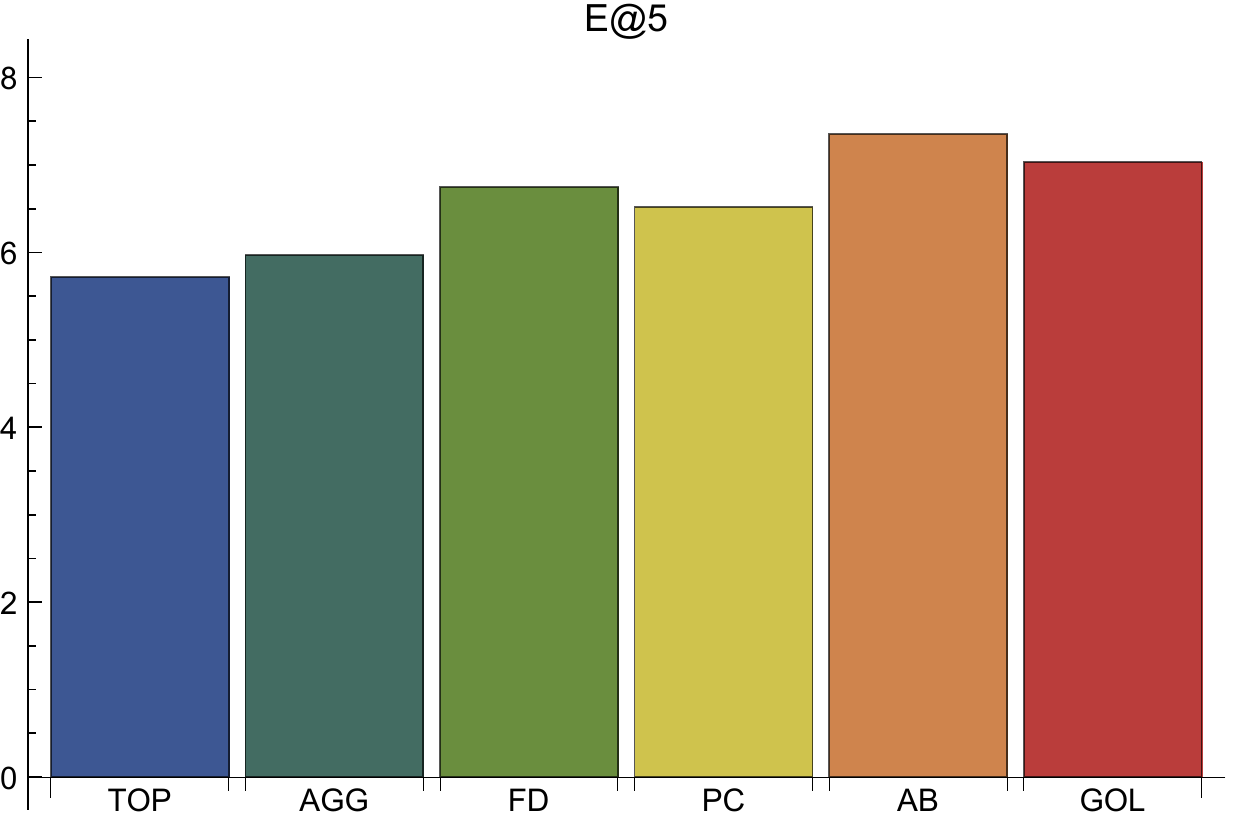}
\end{subfigure}
\begin{subfigure}[b]{.3\linewidth}
\includegraphics[width=\linewidth]{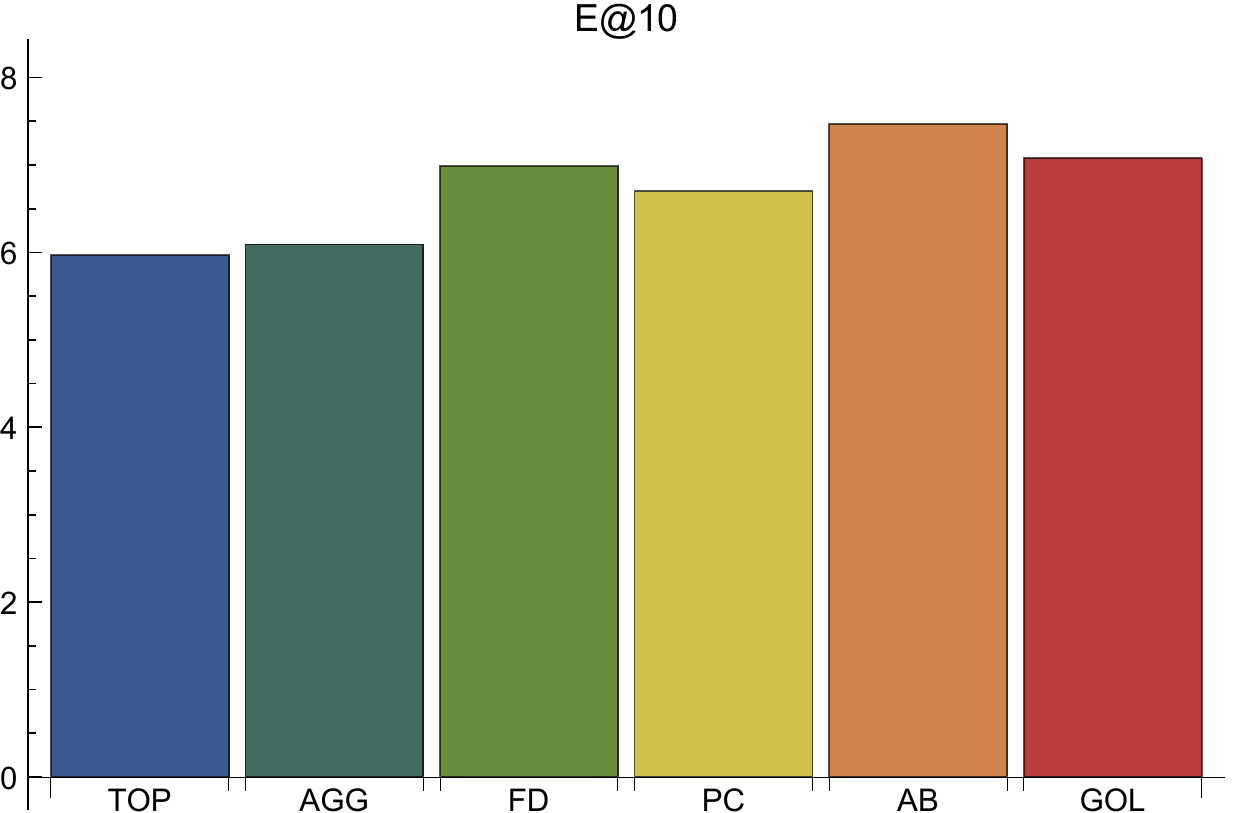}
\end{subfigure}
\begin{subfigure}[b]{.3\linewidth}
\includegraphics[width=\linewidth]{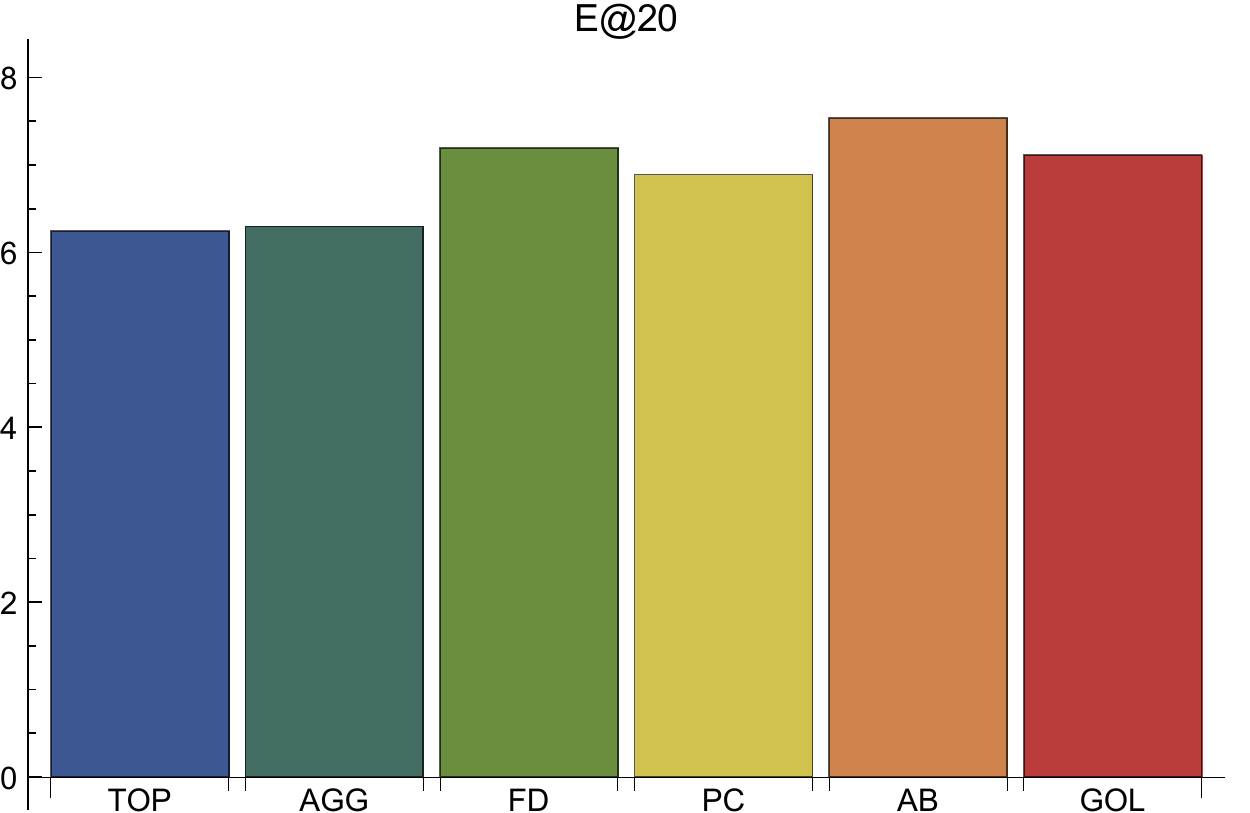}
\end{subfigure}
\caption{The effect of recommendation list length on distributional diversity measures in the MovieLens Matrix Factorization graph with 200 candidate recommendations.}
\label{fig:listlength}
\vspace{-.5cm}
\end{figure}

We note that our diversifier performs better as the display constraints get tighter. All approaches suffered precision loss as display constraints were tightened, which is to be expected. The lowest level of losses came from the Bayes Reranking approach, but this approach had such low baseline precision that all the other algorithms kept outperforming it even as recommendation lists got longer. Moreover, our two-pass method increased both its absolute and relative edge over the \FD and \PC diversifiers as $c$ was set lower.

Our two-pass method also performed better in increasing diversity metrics when the (initial) recommendation lists were shorter. In particular, there was no significant changes in the aggregate diversity achieved by our algorithms as $c$ was varied. This was also the case for Bayes Rule Reranking approach, but not for the novelty based reranking approaches FD and PC. These approaches would have needed lists of size $c=50$ to achieve comparable aggregate diversity values, at which point the precision of these diversifiers suffer considerably.

For the top-$5$ recommendation task, our method effectively outperformed all other diversifiers. \ifdefined\THESIS Our two-pass method loses out to the aggregate diversity and the baseline approaches in precision, which achieve low diversity values. It loses out to aggregate diversity maximization when considering the A@5 metric, but only by a small and statistically insignificant margin. We also achieve the highest G@5 value, while beating the runner-up Bayes Rule Reranker significantly in precision. We only achieve the second highest E@5 value, but once again beat our main competition in the Bayes Rule reanker in precision. \fi Considering all of these factors, we conclude that an optimization based framework works better in applications where display constraints are particularly tight. The reranking approaches make recommendations for each user independently, whereas our optimization framework makes all of these recommendations at once, while optimizing explicitly for a diversity metric. Therefore, our two-pass method is better able to coordinate a small number of recommendations to make large gains in diversity while also keeping precision high. 
\subsection{Trading Off Discrepancy and Precision}
\label{subsection:wgt}
In this section we explore the discrepancy/precision trade-offs made by our different models. Throughout the section, we consider the discrepancy from the uniform target distribution. This target indegree distribution sets $a_j=c\cdot l/r$ for each $v_j \in R$ for a fixed $c$ which represents the display constraints. The discrepancy from this target can be thought of as an extreme measure of diversity, since we are measuring distance from the most equitable distribution.

For the set of comparison graphs in Figure~\ref{fig:comp}, we increase the number of candidate recommendations for each user from 100 to 500 in increments of 100, and show how this affects the recommendation quality of our solution as well as the lowest discrepancy achievable. We plot the normalized discrepancy to the uniform target on the x-axis against precision in the y-axis. Discrepancy improves towards the left, and recommendation quality improves towards the top.



We consider 4 different approaches to reducing discrepancy. The first is our two-pass method, which optimizes for discrepancy first, then for average predicted relevance across the system. We also consider the weighted method with the settings $\mu = 1$, and $\mu=1/2$ and $\mu=0.01$. Recall that the weighted method optimizes the objective $\text{discrepancy}(H) - \mu\cdot\text{rel}(H)$, where $\text{rel}(H)$ denotes the average recommendation quality in the solution graph $H$ we produce. Therefore, the weighted method does not optimize for discrepancy. Instead, it find a solution where the cost of reducing discrepancy by $\mu$ units is the same as reducing aggregate predicted relevance by 1 unit.  When run on the same input graph, the predicted relevance of the two-pass method is always lower than the predicted relevance of the $\mu=1$ model, and the predicted relevance of the weighted method with $\mu$ is always lower than the predicted relevance of the weighted method with $\mu'>\mu$. The discrepancies produced by these algorithms on the same graph are also ordered in the same way.
 


\begin{figure}
\centering

\ifdefined\THESIS
\begin{subfigure}[b]{.45\linewidth}
\includegraphics[width=\linewidth]{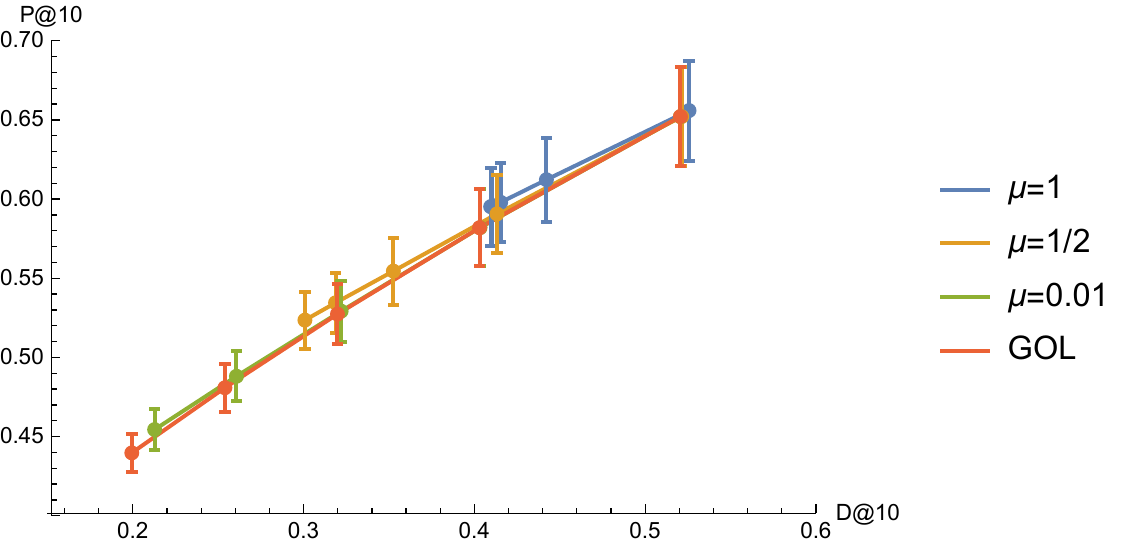}
\caption{Netflix-MF}\label{fig:mouse}
\end{subfigure}
\begin{subfigure}[b]{.45\linewidth}
\includegraphics[width=\linewidth]{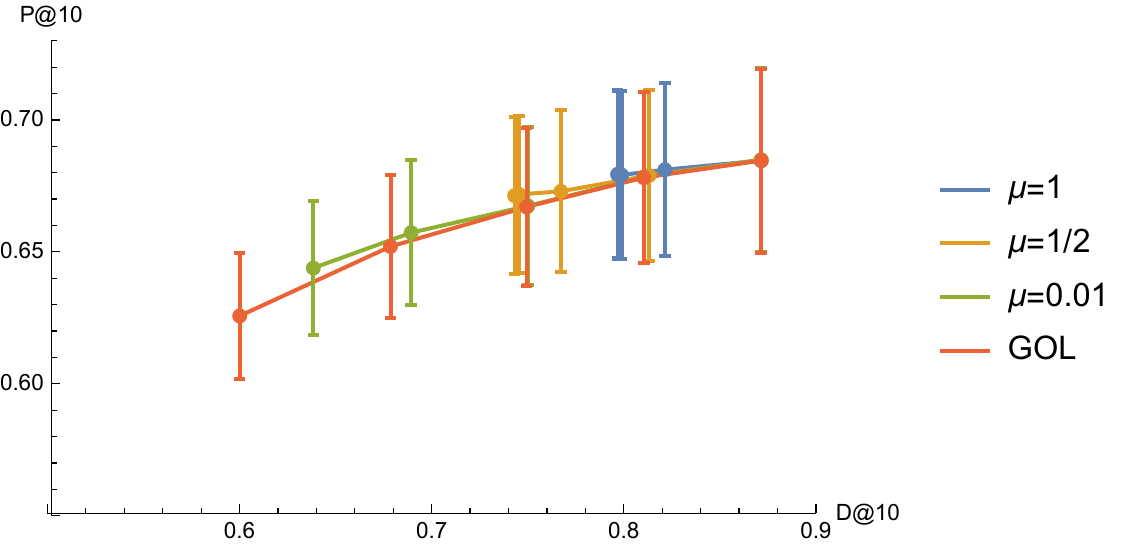}
\caption{Netflix-IB}\label{fig:mouse}
\end{subfigure}

\begin{subfigure}[b]{.45\linewidth}
\includegraphics[width=\linewidth]{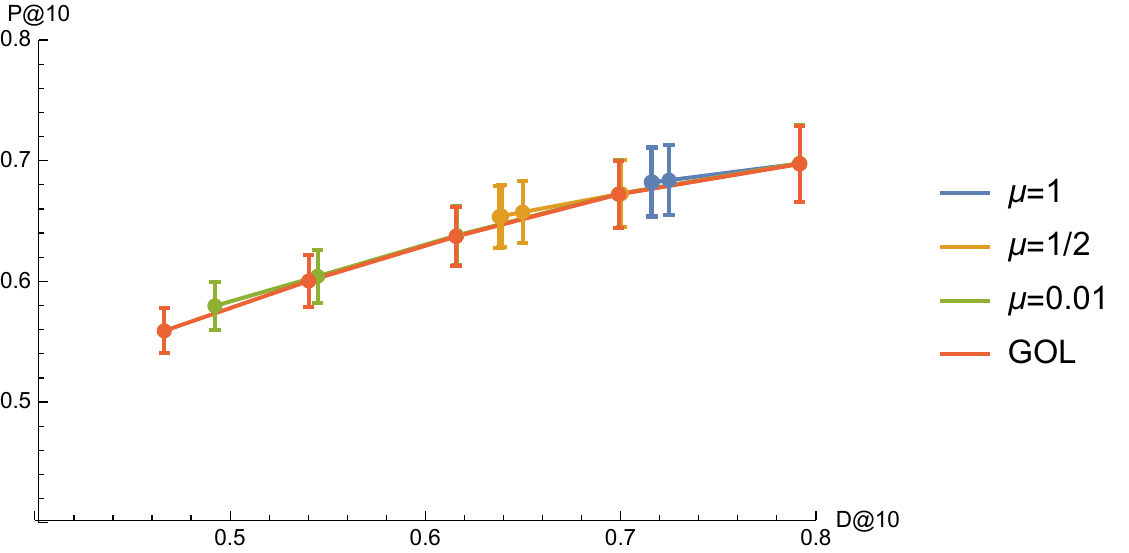}
\caption{Netflix-UB}
\end{subfigure}
\begin{subfigure}[b]{.45\linewidth}
\includegraphics[width=\linewidth]{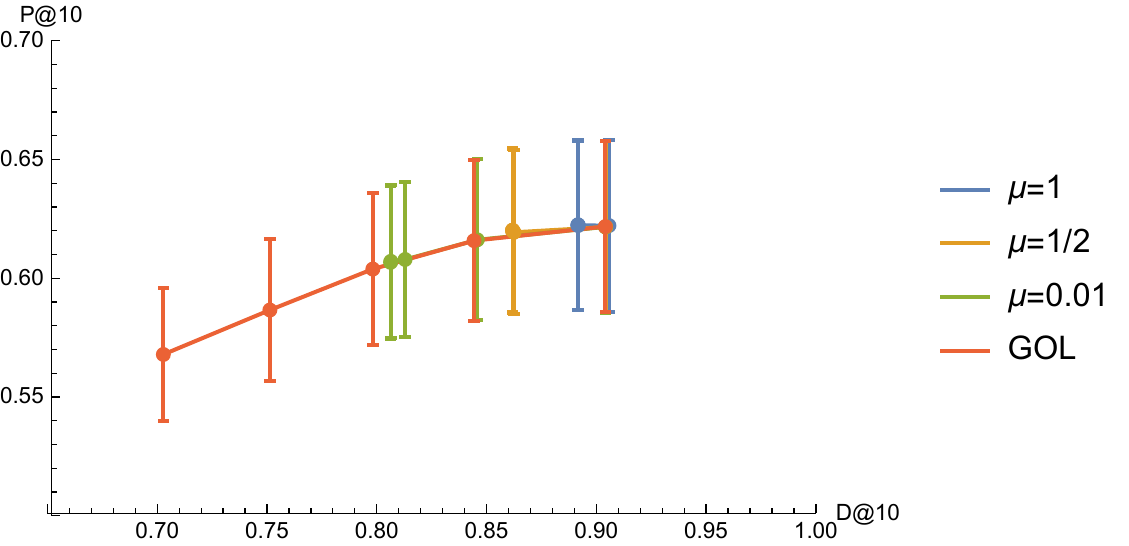}
\caption{Netflix-RW}
\end{subfigure}
\fi

\begin{subfigure}[b]{.45\linewidth}
\includegraphics[width=\linewidth]{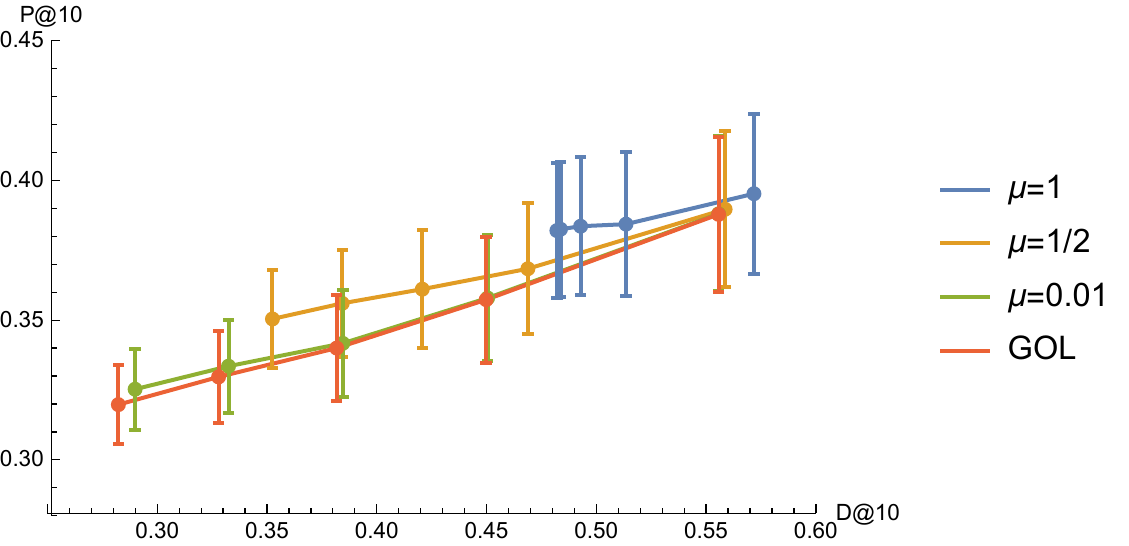}
\caption{MovieLens-MF}
\end{subfigure}
\begin{subfigure}[b]{.45\linewidth}
\includegraphics[width=\linewidth]{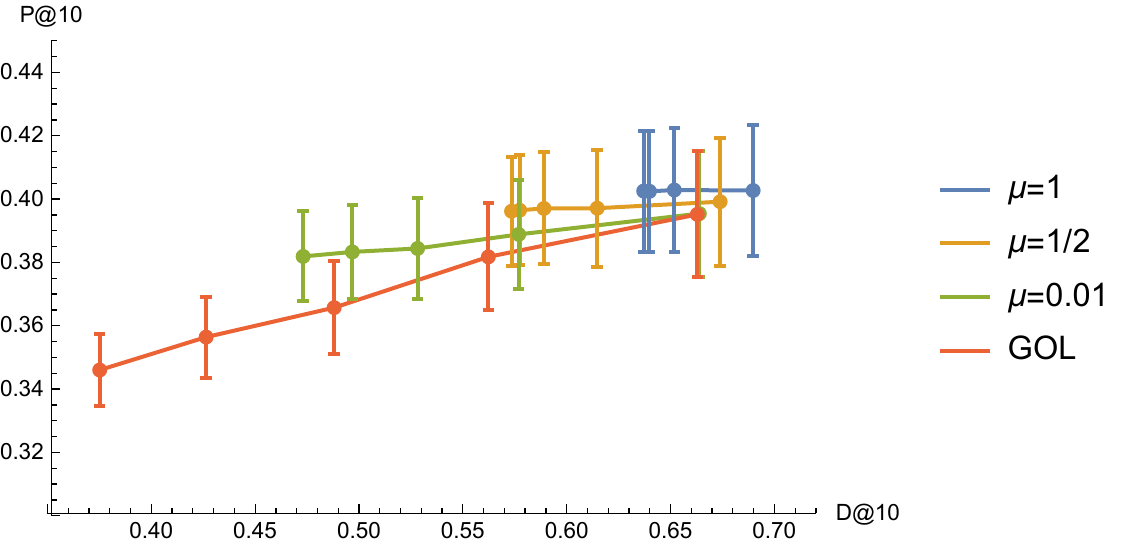}
\caption{MovieLens-IB}
\end{subfigure}

\begin{subfigure}[b]{.45\linewidth}
\includegraphics[width=\linewidth]{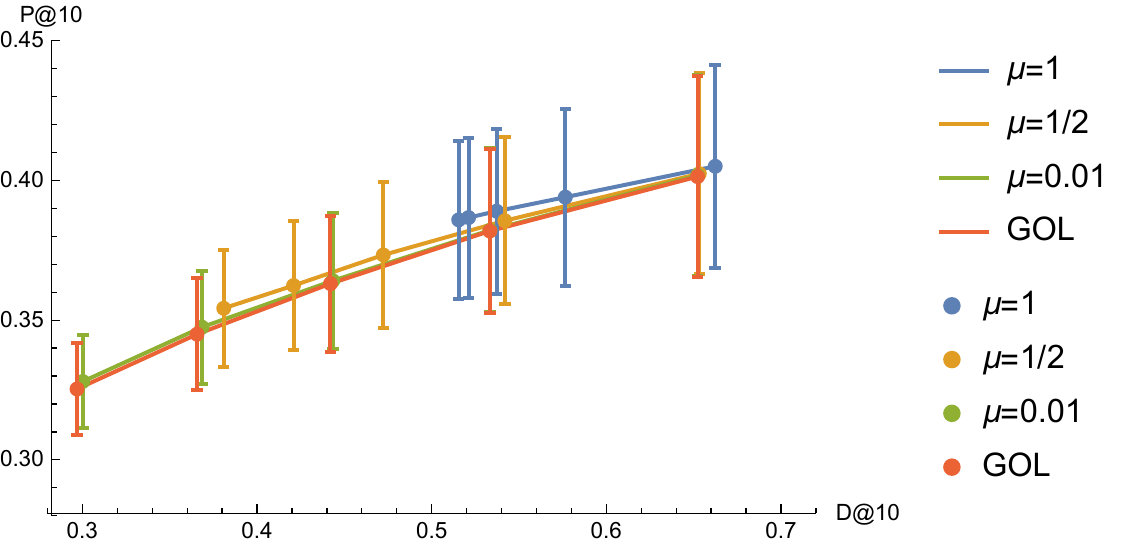}
\caption{MovieLens-UB}
\end{subfigure}
\begin{subfigure}[b]{.45\linewidth}
\includegraphics[width=\linewidth]{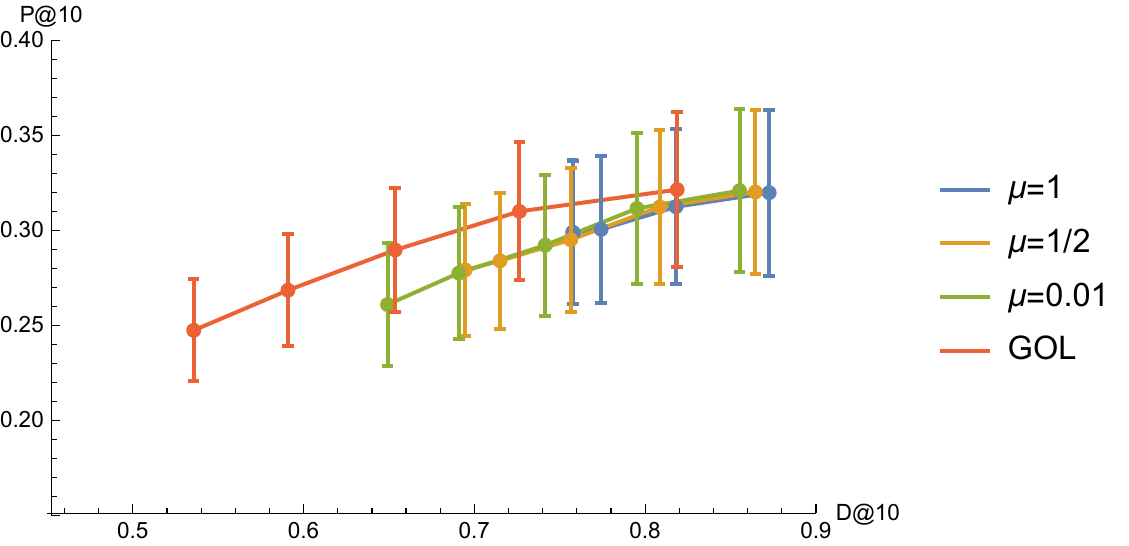}
\caption{MovieLens-RW}
\end{subfigure}

\caption{Recommendation quality vs normalized discrepancy from the uniform target in MovieLens and Netflix generated graphs. In each series, the number of edges in the input graph increases towards the left.}
\ifdefined\THESIS
\else
\vspace{-.5cm}
\fi
\label{fig:comp}
\end{figure}

From these graphs we can immediately notice certain features. First, all three algorithms produce highly clustered data with initial normalized discrepancy from the uniform  distribution always being over $0.8$. Second, the fall-off in discrepancy happens first quickly, then slowly as more edges are included. Therefore, significant gains are possible even while including a small number of candidate recommendations. Third, the choice of recommender matters. It is harder to improve the discrepancy on graphs generated by the Random Walk recommender than on any of the other other graphs we generated.

\ifdefined\THESIS
Fourth, with a high enough number of edges, discrepancy to the uniform target can be driven down towards 0. However, this may come at significant quality loss in recommendation quality depending on the underlying graph. In particular, after the addition of 500 candidate recommendations for each user, we were able to reduce discrepancy from the uniform distribution on every one of the graphs except for the Random Walk Recommender by nearly 50\% over the baseline. This drops the number of relevant recommendations surfaced by our method by 15\%-30\%. These precision losses are smaller than the losses produced the reranking diversifiers we tested, even when they were supplied with a shorter candidate recommendation lists which contain more relevant recommendations (see Table~\ref{table:MLmetrics}). Nonetheless, this level of loss in precision is not insignificant, and we suggest that users of our methods explore the full range of the trade-offs possible before putting the algorithms into commercial use.

\noindent\textbf{Two-pass versus Weighted Methods.}
\fi
One notable difference between the weighted methods and the two-pass method is that in the weighted method, the discrepancy improvements start slowing down as more and more candidate edges are included in the supergraph. As mentioned above, the bicriteria objectives improve when discrepancy gains can be made which offset the fall in predicted relevance. As we enlarge the candidate set of recommendations, we enable discrepancy increases with edges that are less and less able to make up for the corresponding relevance losses. Therefore, the solution graph stops changing though lower discrepancies are possible. Where this limiting point lies depends on the structure of the graph and the distribution of relevance values assigned by the underlying recommender and is not easy to predict. We find that the Matrix Factorization and User Based recommenders were more amenable to bicriteria optimization than the Item Based or Random Walk Based recommenders. We also find that for suitably low values of $\mu$, the weighted method can adequately approximate the output of the two-pass method, achieving essentially the same trade-off curves.
\subsection{Qualitative Parameter Tuning: Choice of Target Distribution}
\label{subsection:target}
It might not be realistic to set the same target indegree for every item since the popular items are probably popular for a good reason. In order to create alternative distribution, we will use a convex combination of two different distributions. The first distribution we use is $f$, the uniform target indegree distribution. We also generate a more skewed distribution $p$ by taking the indegree distribution of the candidate supergraph, and normalizing it so that it sums to the same value as the flat distribution. This distribution, while more diversified than the top $c$ distribution, should still be significantly unbalanced. Finally, we generate the distribution $d_\alpha = \alpha f + (1-\alpha)p$ to be used as our target. Since both $f$ and $p$ satisfy our feasibility criterion $\sum a_i = \sum c_j$, so does $d_\alpha$ for all $\alpha\in [0,1]$. Furthermore, for $0\leq\alpha \leq 1$, this function smoothly interpolates between $f$ and $p$ and produces a non-negative vector of target indegree values.

\ifdefined\THESIS
The particular setting of $\alpha$ is not special, and provided there are enough edges in the underlying graph, the two-pass method can achieve significant improvements regardless of the distribution used. To demonstrate this, fix the candidate supergraph to be the graph generated by applying the Item Based recommender to the MovieLens-1m dataset and thresholding candidate recommendations to the top 300 recommendations.
Table~\ref{fig:ConvexMatrix} below shows that as we vary $\alpha$, we get significantly different target distributions in the top-10 recommendation problem.
\begin{table}
\centering
\begin{tabular}{c|c|c|c|c|c|c|}
\cline{2-7}
                                & $d_{0}$ & $d_{0.2}$ & $d_{0.4}$ & $d_{0.6}$ & $d_{0.8}$ & $d_{1}$ \\ \hline
\multicolumn{1}{|l|}{$d_{0}$}   & 0    & 0.17 & 0.38 & 0.59 & 0.81 & 0.86                           \\ \hline
\multicolumn{1}{|l|}{$d_{0.2}$} & 0.17 & 0    & 0.21 & 0.42 & 0.64 & 0.85                           \\ \hline
\multicolumn{1}{|l|}{$d_{0.4}$} & 0.38 & 0.21 & 0    & 0.22 & 0.43 & 0.64                           \\ \hline
\multicolumn{1}{|l|}{$d_{0.6}$} & 0.59 & 0.42 & 0.22 & 0    & 0.22 & 0.42                           \\ \hline
\multicolumn{1}{|l|}{$d_{0.8}$} & 0.81 & 0.64 & 0.43 & 0.22 & 0    & 0.21                           \\ \hline
\multicolumn{1}{|l|}{$d_{1}$}   & 0.86 & 0.85 & 0.64 & 0.42 & 0.21 & 0                           \\ \hline
\end{tabular}
\caption{Pairwise discrepancy between different target distributions in the top-10 recommendation task in the MovieLens-1mItem-Based recommender and thresholded to 300 candidate recommendations}
\label{fig:ConvexMatrix}
\end{table}

Table~\ref{ConvexImprovement} shows that our diversifier achieves significant normalized discrepancy reduction in each case with the same modest rating loss.

\begin{table}
\centering
\begin{tabular}{c|c|c|c|c|c|c|}
\cline{2-7}
                                          & $d_{0}$ & $d_{0.2}$ & $d_{0.4}$ & $d_{0.6}$ & $d_{0.8}$ & $d_{1}$ \\ \hline
\multicolumn{1}{|l|}{$\Delta\text{P@10}$}    & 0.097 & 0.100 & 0.105 & 0.109 & 0.105 & 0.103                     \\ \hline
\multicolumn{1}{|l|}{$\Delta\text{D@10}$}   & 0.300 & 0.351 & 0.413 & 0.464 & 0.475 & 0.480                     \\ \hline
\end{tabular}
\caption{Rating loss vs reduction in discrepancy given different target distributions when compared with the top 10 recommendations, in the MovieLens-1m Item-Based recommender and thresholded to 200 candidate recommendations}
\label{ConvexImprovement}
\end{table}
\fi

\begin{figure}
\centering
\includegraphics[width=.99\columnwidth]{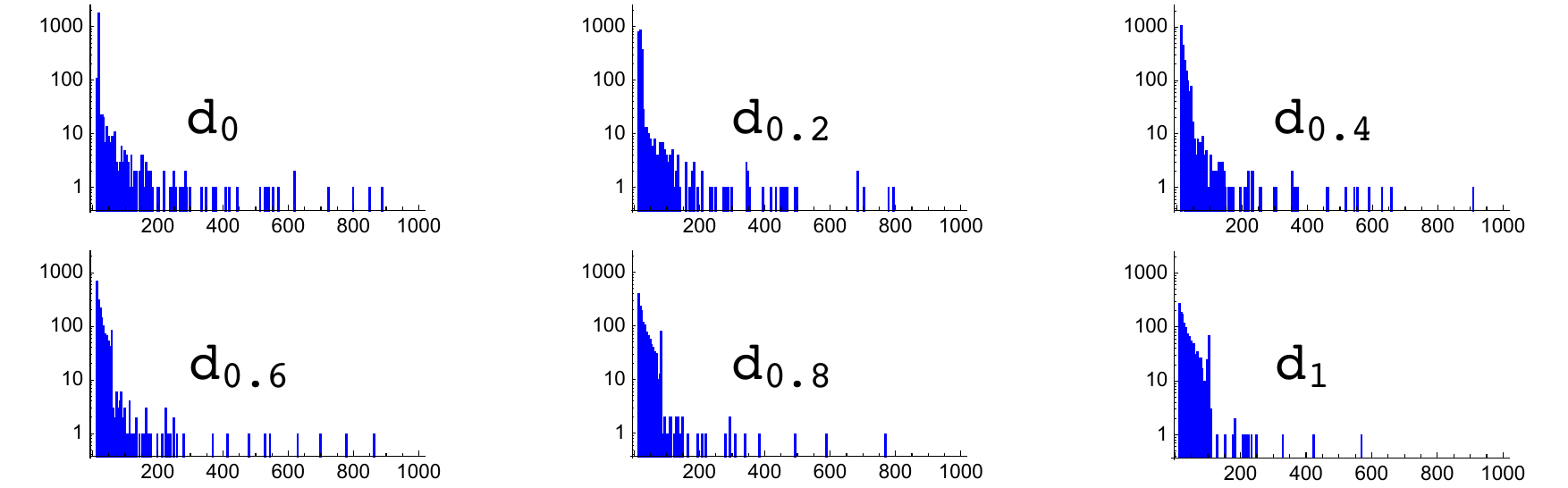}
\caption{Degree distribution in a log-scale of the solution subgraphs as the $\alpha$ of the target distribution is varied in a top-10 recommendation task. \ifdefined\THESIS The underlying supergraph is the MovieLens graph generated by Item Based recommender and thresholded to the top 200 recommendations. \fi Note the presence of large outliers when the target distribution is close to uniform.}
\label{fig:ConvexDegreeCDF}
\end{figure}

\ifdefined\THESIS
We have mentioned earlier that it is not always desirable to use the uniform target. In our modeling section, we gave a semantically motivated reason for this: a recommendation engine that recommends every item the same number of times would do a poor job of endorsing any item, even the ones that obviously deserve such endorsement. Here, we give an empirically motivated reason as well. When targeting the uniform distribution ($d_0$), discrepancy can do a poor job of producing a pleasing in-degree distribution at the items. We reduce the degrees of more vertices, more drastically this way, but this can mean that we produce a small group of vertices with really high indegree as well. In the Figure ~\ref{fig:ConvexDegreeCDF}, we show the in-degree distributions of the items in the solutions produced by our two-pass method on the same underlying graph, but with varying target distributions. It demonstrates that the degree distribution is much smoother in the global sense when a proportional target is used, and that the number of outliers with very large degree are reduced.
\else
An empirically motivated reason for using these distributions is provided in Figure ~\ref{fig:ConvexDegreeCDF}, where we show the degree distributions produced by our two-pass method on the same underlying graph but with different targets. When targeting the uniform distribution ($d_0$), discrepancy can do a poor job of producing a pleasing in-degree distribution at the items. We reduce the degrees of more vertices, more drastically this way, but this can mean that we produce a small group of vertices with really high indegree as well. Using a smoothed target distribution cuts down on the number of such outliers, even though the resulting solution has higher discrepancy from the uniform target.
\fi
\ifdefined\THESIS

\subsection{Qualitative Parameter Tuning: Convex Cost Functions}
As noted in the Algorithms section, we can change the way we charge for degree overruns
and change the types of distributions that the flow model prefers. This can be used
to remedy a particular behavior of the $\ell_1$ norm, which is used in the discrepancy definition, to prefer sparse solutions.
The flipside is that the majority of the contributions to discrepancy come from relatively few vertices. Since the amount a vertex can contribute to discrepancy by undershooting its target indegree is bounded, these vertices must in fact be very high degree vertices. Therefore, while straightforward $\ell_1$ norm minimization still has benefits in spreading degree more equitably among lower degree vertices, it does not necessarily cut down on the long tail. While it is possible to use another $\ell_p$ norm in the objective by approximating the costs by piecewise linear cost functions \cite{vegh2012strongly}, common network flow solvers neither provide tools to convert arbitrary convex costs to piecewise linear costs, nor do they implement algorithms that can efficiently solve convex cost versions of flow problems.

We can overcome this problem by more strictly punishing degree overruns. To demonstrate this, we fix our attention to the supergraph generated on the Netflix data using User Based Filtering, thresholded to the top 300 recommendations. We then diversified the graph against the target indegree distributions $d_{0},d_{1/3},d_{2/3},d_1$, as defined in the previous section. In one set of runs, we optimized our flow model directly for discrepancy. In another set of runs, labelled ``2-slope" in the chart below, we added a second sink, which does not charge for the first $a_v$ edges, charges a cost for the next $20$ edges, and charges double the cost for any edges beyond that.  The name is due to the fact that the cost of a network edge increases first with slope 1, then with slope 2, as opposed to the usual setting where it increases with slope 1 throughout. On the x-axis, we put the number of times an item is recommended. On the y-axis, we show the fraction of recommendations in the system which were made to items with fewer than a given number of recommendations.

\begin{figure}
\centering
\includegraphics[width=.7\textwidth]{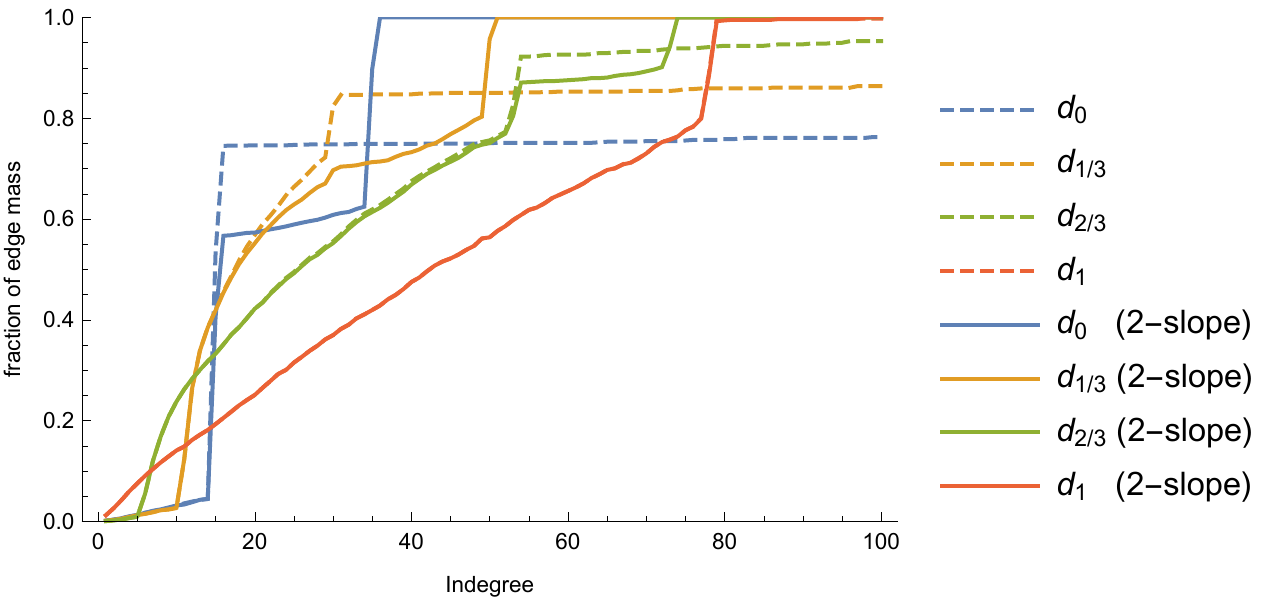}
\caption{Cumulative degree distribution of the solution subgraph with different $\alpha$ and cost functions for a candidate supergraph on the Netflix data.}
\label{fig:AggressiveLossDegreeCDF}
\end{figure}

As predicted, the 2-slope runs are much better at cutting down on the long tail. Solutions optimized for regular discrepancy tend to satisfy the indegree requirement of more vertices exactly. So when $\alpha$ is small, and the target distribution close to uniform, there tends to a large bump in the degree distribution near the average indegree of the target distribution. When the extra sink is added, it causes a secondary bump. The location of this bump can be seen to be 20 units higher than the first bump, showing the reluctance of the optimizer to go above the threshold. Secondary bumps like this cause more of the long tail to be subsumed under given thresholds. However, it should be noted that this effect is not true across all target distributions. For example, in Figure~\ref{fig:AggressiveLossDegreeCDF}, we can see that there is no difference between the 2-slope and regular runs against the $d_1$ target distribution. The reason for this is that when $\alpha$ is high, the target indegrees can all be met exactly, leading to no difference between the solutions produced by the two different schemes.
\fi
\subsection{Resource Use}
Since our methods are based on minimum cost flow, the problems that result from our reduction can be solved efficiently. The two graphs below show the cost of optimizing for uniform discrepancy with the two-pass method and the weighted method in the MovieLens-1m and Netflix graphs. We increase the number of candidate recommendations from 100 to 500, and report the average runtime across the different recommenders. The labels for this plot are identical to the labels we have been using so far, with the exception of WGT, which denotes a run of the weighted method. We did not find significant runtime differences between different settings of $\mu$ for this model, and present the results for a representative run with the setting $\mu=1$. While our methods do not run as efficiently as reranking methods, they provide much better diversification, and even instances with tens of millions of candidate recommendations can be run on a desktop.

\begin{figure}[h]
\centering

\ifdefined\THESIS
\includegraphics[width=.7\textwidth]{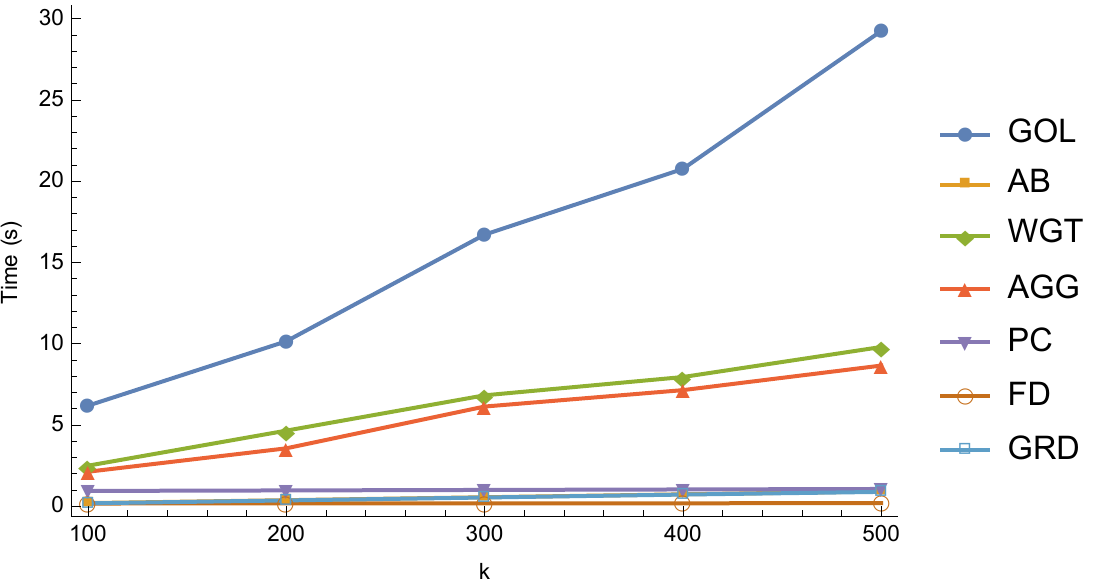}
\else
\includegraphics[width=.99\columnwidth]{chapters/chapter2/Images/MovieLensResourceComparison.pdf}
\fi

\label{fig:MLResourceUse}
\caption{Time to optimize the top-10 recommendation task in MovieLens-1m based graphs in seconds ($|L|$=5800,$|R|$=3600)}
\end{figure}

\ifdefined\THESIS
\begin{figure}[h]
\centering
\includegraphics[width=.7\textwidth]{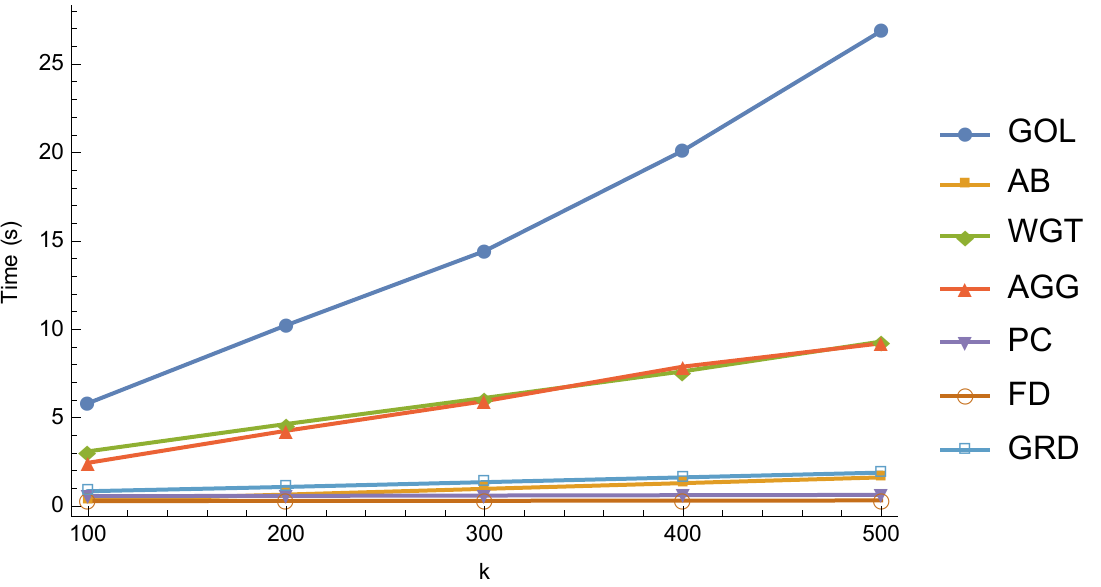}
\label{fig:NFResourceUse}
\caption{Time to optimize the top-10 recommendation task in Netflix based graphs in seconds ($|L|$=8000,$|R|$=5000)}
\end{figure}

\begin{figure}[h]
\centering
\includegraphics[width=.7\textwidth]{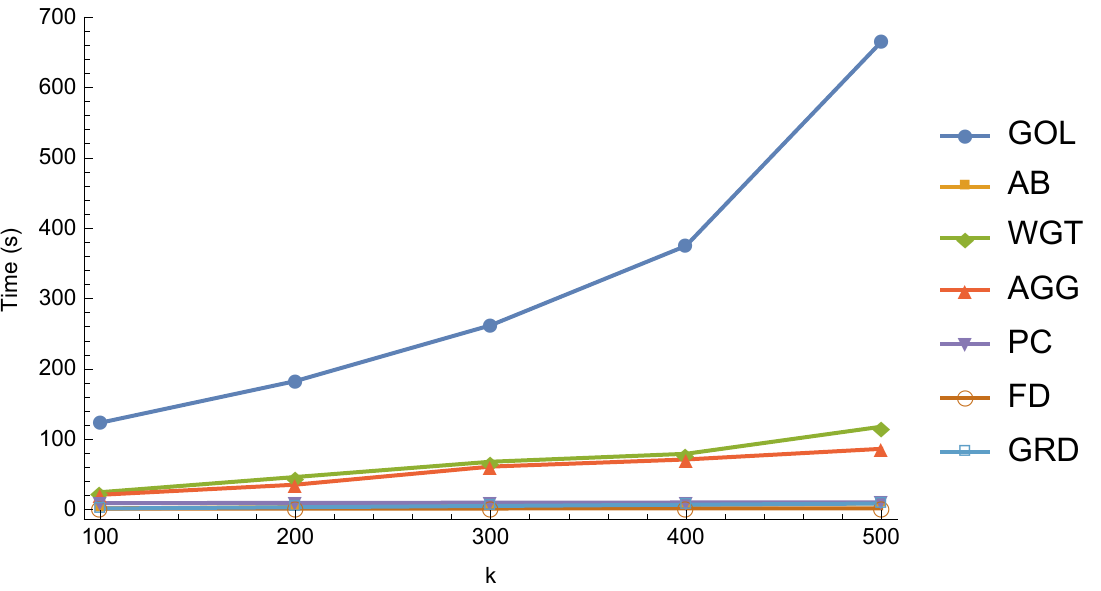}
\label{fig:MLBigResourceUse}
\caption{Time to optimize the top-10 recommendation task in Netflix based graphs in seconds ($|L|$=67000,$|R|$=9000)}
\end{figure}
\fi

\ifdefined\THESIS
We also note that the two-pass method takes noticeably longer than twice runtime the cost of the weighted method. It may be possible to improve this by a using a better implementation of the two-pass method, whereby the feasible solution found in the first pass is fed as a seed solution to the second pass. While we do not test this approach, we recommend its use to anyone who would like to use our framework in a commercial setting. Even if this improvement were to be made, the second pass of the two-pass method optimization is still a more challenging optimization problem than the first pass since it includes a stricter set of constraints. Moreover, the difficulty of the second pass is determined by how close the discrepancy target we set is to the lowest discrepancy achievable, which can lead to a significant increase in resource use. This problem can be dealt with in practice either by reducing the size of the discrepancy problems as mentioned in Section~\ref{subsection:largedataset}, or by using the weighted method instead of the two-pass method as described in Section~\ref{subsection:comparison}.
\fi

\section{Conclusions}
We have proposed a new way of measuring how equitably a recommender system distributes its recommendations called discrepancy, and showed that it can be optimized for in polynomial time using network flow techniques. We validated the effectiveness and the efficiency of our method by conducting extensive tests on MovieLens and Netflix datasets, and showed it to improve diversity across a variety of measures. Our work demonstrates that distributional diversity measures like discrepancy can be efficiently optimized to allow information designers to have more control over their recommender systems.  

\bibliographystyle{plain}  


\end{document}  